\DeclareMathOperator*{\argmax}{arg\,max}
\DeclareMathOperator*{\argmin}{arg\,min}
\newcommand{\E}{\operatorname{\mathbb E}}
\newtheorem{lemma}{Lemma}
\newtheorem{remark}{Remark}
\newtheorem{corollary}{Corollary}
\newtheorem{theorem}{Theorem}
\newtheorem{definition}{Definition}
\newtheorem{assumption}{Assumption}
\newtheorem{property}{Property}
\title{Learning Discrete-Time Major-Minor Mean Field Games}
\author {
    Kai Cui\equalcontrib\textsuperscript{\rm 1},
    Gökçe Dayanıklı\equalcontrib\textsuperscript{\rm 2},
    Mathieu Laurière\textsuperscript{\rm 3},\\
    Matthieu Geist\textsuperscript{\rm 4},
    Olivier Pietquin\textsuperscript{\rm 5},
    Heinz Koeppl\textsuperscript{\rm 1}
}
\begin{document}

\maketitle

\begin{abstract}
Recent techniques based on Mean Field Games (MFGs) allow the scalable analysis of multi-player games with many similar, rational agents. However, standard MFGs remain limited to homogeneous players that weakly influence each other, and cannot model major players that strongly influence other players, severely limiting the class of problems that can be handled. We propose a novel discrete time version of major-minor MFGs (M3FGs), along with a learning algorithm based on fictitious play and partitioning the probability simplex. Importantly, M3FGs generalize MFGs with common noise and can handle not only random exogeneous environment states but also major players. A key challenge is that the mean field is stochastic and not deterministic as in standard MFGs. Our theoretical investigation verifies both the M3FG model and its algorithmic solution, showing firstly the well-posedness of the M3FG model starting from a finite game of interest, and secondly convergence and approximation guarantees of the fictitious play algorithm. Then, we empirically verify the obtained theoretical results, ablating some of the theoretical assumptions made, and show successful equilibrium learning in three example problems. Overall, we establish a learning framework for a novel and broad class of tractable games.
\looseness=-1
\end{abstract}

\section{Introduction}
While reinforcement learning (RL) has achieved tremendous recent success \citep{mnih2015human, sutton2018reinforcement}, multi-agent RL (MARL) as its game-theoretic counterpart remains difficult due to its many challenges \citep{zhang2021multi}. In particular, the scalability challenge is hard to overcome due to the notorious complexity of non-cooperative stochastic games \citep{daskalakis2009complexity, yang2020overview}. Here, the recent introduction of mean field games (MFGs, \citep{lasry2007mean, huang2006large, saldi2018markov}) has contributed a mathematically rigorous and tractable approach to handling large-scale games, finding application in a variety of domains such as finance \citep{carmona2020applications} and engineering \citep{djehiche2017mean}. The general idea is to summarize many similar agents (players) and their interaction through their state distribution -- the mean field (MF). Owing to the amenable complexity of MFGs, many recent efforts have formulated equilibrium learning algorithms for MFGs \citep{lauriere2022learning}, including approaches based on regularization \citep{cui2021approximately, lauriere2022scalable, guo2022entropy}, optimization \citep{guo2022mf, guo2023mesob}, fictitious play \citep{perrin2020fictitious, geist2022concave} and online mirror descent \citep{perolat2022scaling, lauriere2022scalable, yardim2023policy}. For less-familiar readers, we refer to the survey of \citet{lauriere2022learning}.
\looseness=-1

So far however, most MFG learning frameworks remain unable to handle common noise \citep{carmona2016mean}, or more generally major players. Contrary to minor players, a major player directly affects all minor players and is affected by the MF of minor players, whereas common noise also affects all minor players, but is exogeneous and can be understood as a static major player without actions \citep{huang2016dynamic}. Notably, \citet{perrin2020fictitious} formulate an algorithm handling common noise using a continuous learning Lyapunov argument \citep{harris1998rate, hofbauer2002global}, assuming however that the common noise is known, while \citet{cui2023multi} consider a cooperative setting. Common noise and major players remain important in practice, as a system seldom consists only of many similar minor players. For example, strategic players on the market do not exist in a vacuum but must contend for instance with idiosyncratic shocks \citep{carmona2020applications} or government regulators \citep{aurell2022optimal}, while many cars on a road network \citep{cabannes2021solving} may be subject to traffic accidents or traffic lights. 
In continuous-time, such systems are known as MFGs with major and minor players \citep{carmona2018probabilistic}, and have been considered, e.g., by \citet{huang2010large, nguyen2012linear, bensoussan2016mean} for LQG systems, by \cite{nourian2013mm, sen2016mean} in non-linear and partially observed settings, and more recently by \citet{carmona2016probabilistic, carmona2017alternative, lasry2018mean, cardaliaguet2020remarks}. Major agents also generalize common noise, an important problem in MFG literature \citep{carmona2016mean, perrin2020fictitious, motte2022mean}. For an additional overview, we also point to \citet{carmona2018probabilistic}. In contrast to prior work, we focus on a computational learning framework that is in discrete time. Additionally, even existing discrete-time MFG frameworks with only common noise such as by \citet{perrin2020fictitious} have to the best of our knowledge not yet rigorously connected MFGs with the finite games of practical interest.
We note that another setting with major players has already been explored: Stackelberg MFGs. \cite{elie2019tale,carmona2021finite,carmona2022mean} consider a Stackelberg equilibrium instead of a Nash equilibrium, wherein a `major' principal player chooses their policy first and has priority (like a government or regulator); see \citep{guo2022optimization, vasal2022master} for discrete time versions of the problem. Though the Stackelberg setting is of importance, it is distinct from computing Nash equilibria where major and minor players are ``on the same level'': in the Stackelberg setting, minor players only respond with a Nash equilibrium between themselves \textit{after} the principal's policy choice. Furthermore, we are not aware of any propagation of chaos results even in discrete-time Stackelberg MFGs, for which our result also applies. The field of Stackelberg MFGs remains part of continued active research, to which our M3FG setting may also contribute, and vice versa.

By the preceding motivation, we propose the first general discrete-time Major-Minor MFG (M3FG) learning framework. We begin with providing a theoretical foundation of the proposed M3FG model, showing that equilibria in finite games with many players can be approximately learned in the M3FG instead. The proof is based upon showing propagation of chaos i.e., convergence of the empirical MF, which -- in contrast to its counterpart in MFGs without common noise -- converges only in distribution. We then move on to provide a learning algorithm based on fictitious play to solve M3FGs, with convergence results and approximation guarantees for its tractable and practical, tabular variant. Empirically, our learned policies do not assume that common noise is known a priori. Due to the resulting stochastic MF, for tractable dynamic programming we allow conditioning of player actions and policies also on the MF instead of just the player's own state. Finally, we verify the M3FG framework on three problems, empirically supporting theoretical claims, even when the assumptions are not entirely fulfilled.

\section{Major-Minor Mean Field Games} \label{sec:m3fg}
In this section, we begin by giving a description of considered problems and their corresponding mean field system.

\textit{Notation: Equip finite sets $\mathcal S$ with the discrete metric, products with the product $\sup$ metric, and probability measures $\mathcal P(\mathcal S)$ on $\mathcal S$  with the $L_1$ norm. Let $[N] \coloneqq \{ 1, \ldots, N \}$.}

\subsection{Finite Player Game}
We consider a game with $N$ minor players and one major player. 
Let $\mathcal X$ and $\mathcal U$ be finite state and action spaces for minor players, respectively. Let $\mathcal X^0$ and $\mathcal U^0$ be finite state and action spaces for the major player, respectively. Let $T \in \mathbb N$ be a finite time horizon and let $\mathcal T \coloneqq \{0, 1, \ldots, T-1\}$. We denote the state and the action of minor player $i \in [N]$ at time $t \in \mathcal T$ by $x^{i,N}_t$ and $u^{i,N}_t$, respectively. Similarly, we denote by $x^{0,N}_t$ and $u^{0,N}_t$ the state and the action of the major player at time $t$. Let $\mu_0$ and $\mu_0^0$ be initial probability distributions on $\mathcal X$ and $\mathcal X^0$, respectively. Define the empirical MF $\mu_t^N \coloneqq \frac{1}{N} \sum_{i=1}^N \mathbf 1_{x_t^{i,N}}$, where $\mathbf 1_x$ is the indicator function equal to $1$ for the argument $x$ and $0$ otherwise. The MF can be viewed as a histogram with $|\mathcal X|$ many bins.

We can consider several classes of policies. In this presentation, we focus on Markovian feedback policies in the following sense: the policy $\pi^{i,N}$ for minor player $i$ is a function of her own state, the major player's state and the MF; the policy $\pi^{0,N}$ for the major player is a function of her own state and the MF. We denote respectively by $\Pi$ and $\Pi_0$ the sets of such minor and major player policies. 

For a given tuple of policies $(\underline{\pi}^N, \pi^{0,N}) = ((\pi^{1,N}, \ldots, \pi^{N,N}), \pi^{0,N}) \in \Pi^N \times \Pi_0$, the game begins with states $x^{0,N}_0 \sim \mu^0_0$, $x^{i,N}_0 \sim \mu_0$ and subsequently, for $t=0,1,\dots,T-2$, let
\small
\begin{subequations} \label{eq:finite}
\begin{align}
    u^{i,N}_t &\sim \pi^{i,N}_t(x^{i,N}_t, x^{0,N}_t, \mu_t^N), \quad i \in [N]\\
    u^{0,N}_t &\sim \pi^{0,N}_t(x^{0,N}_t, \mu_t^N), \\
    x^{i,N}_{t+1} &\sim P(x^{i,N}_t, u^{i,N}_t, x^{0,N}_t, u^{0,N}_t, \mu_t^N), \quad i \in [N] \\
    x^{0,N}_{t+1} &\sim P^0(x^{0,N}_t, u^{0,N}_t, \mu_t^N).
\end{align}
\end{subequations}
\normalsize
where $P \colon \mathcal{X} \times \mathcal{U} \times \mathcal{X}^0 \times \mathcal{U}^0 \times \mathcal{P}(\mathcal{X}) \to \mathcal{P}(\mathcal{X})$ and ${P^0 \colon \mathcal{X}^0 \times \mathcal{U}^0 \times \mathcal{P}(\mathcal{X}) \to \mathcal{P}(\mathcal{X})}$ are transition kernels.

In contrast to classic MFGs such as studied e.g, in \citep{saldi2018markov}, the minor players' dynamics depend also on the major player's state. An important consequence is that the minor players' dynamics are influenced by a form of common noise. This explains why we decide to consider policies that depend on the MF $\mu_t^N$. Furthermore, this form of common noise is not simply an exogenous source of randomness because it is influenced by the major player's choice of policy. This makes the problem more challenging than MFGs with common noise.

Next, we define the minor and major total rewards
\small
\begin{align*}
    J_N^i(\underline{\pi}^N, \pi^{0,N})
    &= \E \left[ \sum_{t \in \mathcal T} r(x^{i,N}_t, u^{i,N}_t, x^{0,N}_t, u^{0,N}_t, \mu^N_t) \right], \\
    J_N^0(\underline{\pi}^N, \pi^{0,N})
    &= \E \left[ \sum_{t \in \mathcal T} r^0(x^{0,N}_t, u^{0,N}_t, \mu^N_t) \right],
\end{align*}
\normalsize
for some reward functions $r \colon \mathcal{X} \times \mathcal{U} \times \mathcal{X}^0 \times \mathcal{U}^0 \times \mathcal{P}(\mathcal{X}) \to \mathbb{R}$ and $r^0 \colon \mathcal{X}^0 \times \mathcal{U}^0 \times \mathcal{P}(\mathcal{X}) \to \mathbb{R}$.

In this work, we focus on the non-cooperative scenario where players try to maximize their own objectives while anticipating the behavior of other players. This is formalized by the solution concept of (approximate) Nash equilibria.

\begin{definition}
    Let $\varepsilon \ge 0$. An approximate $\varepsilon$-Nash equilibrium is a tuple $(\underline{\pi}^N, \pi^{0,N}) = ((\pi^{1,N}, \ldots, \pi^{N,N}), \pi^{0,N}) \in \Pi^N \times \Pi_0$ of policies, such that $J_N^0(\underline\pi^N, \pi^{0,N}) \geq \sup_{\tilde\pi^0} J_N^0((\pi^{1,N}, \ldots, \pi^{N,N}), \tilde\pi^0) - \varepsilon$ and $J_N^i(\underline\pi^N, \pi^{0,N}) \geq \sup_{\tilde\pi^i \in \Pi} J_N^i((\pi^1, \ldots, \pi^{i-1}, \tilde\pi^i, \pi^{i+1}, \ldots, \pi^N), \pi^{0,N}) - \varepsilon$ for all $i \in [N]$. A Nash equilibrium is an approximate $0$-Nash equilibrium.
\end{definition}

\begin{remark}
    We can also consider time-dependent dynamics or rewards, multiple major players, and infinite-horizon discounted objectives. Some results we prove below can be extended to such settings (e.g., propagation of chaos, equilibrium approximation, and fictitious play; see also generalized infinite-horizon experiments in Appendix~\ref{app:exp}). Similarly, we can extend the model to multiple minor agent populations with small changes, see e.g. \citet{perolat2022scaling}. Another possibility is to simply include types of players into their state \citep{mondal2022approximation}.
\end{remark}

\subsection{Mean Field Game}
When the number of minor players $N$ is large, we can approximate the game by an MFG, which corresponds formally to the limit $N\to\infty$. In an MFG, the empirical MF is replaced by a random limiting MF. Unlike standard MFGs, the limiting MF does not evolve in a deterministic way due to the influence of the major player.
Fixing major and minor player policies $\pi^0$, $\pi$ for all players, except for a single minor player deviating to $\hat \pi$, when $N \to \infty$, we obtain (intuitively by a law of large numbers argument) the major and deviating minor player M3FG dynamics $x^0_0 \sim \mu^0_0$, $x_0 \sim \mu_0$,
\small
\begin{subequations}
\begin{align}
    u_t &\sim \hat \pi_t(x_t, x^0_t, \mu_t), \\
    u^0_t &\sim \pi^0_t(x^0_t, \mu_t), \\
    x_{t+1} &\sim P(x_t, u_t, x^0_t, u^0_t, \mu_t), \\
    x^0_{t+1} &\sim P^0(x^0_t, u^0_t, \mu_t), \\
    \mu_{t+1} &= T^\pi_t(x^0_t, u^0_t, \mu_t)
\end{align} \label{eq:m3fg-min}
\end{subequations}
\normalsize
with the deterministic transition operator $T^\pi_t(x^0, u^0, \mu) \coloneqq \iint P(x, u, x^0, u^0, \mu) \, \pi_t(\mathrm du \mid x, x^0, \mu) \, \mu(\mathrm dx)$ as the conditional ``expectation'' of the next MF given the current major state $x^0$, action $u^0$, and random MF $\mu$. The policy $\pi$ is shared by all minor players except one who is deviating and using $\hat \pi$. This means that we look for symmetric Nash equilibria where all exchangeable minor players use the same policy, as usual in MFG literature. Still, a mean field equilibrium suffices as an approximate Nash equilibrium in the finite game, which is not to say that there cannot be other heterogeneous policy tuples in the finite game that are Nash.

M3FGs now consist of \textit{two} Markov decision process (MDP) optimality conditions, one for all minor players and one for the major player. An equilibrium is then optimal in each MDP simultaneously.
More precisely, from the point of view of a minor player, the goal is to optimize over $\hat \pi$ while $(\pi, \pi^0)$ are fixed. This yields the minor player MDP with state $(x_t, x^0_t, \mu_t) \in \mathcal X \times \mathcal X^0 \times \mathcal P(\mathcal X)$, and action $u_t \in \mathcal U$, and with the objective
\small
\begin{align}
    J(\hat \pi, \pi, \pi^0) &= \E \left[ \sum_{t \in \mathcal T} r(x_t, u_t, x^0_t, u^0_t, \mu_t) \right].
\end{align}
\normalsize
Note that, although $\mu_{t+1}$ is given by a deterministic function of  $(x^0_t, u^0_t, \mu_t)$, from the point of view of a minor player, the evolution of $(\mu_{t})_{t}$ is stochastic since it depends on the sequence $(x^0_t, u^0_t)_{t}$, which is random. By definition of a Nash equilibrium, only a \textit{single} minor player can deviate arbitrarily to $\hat \pi$, and by symmetry it does not matter which ``representative'' minor player deviates. Therefore there is only one MDP optimality condition for all minor players. We also stress that since $N\rightarrow\infty$, the representative player is insignificant and her deviation does not affect the mean field.

On a similar note, from the major player's point of view, we obtain the major player MDP with $(\mathcal X^0 \times \mathcal P(\mathcal X))$-valued states $(x^0_t, \mu_t)$ and $\mathcal U^0$-valued actions $u^0_t$ of the major player, using the same dynamics, forgetting about the (insignificant for the major player) deviating minor player, and optimizing instead for $\pi^0$, the corresponding major objective
\small
\begin{align}
    J^0(\pi, \pi^0) &= \E \left[ \sum_{t \in \mathcal T} r^0(x^0_t, u^0_t, \mu_t) \right].
\end{align}
\normalsize

\subsubsection{Mean field equilibrium.}
The Nash equilibrium in the finite game hence corresponds to a major-minor mean field equilibrium, as a fixed point of both MDPs \textit{at once}. In other words, major and minor policies $\pi^0$, $\pi$ that are optimal against themselves in the major and minor player MDPs.

\begin{definition}
    A Major-Minor Mean Field Nash Equilibrium (M3FNE) is a tuple $(\pi, \pi^0) \in \Pi \times \Pi_0$ of policies, such that $\pi \in \argmax J(\cdot, \pi, \pi^0)$ and $\pi^0 \in \argmax J^0(\pi, \cdot)$.
\end{definition}

We slightly weaken the concept of optimality to \textit{approximate} optimality, since the solution of a limiting MFG provides approximate Nash equilibria for the finite game, which are still achieved by solving for approximate M3FNE.

\begin{definition}
    An approximate $\varepsilon$-M3FNE is a tuple $(\pi, \pi^0) \in \Pi \times \Pi_0$ of policies, such that $J(\pi, \pi, \pi^0) \geq \sup J(\cdot, \pi, \pi^0) - \varepsilon$ and $J^0(\pi, \pi^0) \geq \sup J^0(\pi, \cdot) - \varepsilon$.
\end{definition}

The minimal such $\varepsilon$ for minor and major agents are also referred to as the minor and major exploitabilities $\mathcal E(\pi, \pi^0)$ and $\mathcal E^0(\pi, \pi^0)$ of $(\pi, \pi^0)$. Accordingly, an exploitability of $0$ means that $(\pi, \pi^0)$ is an exact M3FNE.

\section{Theoretical Analysis}
The M3FG is a theoretically rigorous formulation for large corresponding finite games. Note in particular that the MF will be stochastic due to the randomness of major players and their states, and therefore standard results based on determinism of MFs will no longer hold. We provide such a theoretical foundation of M3FG by propagation of chaos.

\paragraph{Continuity assumptions.}
We provide theoretical guarantees to prove that the M3FNE is an approximate Nash equilibrium in the finite game, despite having a non-deterministic MF in the limiting case, contrary to most of the existing literature \citep{huang2006distributed, guo2019learning}. For this, we need some common Lipschitz continuity assumptions \citep{gu2021mean, pasztor2023efficient}.

\begin{assumption} \label{ass:m3pcont}
The kernels $P$, $P^0$ are $L_P$, $L_{P^0}$-Lipschitz.
\end{assumption}
\begin{assumption} \label{ass:m3rcont}
The rewards $r$, $r^0$ are $L_r$, $L_{r^0}$-Lipschitz.
\end{assumption}
\begin{assumption} \label{ass:m3picont}
The classes of major and minor policies $\Pi^0$, $\Pi$ are equi-Lipschitz, i.e. there are $L_{\Pi^0}, L_\Pi > 0$ s.t. for all $t$, $\pi^0 \in \Pi^0$, $\pi \in \Pi$, we have that $\pi^0_t \colon \mathcal X^0 \times \mathcal P(\mathcal X) \to \mathcal P(\mathcal U)$ and $\pi_t \colon \mathcal X \times \mathcal X^0 \times \mathcal P(\mathcal X) \to \mathcal P(\mathcal U)$ are $L_{\Pi^0}$, $L_\Pi$-Lipschitz.
\end{assumption}

Here, we always consider Lipschitz continuity for all arguments using the $\sup$ metric for products, and the $L_1$ distance for probability measures, see e.g., Appendix~\ref{app:lem:Tcont}. We note that the Lipschitz assumption for policies -- while standard -- is technical. Empirically, the only piecewise Lipschitz policies obtained in Section~\ref{sec:algo} for tractability nonetheless remain close to the following approximations in the finite system. A theoretical investigation of guarantees for piecewise Lipschitz policies is left for future work.

\paragraph{Propagation of chaos.}
We achieve propagation of chaos ``in distribution'' for major and minor players to the M3FG at rate $\mathcal O(1 / \sqrt N)$, which is shown inductively in Appendix~\ref{app:m3muconv}. Here, propagation of chaos refers to the conditional independence of minor agents, and thus convergence in the limit to the deterministic mean field \citep{chaintron2022propagation}. In contrast to MFGs with deterministic MFs, a stronger mode of convergence such as the one considered by \citet{saldi2018markov} fails by stochasticity of the MF.  

\begin{theorem} \label{thm:m3muconv}
Consider Assumptions~\ref{ass:m3pcont} and \ref{ass:m3picont}, and any equi-Lipschitz family of functions $\mathcal F \subseteq \mathbb R^{\mathcal X \times \mathcal U \times \mathcal X^0 \times \mathcal U^0 \times \mathcal P(\mathcal X)}$ with shared Lipschitz constant $L_{\mathcal F}$. Then, the random variable $(x^{1,N}_t, u^{1,N}_{t}, x^{0,N}_t, u^{0,N}_{t}, \mu_t^N)$ in system \eqref{eq:finite} under $((\hat \pi, \pi, \pi, \ldots), \pi^0)$ converges weakly, uniformly over $f \in \mathcal F$ and $(\hat \pi, \pi, \pi^0) \in \Pi \times \Pi \times \Pi^0$, to $(x_t, u_{t}, x^0_t, u^0_{t}, \mu_t)$ in system \eqref{eq:m3fg-min} under $(\hat \pi, \pi, \pi^0)$,
\small
\begin{align} \label{eq:m3muconv-min}
    &\forall t \in \mathcal T,\ \sup_{\hat \pi, \pi, \pi^0} \sup_{f \in \mathcal F} \left| \E \left[ f(x^{1,N}_t, u^{1,N}_t, x^{0,N}_t, u^{0,N}_t, \mu^N_t) \right]
    \right.\nonumber\\&\hspace{1.9cm}\left.
    - \E \left[ f(x_t, u_t, x^{0}_t, u^0_t, \mu_t) \right] \right| = \mathcal O(1/\sqrt{N}).
\end{align}
\normalsize
\end{theorem}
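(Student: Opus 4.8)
The plan is to proceed by induction on $t \in \mathcal T$, establishing a coupling between the $N$-player system \eqref{eq:finite} and the M3FG system \eqref{eq:m3fg-min} in which the major player's state-action process is driven by the same randomness, while the minor players remain conditionally i.i.d. given the major player's trajectory. The key quantity to control is the discrepancy between the empirical mean field $\mu_t^N$ and the limiting (still random) mean field $\mu_t$, together with the discrepancy in the laws of the representative tuple $(x^{1,N}_t, u^{1,N}_t, x^{0,N}_t, u^{0,N}_t)$. I would phrase the induction hypothesis as: there is a coupling under which $\E[\|\mu_t^N - \mu_t\|_1] = \mathcal O(1/\sqrt N)$, uniformly over $(\hat\pi,\pi,\pi^0)$, and the major state coincides, $x^{0,N}_t = x^0_t$, with high probability (or after a suitable coupling of the transition kernels). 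The base case $t=0$ is immediate: $\mu_0^N$ is an empirical average of $N$ i.i.d.\ samples from $\mu_0$, so $\E[\|\mu_0^N - \mu_0\|_1] = \mathcal O(1/\sqrt N)$ on the finite set $\mathcal X$ by a standard variance bound, and $x^{0,N}_0 = x^0_0$ by coupling the two draws from $\mu_0^0$.

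For the inductive step, I would decompose the error in $\mu_{t+1}^N$ versus $\mu_{t+1} = T^\pi_t(x^0_t,u^0_t,\mu_t)$ into three pieces. First, a \emph{fluctuation term}: conditionally on the major trajectory and on $\mu_t^N$, the next minor states $x^{i,N}_{t+1}$ are (nearly) conditionally independent, so $\mu_{t+1}^N$ concentrates around its conditional mean $T^{\pi}_t(x^{0,N}_t, u^{0,N}_t, \mu_t^N)$ — up to a correction because the deviating player uses $\hat\pi$, which perturbs the empirical mean by only $\mathcal O(1/N)$ — at rate $\mathcal O(1/\sqrt N)$, again by a variance computation on the finite simplex. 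Second, a \emph{propagation term}: $\|T^\pi_t(x^{0,N}_t,u^{0,N}_t,\mu_t^N) - T^\pi_t(x^0_t,u^0_t,\mu_t)\|_1$ is bounded using Lipschitz continuity of $T^\pi_t$ in $(\mu, x^0, u^0)$, which follows from Assumptions~\ref{ass:m3pcont} and \ref{ass:m3picont} (this is the continuity lemma referenced as Appendix~\ref{app:lem:Tcont}); under the coupling $x^{0,N}_t = x^0_t$ and with $u^{0,N}_t, u^0_t$ drawn from the same $\pi^0_t(x^0_t,\cdot)$ via a common uniform, this term is Lipschitz-controlled by $\|\mu_t^N - \mu_t\|_1$, closing the recursion. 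Third, one must verify that the major state stays coupled: $x^{0,N}_{t+1} \sim P^0(x^{0,N}_t, u^{0,N}_t, \mu_t^N)$ versus $x^0_{t+1} \sim P^0(x^0_t, u^0_t, \mu_t)$ — since these kernels differ only through $\mu_t^N$ versus $\mu_t$, by $L_{P^0}$-Lipschitzness their total variation distance is $\mathcal O(\|\mu_t^N-\mu_t\|_1)$, so there is a maximal coupling under which $x^{0,N}_{t+1}=x^0_{t+1}$ except on an event of probability $\mathcal O(1/\sqrt N)$; on that rare event one pays at most a bounded constant, contributing $\mathcal O(1/\sqrt N)$ to all subsequent expectations. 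Combining the three pieces and taking expectations yields a recursion of the form $\E[\|\mu_{t+1}^N - \mu_{t+1}\|_1] \le (1 + C)\,\E[\|\mu_t^N - \mu_t\|_1] + \mathcal O(1/\sqrt N)$, which over the finite horizon $T$ gives the claimed $\mathcal O(1/\sqrt N)$ bound with a constant depending on $T$ and the Lipschitz constants.

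Finally, to pass from convergence of $\mu_t^N$ to the statement about test functions $f \in \mathcal F$, I would write $\E[f(x^{1,N}_t,u^{1,N}_t,x^{0,N}_t,u^{0,N}_t,\mu_t^N)] - \E[f(x_t,u_t,x^0_t,u^0_t,\mu_t)]$ and bound it, under the coupling, by $L_{\mathcal F}$ times the expected distance between the two tuples. The $(x^0,u^0)$-coordinates match under the coupling (up to the rare decoupling event, worth $\mathcal O(1/\sqrt N)$ since $f$ is bounded on the compact domain), the $\mu$-coordinates differ by $\mathcal O(1/\sqrt N)$ in expectation by the induction, and the representative minor coordinates $(x^{1,N}_t,u^{1,N}_t)$ versus $(x_t,u_t)$ can be coupled with the same $\mathcal O(1/\sqrt N)$ error by the same argument applied to a single player rather than the empirical average (here the deviating policy $\hat\pi$ enters, but uniformity over $\hat\pi$ is preserved since the bound only uses the \emph{common} Lipschitz constant $L_\Pi$ of the policy class). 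Uniformity over $(\hat\pi,\pi,\pi^0)$ and over $f \in \mathcal F$ is automatic because every constant appearing is expressed purely through $L_P, L_{P^0}, L_\Pi, L_{\Pi^0}, L_{\mathcal F}, T$ and the cardinalities of the finite spaces, never through the specific policies or test function.

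The main obstacle I anticipate is the careful handling of the \emph{stochastic} mean field: unlike classical MFGs, one cannot argue that $\mu_t^N$ converges to a deterministic limit, so the coupling must track the joint law of $(\mu_t^N, x^{0,N}_t)$ against $(\mu_t, x^0_t)$, and the decoupling of the major state under mismatched transition kernels must be absorbed without blowing up the constant — this is exactly where "convergence in distribution only" rather than an almost-sure or $L^2$ statement becomes essential, and where the Lipschitz continuity of $T^\pi_t$ and $P^0$ in the measure argument does the crucial work.
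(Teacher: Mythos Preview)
Your coupling-based argument is sound and would go through, but it is a genuinely different route from the paper's proof. The paper never constructs a coupling: it works entirely at the level of expectations of test functions and closes the induction by rewriting $\E[f(\cdot_{t+1})]$, after conditioning on time-$t$ variables and replacing $\mu^N_{t+1}$ by its conditional mean $\hat\mu^N_{t+1} \coloneqq T^\pi_t(x^{0,N}_t,u^{0,N}_t,\mu^N_t)$, as $\E[g_4(x^{1,N}_t,u^{1,N}_t,x^{0,N}_t,u^{0,N}_t,\mu^N_t)]$ for an explicitly Lipschitz $g_4$ built by iterated integration against $P,P^0,\pi^0_{t+1},\hat\pi_{t+1}$; then the induction hypothesis applies directly to the family $\{g_4\}$. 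This avoids all talk of maximal couplings, decoupling events, or tracking $\E[\lVert\mu^N_t-\mu_t\rVert_1]$. Your approach, by contrast, proves a stronger intermediate statement (a Wasserstein-type bound under the coupling) at the price of more bookkeeping---in particular, you must also maximally couple the major \emph{action}, since $u^{0,N}_t\sim\pi^0_t(x^{0,N}_t,\mu^N_t)$ and $u^0_t\sim\pi^0_t(x^0_t,\mu_t)$ differ through the mean-field argument even when $x^{0,N}_t=x^0_t$ (your phrasing ``drawn from the same $\pi^0_t(x^0_t,\cdot)$'' glosses over this, though it is easily repaired with another $L_{\Pi^0}\lVert\mu^N_t-\mu_t\rVert_1$ term). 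Both arguments share the same two load-bearing ingredients: the conditional-LLN fluctuation bound $\E[\lVert\mu^N_{t+1}-\hat\mu^N_{t+1}\rVert_1]=\mathcal O(1/\sqrt N)$ and the Lipschitzness of $T^\pi_t$ from Lemma~\ref{lem:Tcont}.
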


\begin{corollary} \label{coro:m3muconv}
Similarly, consider Assumptions~\ref{ass:m3pcont} and \ref{ass:m3picont}, and any family of equi-Lipschitz functions $\mathcal F^0 \subseteq \mathbb R^{\mathcal X^0 \times \mathcal U^0 \times \mathcal P(\mathcal X)}$ with shared Lipschitz constant $L_{\mathcal F^0}$. Then the random variable $(x^{0,N}_t, u^{0,N}_{t}, \mu_t^N)$ in system \eqref{eq:finite} under $((\hat \pi, \pi, \pi, \ldots), \pi^0)$ converges weakly, uniformly over $f \in \mathcal F^0$, to $(x^0_t, u^0_{t}, \mu_t)$ in system \eqref{eq:m3fg-min} under $(\hat \pi, \pi, \pi^0)$,
\small
\begin{align} \label{eq:m3muconv-maj}
    \forall t \in \mathcal T,\ &\sup_{\hat \pi, \pi, \pi^0} \sup_{f \in \mathcal F^0} \left| \E \left[ f(x^{0,N}_t, u^{0,N}_t, \mu^N_t) \right]
    \right.\nonumber\\&\hspace{1.7cm}\left.
    - \E \left[ f(x^{0}_t, u^0_t, \mu_t) \right] \right|= \mathcal O(1/\sqrt{N}).
\end{align}
\normalsize
\end{corollary}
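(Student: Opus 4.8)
The plan is to obtain Corollary~\ref{coro:m3muconv} as an immediate specialization of Theorem~\ref{thm:m3muconv}. Given an equi-Lipschitz family $\mathcal F^0 \subseteq \mathbb R^{\mathcal X^0 \times \mathcal U^0 \times \mathcal P(\mathcal X)}$ with shared constant $L_{\mathcal F^0}$, lift it to $\tilde{\mathcal F} \coloneqq \{ (x,u,x^0,u^0,\mu) \mapsto f(x^0,u^0,\mu) : f \in \mathcal F^0 \} \subseteq \mathbb R^{\mathcal X \times \mathcal U \times \mathcal X^0 \times \mathcal U^0 \times \mathcal P(\mathcal X)}$. Since each lifted function ignores its first two arguments, $\tilde{\mathcal F}$ is again equi-Lipschitz with the same constant $L_{\mathcal F^0}$ under the $\sup$ product metric, and one has $\E[f(x^{0,N}_t,u^{0,N}_t,\mu^N_t)] = \E[\tilde f(x^{1,N}_t,u^{1,N}_t,x^{0,N}_t,u^{0,N}_t,\mu^N_t)]$, and the analogous identity in the limiting system~\eqref{eq:m3fg-min}. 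Applying \eqref{eq:m3muconv-min} with $\mathcal F = \tilde{\mathcal F}$ then yields \eqref{eq:m3muconv-maj} at the same $\mathcal O(1/\sqrt N)$ rate, uniformly over $(\hat\pi,\pi,\pi^0) \in \Pi \times \Pi \times \Pi^0$.

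It remains to indicate how Theorem~\ref{thm:m3muconv} itself would be established, since the corollary rests entirely on it. I would argue by induction on $t \in \mathcal T$, carrying along the inductive hypothesis that, tested against \emph{any} equi-Lipschitz class and uniformly over policies, the joint law of $(x^{0,N}_t, \mu^N_t)$ is $\mathcal O(1/\sqrt N)$-close to that of $(x^0_t, \mu_t)$, together with approximate conditional independence of the minor players given these ``common'' quantities. The base case $t=0$ is a concentration estimate: the $x^{i,N}_0$ are i.i.d.\ $\sim\mu_0$, so a coordinatewise variance bound on the finite alphabet $\mathcal X$ gives $\E\|\mu^N_0 - \mu_0\|_1 = \mathcal O(1/\sqrt N)$, while $x^{0,N}_0 \sim \mu^0_0$ is exactly the law of $x^0_0$; a Lipschitz test function evaluated at $(x^{1,N}_0, u^{1,N}_0, x^{0,N}_0, u^{0,N}_0, \mu^N_0)$ then differs from its limiting counterpart by at most the relevant policy Lipschitz constants (Assumption~\ref{ass:m3picont}) times $\E\|\mu^N_0-\mu_0\|_1$.

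For the inductive step I would: (i) use Assumption~\ref{ass:m3picont} to pass from convergence of $(x^{0,N}_t,\mu^N_t)$ to convergence of $(x^{0,N}_t,u^{0,N}_t,\mu^N_t)$ and of the representative (deviating) minor tuple, since actions are sampled through equi-Lipschitz kernels; (ii) use Assumption~\ref{ass:m3pcont} and continuity of the operator $T^\pi_t$ (Appendix~\ref{app:lem:Tcont}) to push the dynamics forward one step, where in the limit $\mu_{t+1} = T^\pi_t(x^0_t,u^0_t,\mu_t)$ is a \emph{deterministic} function of the previous state; and (iii) bound the finite-game fluctuation $\mu^N_{t+1} - T^\pi_t(x^{0,N}_t,u^{0,N}_t,\mu^N_t)$. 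Step (iii) is the crux: conditioned on $(x^{i,N}_t)_i$ and $(x^{0,N}_t,u^{0,N}_t)$, the next minor states are independent with empirical-measure-averaged mean equal to $T^\pi_t(x^{0,N}_t,u^{0,N}_t,\mu^N_t)$ up to an $\mathcal O(1/\sqrt N)$ term from sampling the $u^{i,N}_t$, so a conditional $L^2$ (martingale-difference) bound yields $\E\|\mu^N_{t+1} - T^\pi_t(x^{0,N}_t,u^{0,N}_t,\mu^N_t)\|_1 = \mathcal O(1/\sqrt N)$ uniformly in the policies. Chaining this with the Lipschitz estimates of (i)--(ii) and the inductive hypothesis, and absorbing the accumulating but $N$-independent Lipschitz constants into the $\mathcal O(\cdot)$, closes the induction; the rate survives because only finitely many ($T$) steps are composed. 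The main obstacle, beyond bookkeeping the growth of these constants, is precisely this uniform-in-policy control of the one-step empirical fluctuation, compounded by the fact that — unlike standard MFGs — $\mu_t$ is genuinely random, so no convergence in probability to a deterministic limit is available and one must work throughout at the level of weak convergence against the prescribed equi-Lipschitz test class.
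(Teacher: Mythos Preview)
Your proposal is correct and matches the paper's approach essentially verbatim: the paper's proof of Corollary~\ref{coro:m3muconv} is the one-line observation that each $f \in \mathcal F^0$ can be viewed as a function in the larger class $\mathcal F$ that is constant in the deviating minor player's variables, after which Theorem~\ref{thm:m3muconv} applies directly. Your sketch of the inductive proof of Theorem~\ref{thm:m3muconv} (base case via i.i.d.\ concentration on a finite alphabet, inductive step via Lipschitz pushforward through $T^\pi_t$, $P$, $P^0$, and the policies, plus a conditional second-moment bound on the one-step empirical fluctuation $\mu^N_{t+1} - T^\pi_t(x^{0,N}_t,u^{0,N}_t,\mu^N_t)$) is also the paper's argument in Appendix~\ref{app:m3muconv}.
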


\paragraph{Approximate Nash equilibrium.}
From propagation of chaos, the approximate Nash property of M3FNE follows, suggesting that a solution of M3FGs provides a good game-theoretic solution of interest to practical $N$-player games, see Appendix~\ref{app:varepsNash} for the proof based on propagation of chaos.

\begin{corollary} \label{coro:varepsNash}
Consider Assumptions~\ref{ass:m3pcont}, \ref{ass:m3rcont}, \ref{ass:m3picont}, and a M3FNE $(\pi, \pi^0) \in \Pi \times \Pi^0$. 
Then, the policies $((\pi, \ldots, \pi), \pi^0)$ constitute an $\mathcal O(1/\sqrt N)$-Nash equilibrium in the finite game.
\end{corollary}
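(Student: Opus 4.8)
The plan is to transfer the two M3FNE optimality conditions from the limiting system to the finite $N$-player game via propagation of chaos (Theorem~\ref{thm:m3muconv} and Corollary~\ref{coro:m3muconv}), paying only an additive $\mathcal O(1/\sqrt N)$ error; everything reduces to a triangle inequality once these convergence results are in hand. The key observation is that by Assumption~\ref{ass:m3rcont} the reward $r$ is a single $L_r$-Lipschitz function on $\mathcal X \times \mathcal U \times \mathcal X^0 \times \mathcal U^0 \times \mathcal P(\mathcal X)$, so $\mathcal F \coloneqq \{r\}$ is (trivially) an equi-Lipschitz family in the sense of Theorem~\ref{thm:m3muconv}, and likewise $\mathcal F^0 \coloneqq \{r^0\}$ for Corollary~\ref{coro:m3muconv}.

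For the minor-player condition I would first fix an arbitrary deviation $\tilde\pi \in \Pi$ for minor player $1$ (by exchangeability of the minor players this is without loss of generality), with all other minor players using $\pi$ and the major using $\pi^0$. Writing $J_N^1((\tilde\pi,\pi,\dots,\pi),\pi^0) = \sum_{t\in\mathcal T}\E[r(x^{1,N}_t,u^{1,N}_t,x^{0,N}_t,u^{0,N}_t,\mu^N_t)]$ and applying Theorem~\ref{thm:m3muconv} with $\mathcal F=\{r\}$ and $\hat\pi=\tilde\pi$ to each of the $T=|\mathcal T|$ summands gives
\begin{align*}
 \bigl| J_N^1((\tilde\pi,\pi,\dots,\pi),\pi^0) - J(\tilde\pi,\pi,\pi^0) \bigr| = \mathcal O(1/\sqrt N),
\end{align*}
and crucially this bound is \emph{uniform} over $\tilde\pi$ (and over $\pi,\pi^0$); specializing to $\tilde\pi=\pi$ yields the analogous estimate for the non-deviating payoff. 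Chaining these with the M3FNE property $\pi\in\argmax J(\cdot,\pi,\pi^0)$ yields, for every $\tilde\pi\in\Pi$,
\begin{align*}
 J_N^1((\tilde\pi,\pi,\dots,\pi),\pi^0) &\le J(\tilde\pi,\pi,\pi^0)+\mathcal O(1/\sqrt N) \le J(\pi,\pi,\pi^0)+\mathcal O(1/\sqrt N) \\ &\le J_N^1((\pi,\dots,\pi),\pi^0)+\mathcal O(1/\sqrt N),
\end{align*}
and since the error does not depend on $\tilde\pi$ I may take the supremum over $\tilde\pi\in\Pi$; by symmetry the same bound holds for every $i\in[N]$.

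The major-player condition is handled identically: fixing an arbitrary major deviation $\tilde\pi^0\in\Pi^0$ with all minor players using $\pi$, I apply Corollary~\ref{coro:m3muconv} with $\mathcal F^0=\{r^0\}$, taking the nominal deviating minor policy to be $\hat\pi=\pi$ and using that the bound is uniform in the major-policy argument, to get $|J_N^0((\pi,\dots,\pi),\tilde\pi^0)-J^0(\pi,\tilde\pi^0)|=\mathcal O(1/\sqrt N)$ uniformly over $\tilde\pi^0$, and likewise with $\tilde\pi^0$ replaced by $\pi^0$. Chaining with $\pi^0\in\argmax J^0(\pi,\cdot)$ and taking the supremum over $\tilde\pi^0\in\Pi^0$ gives the major condition. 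Taking the larger of the two $\mathcal O(1/\sqrt N)$ bounds shows $((\pi,\dots,\pi),\pi^0)$ is an $\mathcal O(1/\sqrt N)$-Nash equilibrium.

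\textbf{Main obstacle.} All the genuine difficulty is already absorbed into Theorem~\ref{thm:m3muconv}/Corollary~\ref{coro:m3muconv} — the weak (``in distribution'') propagation of chaos for a \emph{stochastic} mean field — so given those this corollary is essentially bookkeeping. The one subtlety to watch is uniformity: the $\mathcal O(1/\sqrt N)$ error must be controlled \emph{simultaneously} over all admissible deviations before the supremum over deviations is taken, which is exactly why Assumption~\ref{ass:m3picont} (equi-Lipschitz policy classes) and the uniform-over-policies phrasing of the propagation-of-chaos bounds are needed; a merely pointwise-in-policy convergence would break the argument at the step where we pass to $\sup_{\tilde\pi}$ and $\sup_{\tilde\pi^0}$.
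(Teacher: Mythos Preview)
Your proposal is correct and follows essentially the same route as the paper: apply Theorem~\ref{thm:m3muconv} with $\mathcal F=\{r\}$ (resp.\ Corollary~\ref{coro:m3muconv} with $\mathcal F^0=\{r^0\}$) to obtain the uniform-in-deviation $\mathcal O(1/\sqrt N)$ objective approximation, then chain via the M3FNE optimality and pass to the supremum, using symmetry for the minor players. Your explicit emphasis on the need for uniformity over deviations before taking the supremum is exactly the point, and matches the paper's use of the uniform-over-policies statement of propagation of chaos.
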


Finally, existence of a M3FNE is a difficult question under policies that depend on the stochastic MF. While assuming reactive policies unconditioned on the MF could help, choosing such policies makes the design of our algorithm based on dynamic programming difficult, as policies computed via dynamic programming need to depend on the entire M3FG system state. In contrast, in usual deterministic MFGs it is sufficient to remove policy dependence on the MF, which is deterministic. For practical purposes, learning equilibria and then checking the exploitability by Theorem~\ref{thm:disc-opt} may suffice.

\section{Fictitious Play} \label{sec:algo}
To find M3FNE and solve the fixed-point problem, we formulate a fictitious play (FP) algorithm and provide a theoretical analysis. Following the exact algorithm, as empirical contribution we provide and analyze an approximate, numerically tractable algorithm that does not assume knowledge of common noise, contrary to \citet{perrin2020fictitious}, and extend it to the setup with major and minor players. Since the space of MFs is continuous and does not allow general exact computation of value functions, we project MFs onto a finite partition with guarantees for policy evaluation.

\subsection{Fictitious Play for M3FNE}
In order to learn an M3FNE, we first propose an exact analytic algorithm based on FP \citep{perrin2020fictitious} and provide a theoretical analysis of convergence. 
For this part, we will assume that the major player's action does not affect the minor players' transition kernel. To simplify the presentation and the analysis, we will use conditioning with respect to the sources of randomness that affect the MF, i.e., the minors' distribution. 
For every $t \ge 0$, let the major and minor players' actions be determined not by the mean field $\mu_t$, but instead by the history of major states and actions, $u^0_t \sim \pi^0_t(x^0_t, x^0_{0:t-1}, u^0_{0:t-1})$, $x^0_{0:t-1} \coloneqq (x^0_0, x^0_1, \ldots, x^0_{t-1})$, $u^0_{0:t-1} \coloneqq (u^0_0, u^0_1, \ldots, u^0_{t-1})$. By induction, we can in fact view $\mu_{t}$ as a deterministic function of $(x^0_{0:t-1}, u^0_{0:t-1})$ given the minor players' policy $\pi$, since we simply have $\mu_{t+1} = T^\pi_t(x^0_t, u^0_t, \mu_t)$ recursively and deterministically.
This means that for fixed policies such as a given Nash equilibrium, any policies dependent on $\mu_{t}$ can instead be rewritten as functions of $(x^0_{0:t-1}, u^0_{0:t-1})$.
Therefore, instead of seeing policies as functions of $\mu_t$, we will see them as functions of the major player randomness $(x^0_{0:t-1}, u^0_{0:t-1})$ and we will write (slightly abusing notation) $\pi_t(x_t, x^0_{0:t}, u^0_{0:t-1})$ and $\pi^0_t(x_t^0, x^0_{0:t-1}, u^0_{0:t-1})$
respectively for the minor players' and the major player's policies.
The results we prove below go beyond existing results by (i) analyzing also the major exploitability similarly to the minor exploitability, and (ii) expanding analysis of minor exploitability under presence of major players. To this end, we formulate Assumption~\ref{as:fictitious_play_conv}.\ref{as:fic_play_conv_explo_2} and \ref{as:fictitious_play_conv}.\ref{as:fic_play_conv_explo}, which provide the conditions for convergence in the presence of major players.
See Appendix~\ref{app:fic_play} for more detail.

We first start by introducing the discrete time FP before analyzing it in continuous time. Here, time refers to the algorithm's current iteration and not to the time of the M3FG system, which remains discrete throughout the whole paper. At any given step $j$ of FP, we have: 
\begin{equation}
\label{eq:fict_dist_avg}
    \mu^{\bar{\pi}^j}_{t \mid x^0_{0:t-1}, u^0_{0:t-1}} = \frac{j-1}{j} \mu^{\bar{\pi}^{j-1}}_{t \mid x^0_{0:t-1}, u^0_{0:t-1}} + \frac{1}{j} \mu^{\pi^{BR,j}}_{t \mid x^0_{0:t-1}, u^0_{0:t-1}}
\end{equation}
where we use the notation $\mu^{\pi}_{t \mid x^0_{0:t-1}, u^0_{0:t-1}}$ for the minor state distribution at time $t$ induced by the minor agent policy $\pi$ and conditioned on the past sequence $(x^0_{0:t-1}, u^0_{0:t-1})$. 
Here, $\mu_{t|x^0_{0:t-1}, u^0_{0:t-1}}^{\pi^{BR,j}}$ is the conditional distribution induced by the best response (BR) policy $\pi^{BR,j}$ against $\bar{\pi}^{j-1}$ and $\bar{\pi}^{0,j-1}$, i.e., $\pi^{BR,j} := \argmax_{\pi} J(\pi, \bar{\pi}^{j-1}, \bar{\pi}^{0, j-1})$. The policy generating this average distribution is
\small
\begin{multline}
\label{eq:fict_minorpolicy_avg}
    \bar{\pi}_t^{j}(u|x, x^0_{0:t-1}, u^0_{0:t-1}) \\
    = \dfrac{\sum_{i=0}^j \mu^{\pi^{BR,i}}_{t|x^0_{0:t-1}, u^0_{0:t-1}}(x)\pi_t^{BR,i}(u|x, x^0_{0:t-1}, u^0_{0:t-1})}{\sum_{i=0}^j \mu^{\pi^{BR,i}}_{t|x^0_{0:t-1}, u^0_{0:t-1}}(x)}.
\end{multline}
\normalsize

Meanwhile, the major player state distribution is
\begin{equation*}
\label{eq:fict_dist_avg_maj}
    {\mu}_{t}^{\bar{\pi}^{0,j}} = \frac{j-1}{j} {\mu}_{t}^{\bar{\pi}^{0,j-1}} + \frac{1}{j}\mu_{t}^{\pi^{0,BR,j}}
\end{equation*}
where $\bar{\pi}_t^{0,j}$ analogous to~\eqref{eq:fict_minorpolicy_avg}, but in contrast to minor agents using joint distributions $\mu^{\pi^{0,BR,i}}_{t}(x^0_t, x^0_{0:t-1}, u^0_{0:t-1})$ and $\pi_t^{0,BR,i}(u^0_t \mid x^0_t, x^0_{0:t-1}, u^0_{0:t-1})$.

For the convergence analysis, we study the continuous time version of above discrete time FP, as~\citet{perrin2020fictitious}.
In the continuous time FP algorithm, we denote the time of the algorithm (its ``iterations'') with $\tau$ and we first initialize the algorithm for $\tau<1$ with arbitrary policies for the minor players, $\bar{\pi}^{\tau<1}=\{\bar{\pi}_t^{\tau<1}\}_{t\in \mathcal{T}}$, and major player, $\bar{\pi}^{0,\tau<1}=\{\bar{\pi}_t^{0,\tau<1}\}_{t\in \mathcal{T}}$. 
For all $\tau\geq 1$, $t\in\mathcal{T}$ and $x^0_{0:t-1}, u^0_{0:t-1}$, define the FP process
\small 
\begin{equation}
\begin{aligned}
\label{eq:fict_play_cont_alg_int}
    \bar \mu^{\tau}_{t \mid x^0_{0:t-1}, u^0_{0:t-1}} &= \frac 1 \tau \int_0^\tau \mu^{\pi^{BR, s}_{0:t-1}}_{t \mid x^0_{0:t-1}, u^0_{0:t-1}} ds \\
    \bar \mu^{0, \tau}_t &= \frac 1 \tau \int_0^\tau \mu^{\pi^{0, BR, s}_{0:t-1}}_t ds
\end{aligned}
\end{equation}
\normalsize
where $\mu_{t \mid x^0_{0:t-1}, u^0_{0:t-1}}^{\pi^{BR,\tau}_{0:t-1}}$ and $\mu_t^{\pi^{0, BR,\tau}_{0:t-1}}$ are conditional and joint distributions respectively, induced by the BR policies $\pi^{BR,\tau}$ and $\pi^{0, BR,\tau}$ up to time $t-1$ against ${\mu}_{t \mid x^0_{0:t-1}, u^0_{0:t-1}}^{\bar{\pi}^\tau}(x)$ and ${\mu}_t^{\bar{\pi}^{0,\tau}}(x^0_t, x^0_{0:t-1}, u^0_{0:t-1})$. In other words, $\pi^{BR,\tau} :=\argmin_{\pi} J(\pi, \bar{\pi}^\tau, \bar{\pi}^{0,\tau})$ and $\pi^{0,BR,\tau} :=\argmin_{\pi^0} J^0(\bar{\pi}^\tau, {\pi}^{0})$. 

Note that the distributions induced by the averaged policies $\{\bar{\pi}^{\tau}_t\}_{t\in \mathcal{T}}$ and $\{\bar{\pi}^{0,\tau}_t\}_{t\in \mathcal{T}}$ for $\tau\geq1$ are given as
\small
\begin{equation}
    \begin{aligned}
    \label{eq:fict_play_policy_int}
        &\bar{\pi}^{\tau}_t(u|x, x^0_{0:t-1}, u^0_{0:t-1})\int_{s=0}^\tau \mu_{t|x^0_{0:t-1}, u^0_{0:t-1}}^{\pi^{BR, s}}(x) ds \\
        &=\int_{s=0}^\tau \mu_{t,|x^0_{0:t-1}, u^0_{0:t-1}}^{\pi^{BR, s}}(x) \pi_t^{BR, s}(u|x, x^0_{0:t-1}, u^0_{0:t-1}) ds,\\
        &\bar{\pi}^{0,\tau}_t(u^0|x^0, x^0_{0:t-1}, u^0_{0:t-1})\int_{s=0}^\tau \mu_{t}^{\pi^{0,BR, s}}(x^0, x^0_{0:t-1}, u^0_{0:t-1}) ds \\
        &=\int_{s=0}^\tau \mu_{t}^{\pi^{0,BR, s}}(x^0, x^0_{0:t-1}, u^0_{0:t-1})\\ 
        &\qquad \qquad \qquad \cdot \pi_t^{0,BR, s}(u^0|x^0, x^0_{0:t-1}, u^0_{0:t-1}) ds,\\
    \end{aligned}
\end{equation}
\normalsize
for all $t\in \mathcal{T}$ and $x^0_{0:t-1}, u^0_{0:t-1}$. For $s<1$, $\pi^{BR,s}$ and $\pi^{0,BR,s}$ are chosen arbitrarily. The proof and the differential form of equations~\eqref{eq:fict_play_cont_alg_int} and \eqref{eq:fict_play_policy_int} can be found in Appendix~\ref{app:fic_play}.

As a result, below we give a convergence analysis together with assumptions for continuous time FP, converging in both minor and major exploitability $\mathcal E(\bar{\pi}^\tau, \bar{\pi}^{0,\tau}) = \max_{\pi^{\prime}}  J(\pi^{\prime},\bar{\pi}^\tau, \bar{\pi}^{0,\tau}) - J(\bar{\pi}^\tau,\bar{\pi}^\tau, \bar{\pi}^{0,\tau})$, $\mathcal E^0(\bar{\pi}^\tau, \bar{\pi}^{0,\tau}) = \max_{\pi^{0\prime}} J^0(\bar{\pi}^\tau, \pi^{0\prime}) - J^0(\bar{\pi}^\tau, \bar{\pi}^{0,\tau})$, summarized as the total exploitability $\mathcal E_{\mathrm{tot}}(\bar{\pi}^\tau, \bar{\pi}^{0,\tau}) = \mathcal E(\bar{\pi}^\tau, \bar{\pi}^{0,\tau}) + \mathcal E^0(\bar{\pi}^\tau, \bar{\pi}^{0,\tau})$. 

\begin{assumption}
\label{as:fictitious_play_conv}
\begin{enumerate}[(a)]
    \item \label{as:fic_play_conv_transition}  The transition kernels are in the form of $P(x_{t+1} \mid x_t,u_t, x^0_t, u^0_t)$ and $P^0(x^0_{t+1} \mid x^0_t, u^0_t)$ for minor players and major player, respectively.
    \item \label{as:fic_play_conv_reward} The reward of minor and major players are separable, i.e. for some reward functions $\tilde r, \overline{r}, \tilde r^0, \hat r^0, \check r^0$, we have
    \begin{equation*}
        \begin{aligned}
            r(x, u, x^0, u^0, \mu) &=  \tilde{r}(x, x^0, u) + \overline{r}(x, x^0, \mu),\\
            r^0(x^0, u^0, \mu) &=  \tilde{r}^0(x^0, u^0) + \overline{r}^0(x^0, \mu).
        \end{aligned}
    \end{equation*}
    \item \label{as:fic_play_conv_explo_2} The game is monotone; i.e., satisfies Lasry-Lions monotonicity condition: For minor players, we have $\forall x^0 \in \mathcal{X}^0$, $\forall \mu, \mu^{\prime}$: $\sum_{x \in \mathcal{X}} (\mu(x) - \mu^{\prime}(x))(\overline{r} (x, x^0, \mu) - \overline{r} (x, x^0, \mu^{\prime}))\leq 0$. Meanwhile, for major players, we have $\frac{d}{d\tau} \mu^{\bar \pi^{0, \tau}_{0:t}}_{t+1}(x^0_{t+1}, x^0_{0:t}, u^0_{0:t}) \cdot \Big\langle \nabla_\mu \bar r^0(x^0_{t+1}, \mu^{\bar \pi^{\tau}_{0:t}}_{t+1 \mid x^0_{0:t}, u^0_{0:t}}), \frac{d}{d\tau} \mu^{\bar \pi^{\tau}_{0:t}}_{t+1 \mid x^0_{0:t}, u^0_{0:t}} \Big\rangle \leq 0$.
    \item \label{as:fic_play_conv_explo}  We have $\tilde {\mathcal E}(\bar \pi^\tau, \pi^{0, BR, \tau}, \bar \pi^{0,\tau}) \leq \mathcal E(\bar \pi^\tau, \bar \pi^{0,\tau})$, where we define $\tilde {\mathcal E}(\bar \pi^\tau, \pi^{0, BR, \tau}, \bar \pi^{0,\tau}) = J(\pi^{BR, \tau}, \bar \pi^\tau, \pi^{0, BR, \tau}) - J(\bar \pi^\tau, \bar \pi^\tau, \pi^{0, BR, \tau})$ with any BR policy given as $\pi^{BR, \tau} = \argmax_{\pi} J(\pi, \bar \pi^\tau, \bar \pi^{0,\tau})$.
    \end{enumerate}
\end{assumption}
\begin{remark}
    Assumption~\ref{as:fictitious_play_conv}.\ref{as:fic_play_conv_explo_2} is fulfilled for major players if $\overline{r}^0(x^0, \mu) = \overline{r}^0(x^0)$. Assumption~\ref{as:fictitious_play_conv}.\ref{as:fic_play_conv_explo} is satisfied for instance if $r(x, u, x^0, \mu) = r(x, u, \mu) $ and $P(x_{t+1} \mid x_t,u_t, x^0_t, u^0_t) = P(x_{t+1} \mid x_t,u_t)$. Then, we trivially have $\tilde {\mathcal E}(\bar \pi^\tau, \pi^{0, BR, \tau}, \bar \pi^{0,\tau}) = \mathcal E(\bar \pi^\tau, \bar \pi^{0,\tau})$ by obtaining a minor player MFG independent of the major player.
    \looseness=-1
\end{remark}
\begin{theorem}
\label{the:fictitiousplayconvergence}
    Under Assumption~\ref{as:fictitious_play_conv}, the total exploitability is a strong Lyapunov function such that $\frac{d}{d\tau}\mathcal E_{\mathrm{tot}}(\bar \pi^\tau, \bar \pi^{0,\tau}) \leq -\frac{1}{\tau}\mathcal E_{\mathrm{tot}}(\bar \pi^\tau, \bar \pi^{0,\tau})$; i.e., we have $\mathcal E_{\mathrm{tot}}(\bar \pi^\tau, \bar \pi^{0,\tau}) = \mathcal{O}(1/\tau)$ in the continuous time FP algorithm.
\end{theorem}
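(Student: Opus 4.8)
The plan is to establish the differential inequality $\frac{d}{d\tau}\mathcal E_{\mathrm{tot}}(\bar\pi^\tau,\bar\pi^{0,\tau}) \le -\frac1\tau\mathcal E_{\mathrm{tot}}(\bar\pi^\tau,\bar\pi^{0,\tau})$ for a.e.\ $\tau \ge 1$ and then integrate it: since $\tau\mapsto(\bar\pi^\tau,\bar\pi^{0,\tau})$ is locally Lipschitz when read through the induced occupation measures, $\tau\mapsto\mathcal E_{\mathrm{tot}}$ is locally Lipschitz, hence absolutely continuous and a.e.\ differentiable, so the inequality gives $\frac{d}{d\tau}(\tau\,\mathcal E_{\mathrm{tot}}) \le 0$ a.e., whence $\tau\,\mathcal E_{\mathrm{tot}}(\bar\pi^\tau,\bar\pi^{0,\tau}) \le \mathcal E_{\mathrm{tot}}(\bar\pi^1,\bar\pi^{0,1})$, i.e.\ $\mathcal E_{\mathrm{tot}} = \mathcal O(1/\tau)$ using $\mathcal E_{\mathrm{tot}}\ge 0$. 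We split $\mathcal E_{\mathrm{tot}} = \mathcal E + \mathcal E^0$ and prove $\frac{d}{d\tau}\mathcal E^0 \le -\frac1\tau\mathcal E^0$ and $\frac{d}{d\tau}\mathcal E \le -\frac1\tau\mathcal E$ separately; Assumptions~\ref{as:fictitious_play_conv}.\ref{as:fic_play_conv_explo_2} and \ref{as:fictitious_play_conv}.\ref{as:fic_play_conv_explo} are exactly the hypotheses that make the residual coupling terms in these two estimates have the correct sign. Throughout we use measurable selections of the best responses $\pi^{BR,\tau},\pi^{0,BR,\tau}$ and the two consequences of Assumptions~\ref{as:fictitious_play_conv}.\ref{as:fic_play_conv_transition}--\ref{as:fic_play_conv_reward}: (i) since $P,P^0$ are mean-field-independent, the occupation measures reachable by a minor (resp.\ major) player against a fixed opponent flow form a \emph{fixed} polytope; (ii) since the rewards are separable, $J(\cdot,\bar\pi^\tau,\bar\pi^{0,\tau})$ and $J^0(\bar\pi^\tau,\cdot)$ are affine in that occupation measure, with $\overline r,\overline r^0$ entering only through the \emph{opponent}'s state marginals. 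Hence the policy averaging in \eqref{eq:fict_play_policy_int} realizes the averaging of occupation measures, so for any fixed opponent flow $\nu$ one has $J(\bar\pi^\tau,\nu)=\frac1\tau\int_0^\tau J(\pi^{BR,s},\nu)\,ds$ (similarly for the major), and $\frac{d}{d\tau}\bar\mu^{\tau}_{t\mid x^0_{0:t-1},u^0_{0:t-1}}=\frac1\tau\big(\mu^{\pi^{BR,\tau}}_{t\mid x^0_{0:t-1},u^0_{0:t-1}}-\bar\mu^{\tau}_{t\mid x^0_{0:t-1},u^0_{0:t-1}}\big)$, analogously for $\bar\mu^{0,\tau}_t$.

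For the major exploitability $\mathcal E^0(\tau)=\max_{\pi^0}J^0(\bar\pi^\tau,\pi^0)-J^0(\bar\pi^\tau,\bar\pi^{0,\tau})$, differentiate the first term by the envelope theorem at the maximizer $\pi^{0,BR,\tau}$: its implicit dependence on $\pi^{0,BR,\tau}$ drops, leaving only the explicit dependence of $J^0$ on $\bar\pi^\tau$ through the conditional flow, a term $\E^{\pi^{0,BR,\tau}}\big[\sum_t\langle\nabla_\mu\overline r^0(x^0_t,\mu_t),\frac{d}{d\tau}\mu_t\rangle\big]$ with $\mu_t=\mu^{\bar\pi^\tau}_{t\mid x^0_{0:t-1},u^0_{0:t-1}}$. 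Differentiating the second term, its explicit $\bar\pi^{0,\tau}$-dependence gives, by affinity and the averaging property, $\frac1\tau\big(J^0(\bar\pi^\tau,\pi^{0,BR,\tau})-J^0(\bar\pi^\tau,\bar\pi^{0,\tau})\big)=\frac1\tau\mathcal E^0(\tau)$, while its explicit $\bar\pi^\tau$-dependence gives the analogous inner-product term with expectation under $\bar\pi^{0,\tau}$. Subtracting, the difference of the two $\nabla_\mu\overline r^0$-terms becomes, after writing $\mu^{\pi^{0,BR,\tau}}_t-\mu^{\bar\pi^{0,\tau}}_t=\tau\frac{d}{d\tau}\mu^{\bar\pi^{0,\tau}}_t$, exactly $\tau$ times the quantity required to be nonpositive in Assumption~\ref{as:fictitious_play_conv}.\ref{as:fic_play_conv_explo_2} (summed over $t$); hence $\frac{d}{d\tau}\mathcal E^0(\tau)\le-\frac1\tau\mathcal E^0(\tau)$.

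For the minor exploitability $\mathcal E(\tau)=\max_\pi J(\pi,\bar\pi^\tau,\bar\pi^{0,\tau})-J(\bar\pi^\tau,\bar\pi^\tau,\bar\pi^{0,\tau})$, proceed symmetrically, working pathwise in the major trajectory $(x^0_{0:t-1},u^0_{0:t-1})$ (along which the minor mean field is deterministic) and then integrating: the envelope theorem removes the implicit dependence through the best responder; affinity plus the averaging property extract a $-\frac1\tau\big(J(\pi^{BR,\tau},\bar\pi^\tau,\bar\pi^{0,\tau})-J(\bar\pi^\tau,\bar\pi^\tau,\bar\pi^{0,\tau})\big)$ term from the self-play value; and the surviving cross term has the Lasry--Lions form $\sum_{x^0}\sum_x\big(\mu^{\pi^{BR,\tau}}_t-\bar\mu^{\tau}_t\big)(x)\big(\overline r(x,x^0,\mu^{\pi^{BR,\tau}}_t)-\overline r(x,x^0,\bar\mu^{\tau}_t)\big)$ integrated over the major randomness, nonpositive by Assumption~\ref{as:fictitious_play_conv}.\ref{as:fic_play_conv_explo_2}. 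The new difficulty over common-noise MFGs is that the minor population best-responds against the \emph{averaged} major policy $\bar\pi^{0,\tau}$ rather than a best-responding one, so the extracted $-\frac1\tau(\cdot)$ term is not immediately $-\frac1\tau\mathcal E(\tau)$; Assumption~\ref{as:fictitious_play_conv}.\ref{as:fic_play_conv_explo}, $\tilde{\mathcal E}(\bar\pi^\tau,\pi^{0,BR,\tau},\bar\pi^{0,\tau})\le\mathcal E(\bar\pi^\tau,\bar\pi^{0,\tau})$, is precisely what permits replacing, inside the estimate, the minor exploitability seen against $\bar\pi^{0,\tau}$ by the one seen against $\pi^{0,BR,\tau}$, after which the minor bound closes as $\frac{d}{d\tau}\mathcal E(\tau)\le-\frac1\tau\mathcal E(\tau)$. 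Summing the two inequalities yields the stated Lyapunov property and thus $\mathcal E_{\mathrm{tot}}=\mathcal O(1/\tau)$.

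The step I expect to be the main obstacle is the bookkeeping in the two envelope-theorem computations: for both $J$ and $J^0$ one must cleanly separate the $\tau$-dependence entering through the best-responding policy (killed by optimality), through the opponent policy used for averaging (which, by affinity, produces the Lyapunov term $\pm\frac1\tau(\cdot)$), and through the \emph{conditional} mean field $\mu^{\bar\pi^\tau}_{t\mid x^0_{0:t-1},u^0_{0:t-1}}$ -- all while the mean field is a random but pathwise-deterministic object, so every identity must first be proved pathwise in $(x^0_{0:t-1},u^0_{0:t-1})$ and then integrated over the law of the major trajectory (itself an affine image of the averaged major occupation measure). Matching the signs of the residual coupling terms to Assumptions~\ref{as:fictitious_play_conv}.\ref{as:fic_play_conv_explo_2}--\ref{as:fic_play_conv_explo} exactly is the crux; the concluding Grönwall/integration step is routine.
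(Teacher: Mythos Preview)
Your plan follows the paper's approach and is essentially correct: split $\mathcal E_{\mathrm{tot}}=\mathcal E+\mathcal E^0$, apply the envelope theorem at the best responses, use affinity (from separability) together with the averaging identity $\frac{d}{d\tau}\bar\mu^\tau=\frac1\tau(\mu^{BR,\tau}-\bar\mu^\tau)$ (proved as Properties~\ref{prop:1}--\ref{prop:2} in the paper) to extract the Lyapunov terms, and control the residual cross terms by the two monotonicity hypotheses.

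One point is misdescribed and would trip you up when executing the minor-exploitability step. Assumption~\ref{as:fictitious_play_conv}.\ref{as:fic_play_conv_explo} is \emph{not} there to repair the $-\frac1\tau\mathcal E$ term you (correctly) extract from the self-play averaging in the first slot of $J$; that term really is $-\frac1\tau\mathcal E$ already. The point you are missing is that both $J(\pi^{BR,\tau},\bar\pi^\tau,\bar\pi^{0,\tau})$ and $J(\bar\pi^\tau,\bar\pi^\tau,\bar\pi^{0,\tau})$ also depend on $\tau$ through the \emph{major} occupation measure $\mu^{\bar\pi^{0,\tau}}_t$ that weights the histories $(x^0_{0:t-1},u^0_{0:t-1})$. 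Differentiating this factor via $\frac{d}{d\tau}\mu^{\bar\pi^{0,\tau}}_t=\frac1\tau\big(\mu^{\pi^{0,BR,\tau}}_t-\mu^{\bar\pi^{0,\tau}}_t\big)$ produces an \emph{additional} term
\[
\tfrac1\tau\Big(\tilde{\mathcal E}(\bar\pi^\tau,\pi^{0,BR,\tau},\bar\pi^{0,\tau})-\mathcal E(\bar\pi^\tau,\bar\pi^{0,\tau})\Big),
\]
and Assumption~\ref{as:fictitious_play_conv}.\ref{as:fic_play_conv_explo} is exactly the inequality $\tilde{\mathcal E}\le\mathcal E$ that makes this additional piece nonpositive. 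The full minor estimate in the paper therefore reads
\[
\tfrac{d}{d\tau}\mathcal E=\tfrac1\tau(\tilde{\mathcal E}-\mathcal E)+[\text{Lasry--Lions term}\le 0]-\tfrac1\tau\mathcal E\le-\tfrac1\tau\mathcal E,
\]
with \emph{two} copies of $-\frac1\tau\mathcal E$ appearing before the final bound, not one. Once you track this extra derivative through the major history law, your outline matches the paper exactly.
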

The proof of Theorem~\ref{the:fictitiousplayconvergence} can be found in Appendix~\ref{app:fic_play} 
and is based on a monotonic decrease of exploitability, at the same rate as standard FP in MFGs \citep{perrin2020fictitious}.

In numerical experiments, for applicability and computational tractability (due to the exponential complexity of the histories in the horizon), we condition policies on the random MF and major state instead of the histories, averaging policies uniformly instead of for each possible major state-action sequence. Further, numerically we partition and represent the (naturally continuous) MFs as described in the following, to obtain tabular Algorithm~\ref{alg:2}. Experimentally, in Section~\ref{sec:exp} we nonetheless find that the algorithm optimizes exploitability, even if Assumption~\ref{as:fictitious_play_conv} is not fully satisfied. The dependence of policy actions on the MF and major state has the additional advantage of allowing standard dynamic programming for major and minor MDPs, as their full MDP states include both the MF and major state.

\begin{algorithm}[b!]
    \caption{Discrete-time, projected fictitious play}
    \label{alg:2}
    \begin{algorithmic}[1]
        \STATE Input: $\delta$-partition $\{ \mathcal P_i \}_{i=1\ldots,M}$.
        \STATE Initialize initial policies $\bar \pi_{(0)}$, $\bar \pi^0_{(0)}$.
        \FOR {iteration $n = 0, 1, 2, \ldots$}
            \STATE Compute discretized BR (as in Definition~\ref{def:delta-m3fne})
            \small
            \begin{align*}
                \pi_{(n+1)} &\in \argmax \hat Q_{\bar \pi_{(n)}, \bar \pi^0_{(n)}}, \\ 
                \pi^0_{(n+1)} &\in \argmax \hat Q^0_{\bar \pi_{(n)}, \bar \pi^0_{(n)}}.
            \end{align*}
            \normalsize
            \STATE Compute next average policies 
            \small
            \begin{align*}
                \bar \pi_{(n+1)} &\coloneqq \frac n {n+1} \bar \pi_{(n)} + \frac 1 {n+1} \pi_{(n+1)}, \\
                \bar \pi^0_{(n+1)} &\coloneqq \frac n {n+1} \bar \pi^0_{(n)} + \frac 1 {n+1} \pi^0_{(n+1)}.
            \end{align*}
            \normalsize
        \ENDFOR
    \end{algorithmic}
\end{algorithm}

\subsection{Projected mean field}
Observe that for given current MF and major state-actions, we obtain deterministic transitions from one MF to the next. Therefore, by partitioning we can obtain deterministic transitions in-between parts of a partition of $\mathcal P(\mathcal X)$, and a Bellman equation over \textit{finite} spaces.

\begin{definition}
    A $\delta$-partition $\mathcal M = \{ \mathcal P_i \}_{i \in [|\mathcal M|]}$ is a partition of $\mathcal P(\mathcal X)$, with $\lVert \mu - \nu \rVert < \delta$ for any $i \in [|\mathcal M|]$, $\mu, \nu \in \mathcal P_i$.
\end{definition}

Since $\mathcal P(\mathcal X)$ is compact, a finite $\delta$-partition of $\mathcal P(\mathcal X)$ exists for any $\delta > 0$. We will henceforth assume for any $\delta > 0$ some $\delta$-partition $\mathcal M$ of $\mathcal P(\mathcal X)$ with $M = M(\delta)$ parts.

\subsubsection{Discretized finite MDPs.}
To each part $\mathcal P_i$, we associate an arbitrary element $\hat \mu^{(i)} \in \mathcal P_i$ and write $\mathrm{proj}_\delta \mu$ for the $\delta$-partition projection of MFs $\mu \in \mathcal P(\mathcal X)$, i.e. whenever $\mu \in \mathcal P_i$ we project to the representative $\mathrm{proj}_\delta \mu = \hat \mu^{(i)} \in \mathcal P_i$. 

As a result, we obtain discretized, \textit{finite} MDP versions of the major and minor player MDPs, where the continuous MF state is replaced by finitely many states in $\hat{\mathcal P}(\mathcal X) \coloneqq \{ \hat \mu^{(1)}, \ldots, \hat \mu^{(M)} \}$, evolving by discretized MF evolutions in \eqref{eq:m3fg-min}, i.e. $\hat \mu_{t+1} = \mathrm{proj}_\delta T^\pi_t(x^0, u^0, \hat \mu_t)$ for any $x^0, u^0, \hat \mu_t$.

We can solve the discretized MDPs in a tabular manner: To compute best responses under policies $(\pi, \pi^0)$, observe that the true action-value function $Q^0_{\pi, \pi^0}$ of the (not discretized) major player MDP follows the Bellman equation
\small
\begin{multline*}
    Q^0_{\pi, \pi^0}(t, x^0, u^0, \mu)= r^0(x^0, u^0, \mu)
    + \sum_{x^{0\prime}} P^0(x^{0\prime} \mid x^0, u^0, \mu) \\
    \cdot \max_{u^{0\prime}} Q^0_{\pi, \pi^0}(t+1, x^{0\prime}, u^{0\prime}, T^\pi_t(x^0, u^0, \mu)).
\end{multline*}
\normalsize
The tabular approximate action-value function $\hat Q^0_{\pi, \pi^0}$ for the major player follows instead the Bellman equation of the discretized major player MDP (letting the domain of $\hat Q_{\pi, \pi^0}$ be the entirety of $\mathcal P(\mathcal X)$ as constants over each part $\mathcal P_i$), 
\small
\begin{multline*}
    \hat Q^0_{\pi, \pi^0}(t, x^0, u^0, \mu) = \hat Q^0_{\pi, \pi^0}(t, x^0, u^0, \mathrm{proj}_\delta \mu) \\
    = r^0(x^0, u^0, \mathrm{proj}_\delta\mu)
    + \sum_{x^{0\prime}} P^0(x^{0\prime} \mid x^0, u^0, \mathrm{proj}_\delta\mu) \\
    \cdot \max_{u^{0\prime}} \hat Q^0_{\pi, \pi^0}(t+1, x^{0\prime}, u^{0\prime}, T^\pi_t(x^0, u^0, \mathrm{proj}_\delta\mu))
\end{multline*}
\normalsize
with terminal condition zero, and the minor action-values analogously. The above can nonetheless provide a good approximation that can be computed in \textit{tabular} form, see Appendix~\ref{app:theo} and empirical support in Section~\ref{sec:exp}.

\subsubsection{Discretized equilibria.}
Building upon the preceding approximations, we define an approximate equilibrium as a fixed point of the discretized system. 

\begin{definition} \label{def:delta-m3fne}
    A $\delta$-partition M3FNE is a tuple $(\pi, \pi^0) \in \hat \Pi \times \hat \Pi^0$ with $\pi \in \argmax \hat Q_{\pi, \pi^0}$ and $\pi^0 \in \argmax \hat Q^0_{\pi, \pi^0}$ where policies in $\hat \Pi, \hat \Pi^0$ are instead defined as blockwise constant over each part $\mathcal P_i$ of the $\delta$-partition.
\end{definition}

Here, we understand $\hat \pi \in \argmax \hat Q_{\pi, \pi^0}$ by the defining equation $\sum_{u \in \argmax_{u'} \hat Q_{\pi, \pi^0}(t, x, u', x^0, \hat \mu)} \hat \pi_t(x, x^0, \hat \mu, u) = 1$ for all $(t, x, x^0, \hat \mu) \in \mathcal T \times \mathcal X \times \mathcal X^0 \times \hat{\mathcal P}(\mathcal X)$, and similarly for major players, noting that $\hat \pi$ optimizes the preceding discretized finite MDP \citep{hernandez2012discrete}. 

We note that while the discretized solutions only piecewise fulfill Assumption~\ref{ass:m3picont} by not being Lipschitz, in Section~\ref{sec:exp} we empirically find that the approximation of finite games and exploitability can nonetheless be accurate.

\subsubsection{Approximation guarantees.}
We evaluate solutions by tabular evaluation in the discretized MDP, for which we are able to obtain theoretical guarantees for evaluating the true exploitability via the approximate tabular exploitability. Under a $\delta$-partition, define the major approximate objective
\small
\begin{align*}
    \hat J^0(\pi, \pi^0) \coloneqq \sum_{x^0} \mu^0_0(x^0) \hat V^{0, \pi^0}_{\pi, \pi^0}(0, x^0, \mu_0)
\end{align*} 
\normalsize
and approximate exploitability
\small
\begin{multline*}
    \hat{\mathcal E}^0(\pi, \pi^0) \coloneqq \sum_{x^0} \mu^0_0(x^0) \\
    \cdot \Big( \max_{\hat \pi^{0\prime} \in \hat \Pi} \hat V^{0, \hat \pi^{0\prime}}_{\pi, \pi^0}(0, x^0, \mu_0)) - \hat V^{0, \pi^0}_{\pi, \pi^0}(0, x^0, \mu_0) \Big),
\end{multline*}
\normalsize
with approximate values $\hat V_{\pi, \pi^0}^{0, \hat \pi^0}$ of major deviation under $({\pi, \pi^0})$ to $\hat \pi^0$, following the ``discretized'' Bellman equation
\small
\begin{multline*}
    \hat V_{\pi, \pi^0}^{0, \hat \pi^0}(t, x^0, \mu) = \sum_{u^{0\prime}} \hat \pi^0_t(u^{0\prime} \mid x^{0\prime}, \mathrm{proj}_\delta\mu) \\
    \Big[ r^0(x^0, u^0, \mathrm{proj}_\delta\mu) + \sum_{x^{0\prime}} P^0(x^{0\prime} \mid x^0, u^0, \mathrm{proj}_\delta\mu) \\
    \hat V_{\pi, \pi^0}^{0, \hat \pi^0}(t+1, x^{0\prime}, u^{0\prime}, T^\pi_t(x^0, u^0, \mathrm{proj}_\delta\mu)) \Big],
\end{multline*}
\normalsize
and similarly for the minor player. Note that only for the major player, $\pi^0$ is irrelevant (replaced by $\hat \pi^0$). In other words, we approximate values and exploitability via the discretized finite MDPs, which has the advantage of enabling dynamic programming (backwards induction, value iteration). 

By analyzing the value functions under continuity, we show in Appendix~\ref{app:disc-opt} that these approximations are generally close to the true objectives and exploitabilities respectively, as the discretization becomes sufficiently fine.

\begin{theorem} \label{thm:disc-opt}
Under Assumptions~\ref{ass:m3pcont}, \ref{ass:m3rcont}, \ref{ass:m3picont}, as $\delta \to 0$, approximate minor and major values tend to the exact values, and approximate exploitabilities tend to the exact exploitabilities, at rate $\mathcal O(\delta)$ uniformly over $(\pi, \pi^0) \in \Pi \times \Pi^0$.
\end{theorem}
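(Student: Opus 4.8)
The plan is to bound, uniformly over $(\pi,\pi^0)\in\Pi\times\Pi^0$, the sup-norm difference between the true (undiscretized) value functions and their discretized tabular counterparts, then propagate this bound to the objectives and exploitabilities. I would work backwards in time via the Bellman equations, exactly as in standard value-iteration error analysis, with the discretization error entering through the projection $\mathrm{proj}_\delta$. The key observation is that for any $\mu$ and any part $\mathcal P_i\ni\mu$, we have $\lVert \mu - \mathrm{proj}_\delta\mu\rVert < \delta$ by the definition of a $\delta$-partition.

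First I would establish a one-step Lipschitz estimate for the exact major value function $V^{0,\hat\pi^0}_{\pi,\pi^0}(t,\cdot)$ in its MF argument, uniformly in $t$, $x^0$, and in $(\pi,\pi^0,\hat\pi^0)$. This follows by backward induction on $t$: using Assumption~\ref{ass:m3rcont} ($r^0$ is $L_{r^0}$-Lipschitz), Assumption~\ref{ass:m3pcont} ($P^0$ is $L_{P^0}$-Lipschitz), the Lipschitz continuity of $T^\pi_t$ in $\mu$ (which is referenced as Appendix~\ref{app:lem:Tcont} and follows from Assumptions~\ref{ass:m3pcont} and \ref{ass:m3picont}), and Assumption~\ref{ass:m3picont} for the policy $\hat\pi^0$, one gets a recursion of the form $L^0_t \le L_{r^0} + L_{r^0}L_{\Pi^0} + (\text{const})\cdot L^0_{t+1}$, closing with $L^0_T=0$ over the finite horizon $T$. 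This yields a finite Lipschitz constant $L_V^0$ valid for all $t$ and all policies. The analogous statement holds for the minor value function.

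Next I would compare $V^{0,\hat\pi^0}_{\pi,\pi^0}(t,x^0,\mu)$ with $\hat V^{0,\hat\pi^0}_{\pi,\pi^0}(t,x^0,\mu)$ directly. Writing $\Delta_t \coloneqq \sup_{x^0,\mu,(\pi,\pi^0,\hat\pi^0)} |V^{0,\hat\pi^0}_{\pi,\pi^0}(t,x^0,\mu) - \hat V^{0,\hat\pi^0}_{\pi,\pi^0}(t,x^0,\mu)|$, subtract the two Bellman equations. The discretized one evaluates $r^0$, $P^0$, $\hat\pi^0$, and $T^\pi_t$ at $\mathrm{proj}_\delta\mu$ rather than $\mu$, each contributing an $\mathcal O(\delta)$ error via the respective Lipschitz constants and the Lipschitz bound just derived; the next-step value term additionally picks up $\Delta_{t+1}$ plus the error from $T^\pi_t(x^0,u^0,\mathrm{proj}_\delta\mu)$ being fed into $\hat V^{0,\hat\pi^0}_{\pi,\pi^0}(t+1,\cdot)$ shifted by $\mathrm{proj}_\delta$ again. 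This gives $\Delta_t \le \Delta_{t+1} + C\delta$ for a constant $C$ depending only on the Lipschitz constants and $T$, hence $\Delta_0 = \mathcal O(\delta)$ uniformly. Taking $\hat\pi^0=\pi^0$ and averaging over $\mu^0_0$ gives $|J^0(\pi,\pi^0)-\hat J^0(\pi,\pi^0)| = \mathcal O(\delta)$; taking the supremum over $\hat\pi^0\in\hat\Pi$ on both sides (noting $\hat\Pi\subseteq\Pi$ up to the blockwise-constant restriction, and that the true optimum over $\Pi$ is approached by blockwise-constant policies up to $\mathcal O(\delta)$ by the Lipschitz bound) controls the difference of the maximized terms, so $|\mathcal E^0(\pi,\pi^0)-\hat{\mathcal E}^0(\pi,\pi^0)| = \mathcal O(\delta)$. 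The minor player case is identical in structure.

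The main obstacle is the exploitability comparison rather than the value comparison: the approximate exploitability maximizes over the restricted class $\hat\Pi$ of blockwise-constant policies, whereas the true exploitability maximizes over all of $\Pi$, so one must argue that restricting to $\hat\Pi$ costs at most $\mathcal O(\delta)$ in achievable value. This requires showing that for any $\pi'\in\Pi$ there is a blockwise-constant $\hat\pi'\in\hat\Pi$ (e.g. the one agreeing with $\pi'$ at each representative $\hat\mu^{(i)}$) whose true value is within $\mathcal O(\delta)$ — which again follows from the uniform Lipschitz bound on $V$ in $\mu$ together with $\lVert\mu-\mathrm{proj}_\delta\mu\rVert<\delta$ — and then combining with $\Delta_0 = \mathcal O(\delta)$. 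One must be careful that all constants are independent of $(\pi,\pi^0)$, which holds because every Lipschitz constant invoked (from Assumptions~\ref{ass:m3pcont}–\ref{ass:m3picont}) is uniform over the policy classes, and the horizon $T$ is fixed and finite.
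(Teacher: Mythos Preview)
Your backward-induction argument for the value and objective convergence matches the paper's approach, which is delegated there to a Q-function lemma (Lemma~\ref{lem:Qconv}) proved the same way and then noted to carry over to the policy-evaluation value functions with one extra Lipschitz term. The genuine difference is in the exploitability step. You control $|\sup_{\pi'\in\Pi}V^{\pi'} - \sup_{\hat\pi'\in\hat\Pi}\hat V^{\hat\pi'}|$ by approximating each Lipschitz $\pi'$ with a blockwise-constant $\hat\pi'$ and bounding the resulting value perturbation. The paper instead sidesteps the policy-class mismatch entirely: by standard MDP theory it identifies $\sup_{\pi'\in\Pi}V^{\pi'}_{\pi,\pi^0} = V_{\pi,\pi^0}\coloneqq \max_u Q_{\pi,\pi^0}$ and $\sup_{\hat\pi'\in\hat\Pi}\hat V^{\hat\pi'}_{\pi,\pi^0} = \hat V_{\pi,\pi^0}\coloneqq \max_u \hat Q_{\pi,\pi^0}$, so the exploitability difference reduces directly to the already-established $\sup|\hat Q_{\pi,\pi^0} - Q_{\pi,\pi^0}| = \mathcal O(\delta)$. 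This is cleaner because it never needs to reason about how well $\hat\Pi$ approximates $\Pi$; your route is more constructive but requires the extra approximation argument.

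One point to tighten in your route: the recursion for $\Delta_t$ implicitly uses the Lipschitz constant of the deviating policy $\hat\pi^0$ (to control $\hat\pi^0(\cdot\mid x^0,\mu)$ versus $\hat\pi^0(\cdot\mid x^0,\mathrm{proj}_\delta\mu)$), yet in the exploitability argument you then apply the bound with $\hat\pi'\in\hat\Pi$, which is blockwise constant and not Lipschitz. This is harmless in the end --- for blockwise-constant $\hat\pi'$ that particular term is identically zero --- but it means $\Delta_t$ should be derived so that it covers both $\hat\pi^0\in\Pi^0$ and $\hat\pi^0\in\hat\Pi^0$, or at least flagged that the estimate holds in both cases for different reasons.
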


\section{Experiments} \label{sec:exp}
We evaluate FP by comparing against fixed-point iteration (FPI), which iterates discretized best response policies. For reproducibility, note that the algorithms used are deterministic, and details can be found in the appendix.\footnote{For code, see \url{https://github.com/tudkcui/M3FG-learning}}
\begin{figure}[b]
    \centering
    \includegraphics[width=0.99\linewidth]{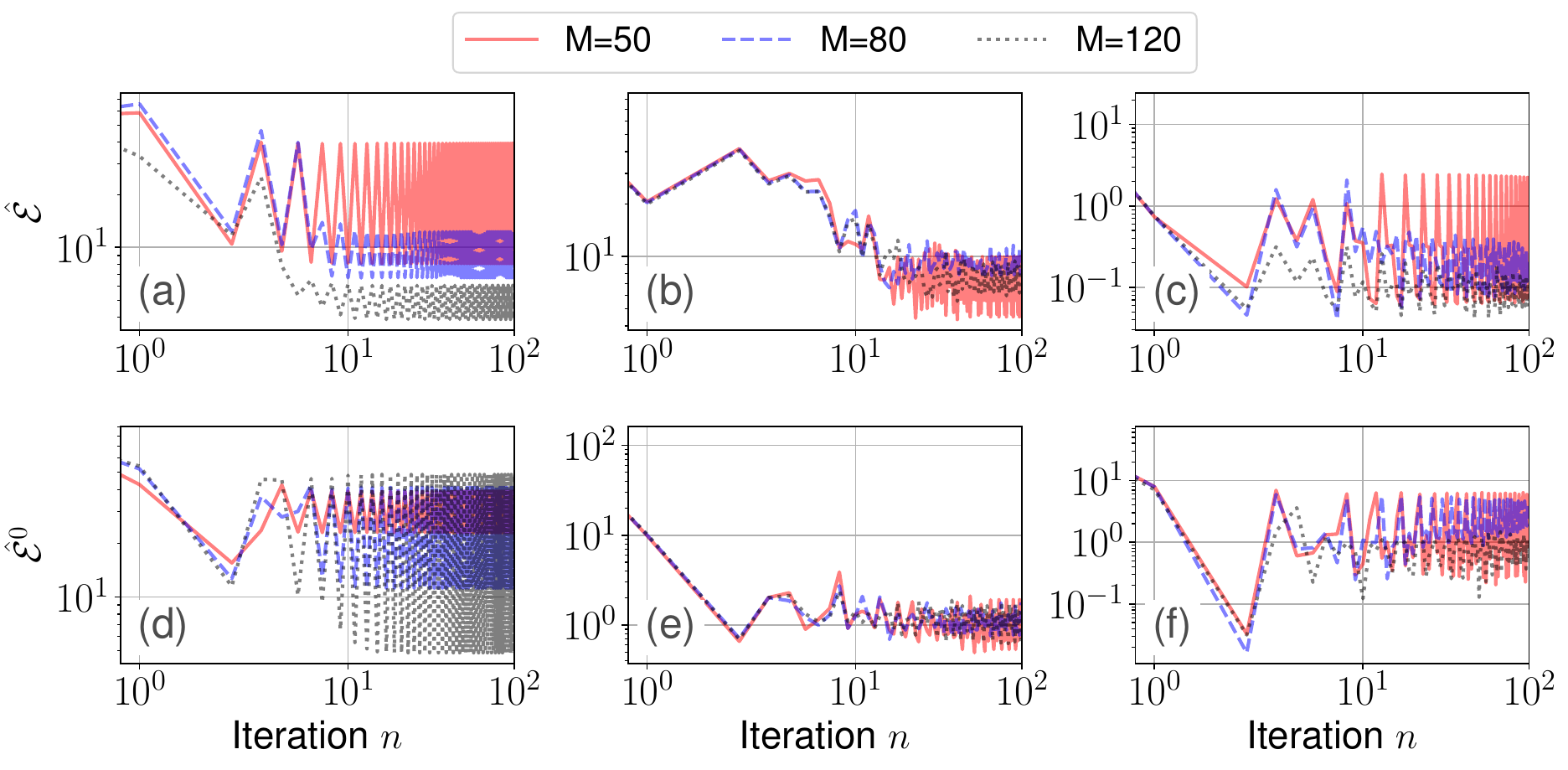}
    \caption{The approximate exploitability oscillates over iterations of FPI. (a, d): SIS, (b, e): Buffet, (c, f): Advertisement. (a-c): Minor exploitability, (d-f): major exploitability.}
    \label{fig:exploitability-fpi}
\end{figure}

\begin{figure}[t]
    \centering
    \includegraphics[width=0.99\linewidth]{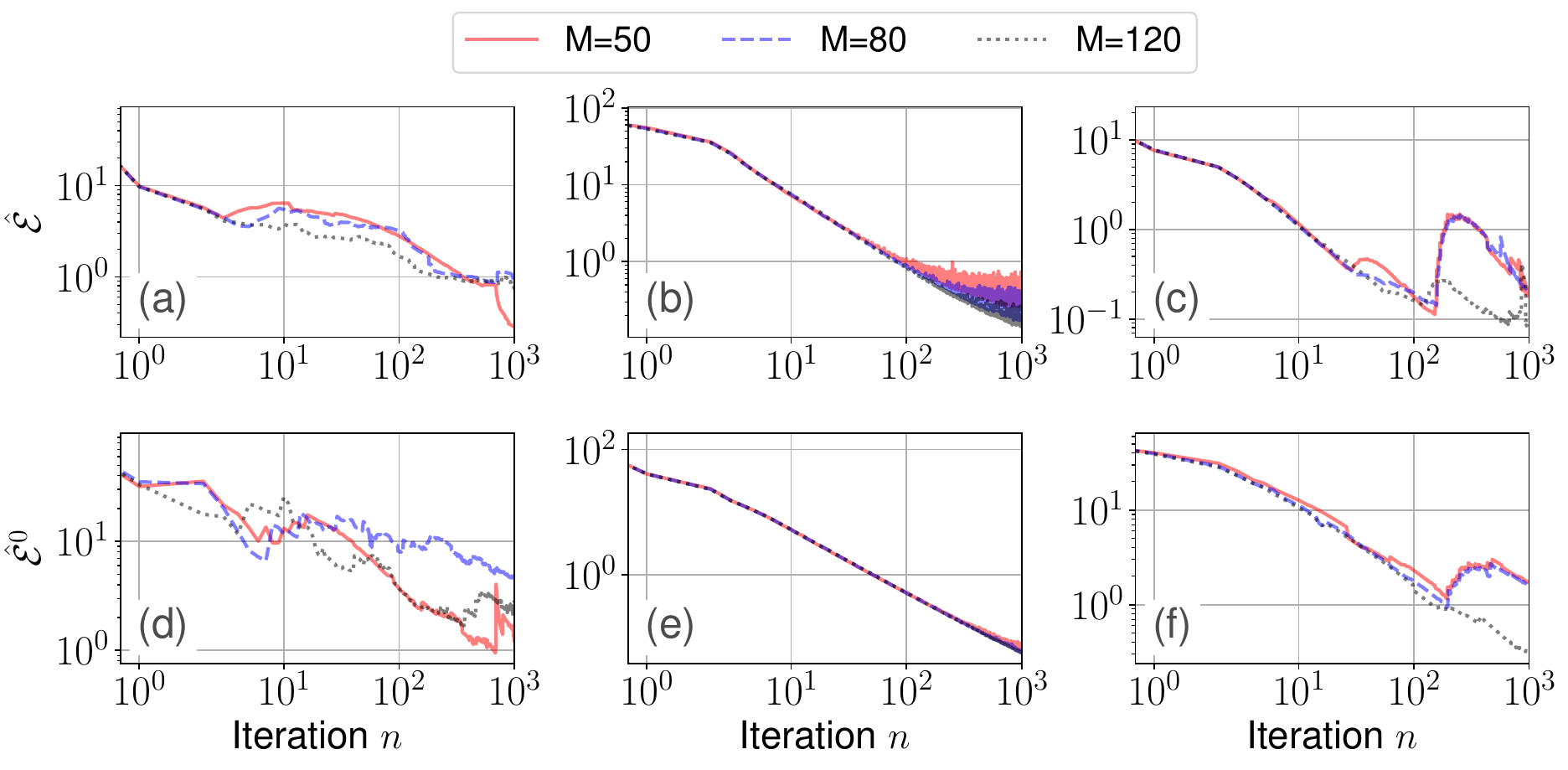}
    \caption{The approximate exploitability is optimized via FP. (a, d): SIS, (b, e): Buffet, (c, f): Advertisement. (a-c): Minor exploitability, (d-f): major exploitability.}
    \label{fig:exploitability}
\end{figure}

\begin{figure}[b]
    \centering
    \includegraphics[width=0.99\linewidth]{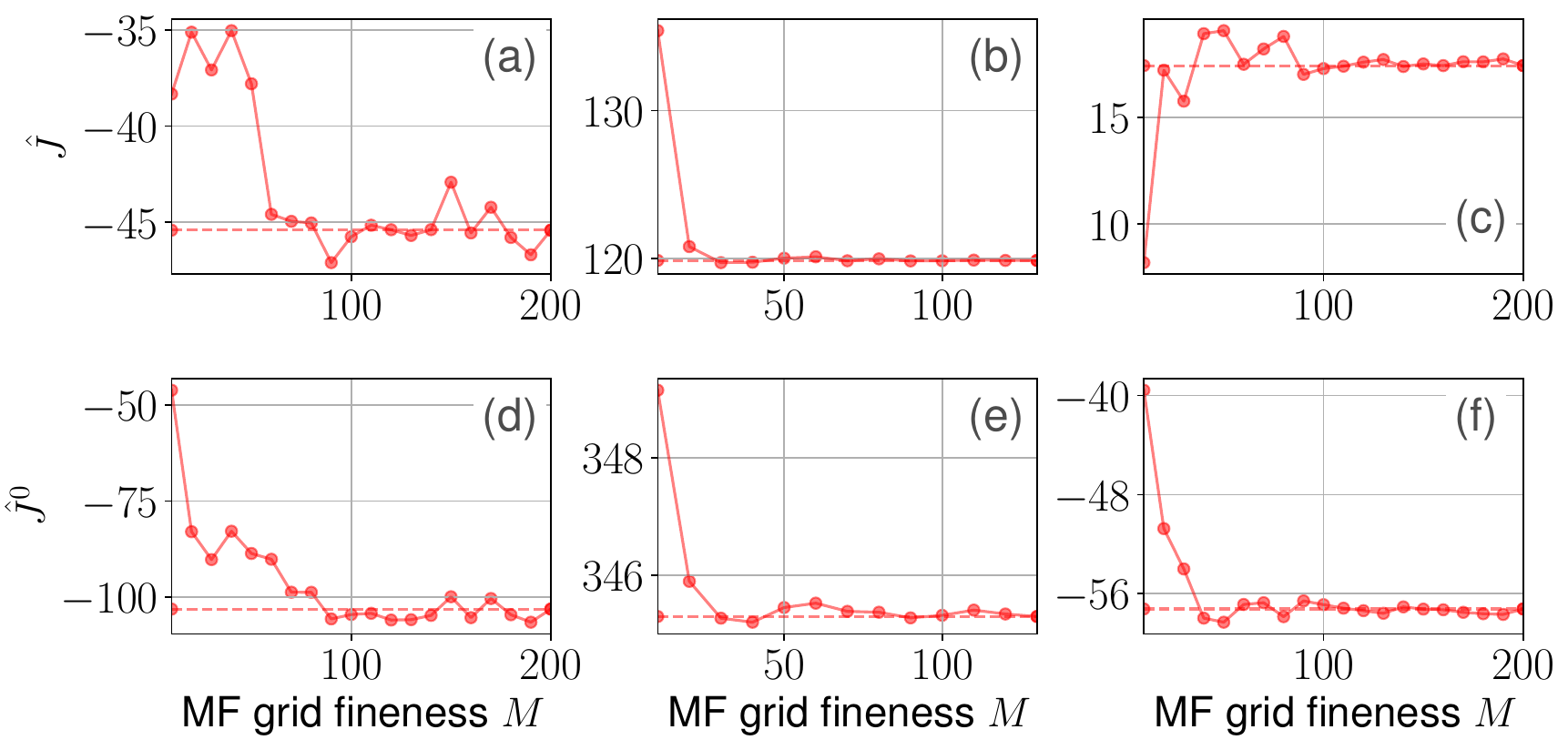}
    \caption{The final objectives of FP under discretization (dashed: right-most entry) are stable with high discretization. (a, d): SIS, (b, e): Buffet, (c, f): Advertisement. (a-c): Minor exploitability, (d-f): major exploitability.}
    \label{fig:J_discretization}
\end{figure}

\begin{figure}[t]
    \centering
    \includegraphics[width=0.99\linewidth]{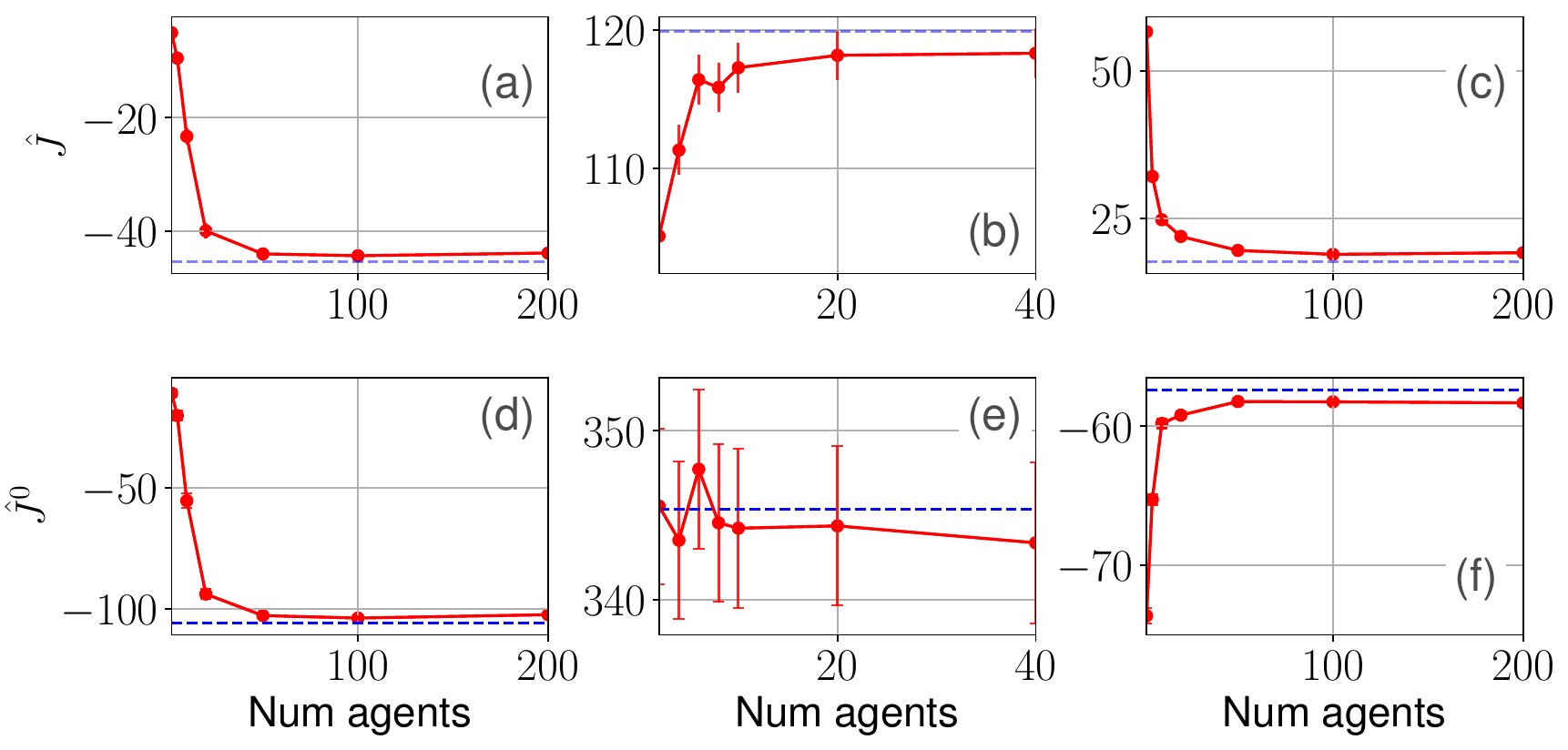}
    \caption{The mean $N$-player objective (red) over $1000$ (or $5000$ for Buffet) episodes, with $95\%$ confidence interval, against MF predictions $\hat J$, $\hat J^0$ for FP and $M=120$ (blue, dashed). (a, d): SIS, (b, e): Buffet, (c, f): Advertisement.}
    \label{fig:J_num_agents}
\end{figure}

\begin{figure}[b!]
    \centering
    \includegraphics[width=0.99\linewidth]{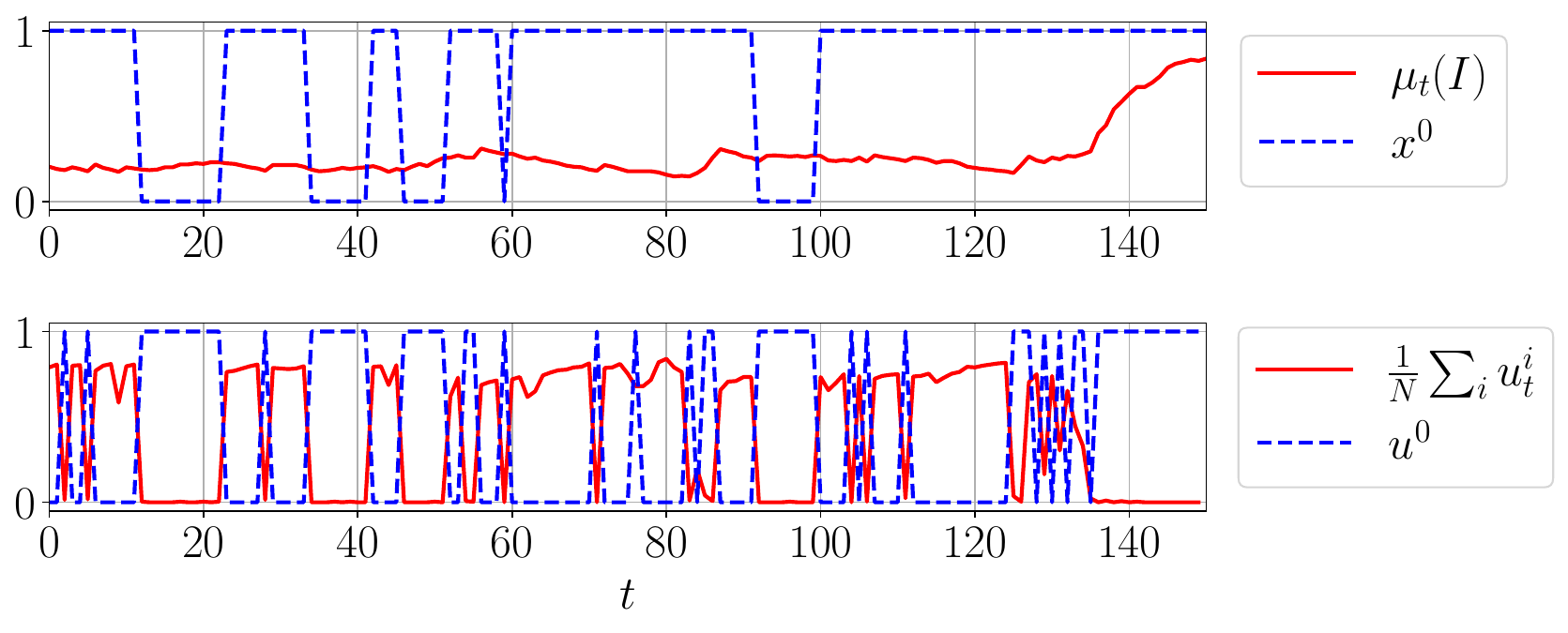}
    \includegraphics[width=0.99\linewidth]{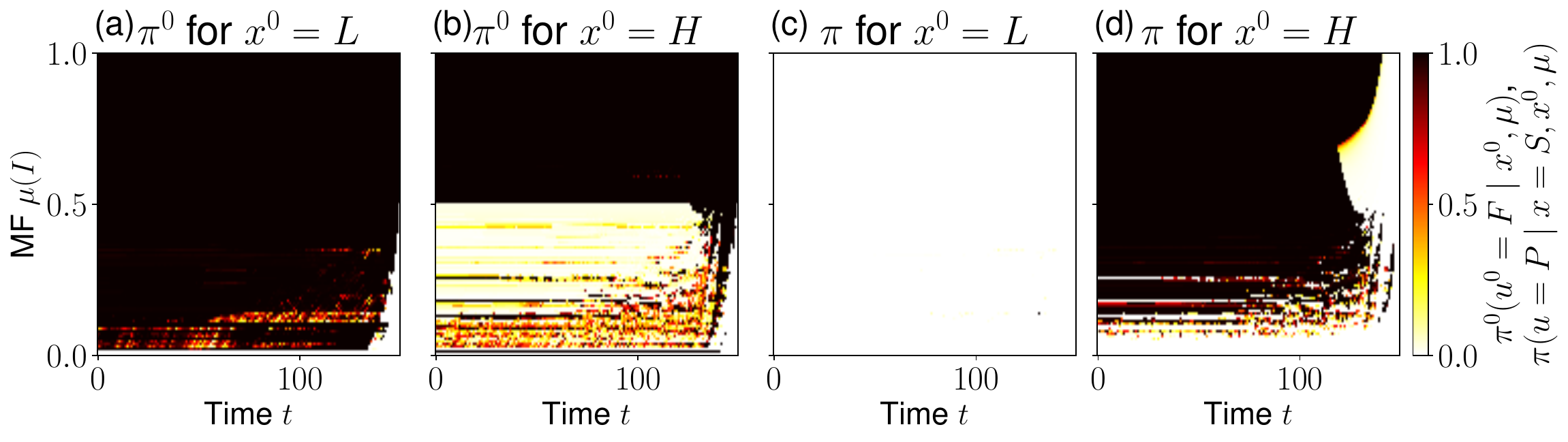}
    \caption{The FP-learned M3FNE in SIS, for $M=120$. Top: example trajectory (for visualization, $L=\bar P=\bar F=0, H=P=F=1$, see Appendix~\ref{app:exp}); bottom: policy.}
    \label{fig:qual}
\end{figure}

\subsection{Problems}
For the evaluation, we use the following problem instances for exemplary, practically applicable M3FG scenarios.

\subsubsection{SIS epidemics control.}
The SIS problem is an epidemics control scenario, where each individualistic minor player may decide whether to take costly preventative actions against becoming infected at a rate proportional to the proportion of infected. The major player (e.g. government) is responsible for the well-being of minor players, and can encourage preventative actions, while its state models random low- and high-infectivity seasons. The finite time horizon can be considered the time until a cure is found. The original problem without major players has been used as a benchmark for MFG learning \citep{cui2021approximately, lauriere2022scalable}.

\subsubsection{Buffet problem.}
In the Buffet problem, we consider the following scenario: At a conference with multiple buffet locations, players desire to be at locations that are filled with food and uncrowded. However, each location depletes faster with increasing number of players. The major player (caterer) must keep buffets full and equally filled. The Buffet problem fulfills most assumptions (except Assm.~\ref{as:fictitious_play_conv}.\ref{as:fic_play_conv_explo}) and shows accordingly stable FP learning.

\subsubsection{Advertisement duopoly model.}
Lastly, in the advertisement model, a regulator sets the price of advertisement. Depending on the regulator's state and price of advertisement, two companies exogeneously decide on advertisement efficiencies of their subscription service. Minor players are consumers and choose whether to change to subscriptions for the better-funded product, while the regulator avoids formation of a monopoly. Duopoly advertisement competition in a static MFG was modeled in~\cite{carmona_dayanikli2021adv}.

\subsection{Numerical results}
In the following, we provide a numerical evaluation via exploitability as the primary metric of interest, since it describes the quality of achieved equilibria. Additional experiments and parameter details are shown in Appendix~\ref{app:exp}, including more qualitative results, the effect of alternative initializations, and analogous results for infinite-horizon discounted objectives. Beyond supporting the theoretical results, we also ablate both convergence assumptions for the algorithm and the Lipschitz policy assumption for propagation of chaos in the finite player system.

\paragraph{Exploitability convergence.}
As observed in Figure~\ref{fig:exploitability-fpi}, naive FPI usually fails to converge and runs into limit cycles, motivating FP. In Figure~\ref{fig:exploitability}, we see that the proposed FP algorithm optimizes both approximate major and minor exploitabilities $\hat{\mathcal E}$, $\hat{\mathcal E}^0$ over its iterations. Especially for Buffet, which fulfills most of Assumption~\ref{as:fictitious_play_conv}, learning is smooth and exploitability descends monotonically as in Theorem~\ref{the:fictitiousplayconvergence}, while exploitability is nevertheless optimized in the other problems. Overall, the proposed FP algorithm improves achieved exploitabilities significantly over FPI.

\paragraph{Stability over discretization.}
Comparing approximation results empirically over discretization bins $M$ per dimension, i.e. using $\delta$-partitions with $\delta \approx \frac{2}{M}$, in Figure~\ref{fig:J_discretization} we observe that the FP-learned policies quickly stabilize as the discretization becomes sufficiently fine. The result supports not only the discretization approximation in Theorem~\ref{thm:disc-opt}, but also shows insensitivity of our FP algorithm to the fineness of the grid, as long as it is sufficiently fine to approximate the problem well. Hence, in the following we will use $M=120$.

\paragraph{Finite-player convergence.}
In Figure~\ref{fig:J_num_agents}, the convergence of episodic returns by propagation of chaos is depicted as the number of players $N \to \infty$. The limiting performance as the number of players grows, quickly approaches the performance of the projected MF prediction, up to a small, negligible error from discretization and finite players. The result supports propagation of chaos in Theorem~\ref{thm:m3muconv} by convergence of the empirical objective to the limiting objective, despite the non-Lipschitz projected MF policies. In Appendix~\ref{app:exp}, similar results hold for (Lipschitz) uniform policies.

\paragraph{Qualitative analysis.}
Lastly, we visualize the qualitative behavior obtained and find plausible equilibrium behavior, e.g., for the SIS problem. As seen in Figure~\ref{fig:qual}, the equilibrium behavior plausibly reaches an equilibrium of infected players, where the cost of actions equilibrates. The number of infected increases over time due to the finite horizon, discounting costs of infection beyond the horizon. Furthermore, minor players take precautions only down to some infection threshold, at which point the expected cost of not taking precautions is higher. The major player prevents infections in the low-infectivity regime ($x^0 = L$), while in the high-infectivity regime ($x^0 = H$) the high infection probability for minor players already encourages preventative actions.

\section{Conclusion and Discussion}
We have developed a new model and algorithm for a novel, broad class of tractable games. The framework allows scalable analysis of a large number of players with theoretical guarantees. The proposed methods have been empirically supported through a variety of experiments. Still, for problems with multiple Nash equilibria, the FP algorithm finds only some equilibrium. Future work could address finding all or specific, e.g., socially-optimal equilibria. One could also try to relax theoretical assumptions. Lastly, since scalability of the discretization method remains an issue for larger minor state spaces, one may consider deep RL methods.


\section*{Acknowledgments}
This work has been co-funded by the LOEWE initiative (Hesse, Germany) within the emergenCITY center. The authors acknowledge the Lichtenberg high performance computing cluster of the TU Darmstadt for providing computational facilities for the calculations of this research.
M. Lauriere is affiliated with the Shanghai Frontiers Science Center of Artificial Intelligence and Deep Learning, and with the NYU-ECNU Institute of Mathematical Sciences at NYU Shanghai, NYU Shanghai, 567 West Yangsi Road, Shanghai, 200126, People’s Republic of China. O. Pietquin was affiliated with Google DeepMind during the preparation of this work. 

\bibliography{aaai24}


\onecolumn
\appendix

\section{Continuous Time Fictitious Play with Major and Minor Players}
\label{app:fic_play}
In this section, we prove the fictitious play convergence result for the mean field game with a major player by extending the ideas of~\cite{perrin2020fictitious} to include the major player. \vskip6pt

The aim of this proof is to show the total exploitability is a strong Lyapunov function by showing $\frac{d}{d\tau}\left(\mathcal E(\bar{\pi}^\tau, \bar{\pi}^{0,\tau}) + \mathcal E^0(\bar{\pi}^\tau, \bar{\pi}^{0,\tau})\right)\leq -\frac{1}{\tau}\left(\mathcal E(\bar{\pi}^\tau, \bar{\pi}^{0,\tau}) + \mathcal E^0(\bar{\pi}^\tau, \bar{\pi}^{0,\tau})\right)$. In the proof we focus on showing $\frac{d}{d\tau}\left(\mathcal E(\bar{\pi}^\tau, \bar{\pi}^{0,\tau})\right)\leq -\frac{1}{\tau}\left(\mathcal E(\bar{\pi}^\tau, \bar{\pi}^{0,\tau})\right)$ first and $\frac{d}{d\tau}\left(\mathcal E^0(\bar{\pi}^\tau, \bar{\pi}^{0,\tau})\right)\leq -\frac{1}{\tau}\left(\mathcal E^0(\bar{\pi}^\tau, \bar{\pi}^{0,\tau})\right)$ second and we combine the results at the end. Before going into the details of the proof of Theorem~\ref{the:fictitiousplayconvergence}, we introduce some properties that extend from the ones introduced by~\citet{perrin2020fictitious} where we exchange their common noise formulation $\Sigma_t^0$ with state-action histories (i.e., we condition on the noise in order to avoid having to conditioning on the major agent's randomness as common noise).

We first recall a few definitions:

\paragraph{Mean field.}
We recall the conditional MFs for minor agents, now conditioned on histories, i.e. recursively
\begin{align*}
    &\mu^{\pi}_{t+1 \mid x^0_{0:t}, u^0_{0:t}}(x') \coloneqq \mathbb P_{\pi}(x_{t+1} = x' \mid x^0_{0:t}, u^0_{0:t}) \\
    &= \sum_{x,u\in \mathcal{X}\times\mathcal{U}} \mathbb P_{\pi}(x_{t+1} = x' \mid x_t = x, u_t = u, x^0_{0:t}, u^0_{0:t}) \mathbb P_{\pi}(u_t = u \mid x_t = x, x^0_{0:t}, u^0_{0:t}) \mathbb P_{\pi}(x_t = x \mid x^0_{0:t}, u^0_{0:t}) \\
    &= \sum_{x,u\in \mathcal{X}\times\mathcal{U}} P(x^\prime|x,u, x^0_t, u^0_t)\pi_t(u|x, x^0_{0:t}, u^0_{0:t-1}) \mu^{\pi}_{t \mid x^0_{0:t-1}, u^0_{0:t-1}}(x).
\end{align*}
while for major agents, we define joint history mean fields
\begin{align*}
    \mu^{\pi^0}_{t+1}(x^0_{t+1}, x^0_{0:t}, u^0_{0:t}) &\coloneqq \mathbb P_{\pi^0}(x^0_{t+1}, x^0_{0:t}, u^0_{0:t}) \\
    &= \mu^{\pi^0}_t(x^0_t, x^0_{0:t-1}, u^0_{0:t-1}) \pi^0_t(u^0_t \mid x^0_t, x^0_{0:t-1}, u^0_{0:t-1}) P^0(x^0_{t+1} \mid x^0_t, u^0_t)
\end{align*} 

\paragraph{Averaged policies.}
We recall the minor agent FP policy as
\begin{align}
    \bar \pi^\tau_t(u_t \mid x_t, x^0_{0:t}, u^0_{0:t-1}) = \frac{\int_0^\tau \pi^{BR, s}(u_t \mid x_t, x^0_{0:t}, u^0_{0:t-1}) \mu^{\pi^{BR, s}_{0:t-1}}_{t \mid x^0_{0:t-1}, u^0_{0:t-1}}(x_t) ds}{\int_0^\tau \mu^{\pi^{BR, s}_{0:t-1}}_{t \mid x^0_{0:t-1}, u^0_{0:t-1}}(x_t) ds}
\end{align}
and similarly for the major agent
\begin{align}
    \bar \pi^{0, \tau}_t(u^0_t \mid x^0_t, x^0_{0:t-1}, u^0_{0:t-1}) = \frac{\int_0^\tau \pi^{0, BR, s}(u^0_t \mid x^0_t, x^0_{0:t-1}, u^0_{0:t-1}) \mu^{\pi^{0, BR, s}_{0:t-1}}_t(x^0_t, x^0_{0:t-1}, u^0_{0:t-1}) ds}{\int_0^\tau \mu^{\pi^{0, BR, s}_{0:t-1}}_t(x^0_t, x^0_{0:t-1}, u^0_{0:t-1}) ds}.
\end{align}

\paragraph{Averaged mean field.}
We also recall the average FP MFs for minor players
\begin{align}
    \bar \mu^{\tau}_{t \mid x^0_{0:t-1}, u^0_{0:t-1}}(x_t) &\coloneqq \frac 1 \tau \int_0^\tau \mu^{\pi^{BR, s}_{0:t-1}}_{t \mid x^0_{0:t-1}, u^0_{0:t-1}}(x_t) ds.
\end{align}
and for major players
\begin{align}
    \bar \mu^{0, \tau}_t(x^0_t, x^0_{0:t-1}, u^0_{0:t-1}) &\coloneqq \frac 1 \tau \int_0^\tau \mu^{\pi^{0, BR, s}_{0:t-1}}_t(x^0_t, x^0_{0:t-1}, u^0_{0:t-1}) ds
\end{align}

\vskip6pt
\begin{property}
\label{property:monotonicity}
    When the game is monotone, we have:
    \begin{equation*}
        \begin{aligned}
            \sum_{x\in \mathcal{X}} \Big\langle \nabla_\mu \overline{r}(x, x^0, \mu), \frac{d}{d\tau} \mu\Big\rangle \frac{d}{d\tau}\mu(x) \leq 0
        \end{aligned}
    \end{equation*}
\end{property}
\begin{proof}
    For all $s\geq0$ monotonicity condition says that for a fixed $x^0\in\mathcal{X}^0$:
    \begin{equation*}
    \begin{aligned}
        &\sum_{x \in \mathcal{X}} (\mu^\tau(x) - \mu^{\tau+s}(x))(\overline{r} (x, x^0, \mu^\tau) - \overline{r} (x, x^0, \mu^{\tau+s}))\leq 0\\
        \Rightarrow& \sum_{x \in \mathcal{X}} \frac{\mu^\tau(x) - \mu^{\tau+s}(x)}{s}\frac{\overline{r} (x, x^0, \mu^\tau) - \overline{r} (x, x^0, \mu^{\tau+s})}{s}\leq 0.
    \end{aligned}
    \end{equation*}
    Therefore, the result follows when $s\rightarrow 0$.
\end{proof}

\begin{property} \label{prop:1}
The above FP policy $\bar \pi^0$ generates the average FP mean field $\bar \mu$, i.e.
\begin{align}
    \bar \mu^{0, s}_t = \mu^{\bar \pi^{0, s}_{0:t-1}}_t
\end{align}
at all times $t, \tau$, and analogously for the minor players
\begin{align}
    \bar \mu^{s}_t = \mu^{\bar \pi^{s}_{0:t-1}}_t.
\end{align}

\end{property}
\begin{proof}
The joint probabilities of any policy $\pi^{0, BR, s}$ are always given by
\begin{align*}
    \mu^{\pi^{0, BR, s}_{0:t}}_{t+1}(x^0_{t+1}, x^0_{0:t}, u^0_{0:t}) = P^0(x^0_{t+1} \mid x^0_t, u^0_t) \mu^{\pi^{0, BR, s}_{0:t-1}}_t(x^0_t, x^0_{0:t-1}, u^0_{0:t-1}) \pi^{0, BR, s}_t(u^0_t \mid x^0_t, x^0_{0:t-1}, u^0_{0:t-1})
\end{align*}
and therefore, integrating over all times $\tau$, we have
\begin{align*}
    \frac 1 \tau \int_0^\tau \mu^{\pi^{0, BR, s}_{0:t}}_{t+1}(x^0_{t+1}, x^0_{0:t}, u^0_{0:t}) ds = P^0(x^0_{t+1} \mid x^0_t, u^0_t) \frac 1 \tau \int_0^\tau \mu^{\pi^{0, BR, s}_{0:t-1}}_t(x^0_t, x^0_{0:t-1}, u^0_{0:t-1}) \pi^{0, BR, s}_t(u^0_t \mid x^0_t, x^0_{0:t-1}, u^0_{0:t-1}) ds.
\end{align*}
Then, using definitions of $\bar \mu^{0, s}$ and $\bar \pi^{0, s}$, we have by induction starting with $\bar \mu^{0, s} = \mu^0_0 = \mu^{\bar \pi^{0, s}_{0:t}}_0$,
\begin{align*}
    \bar \mu^{0, s}_{t+1}(x^0_{t+1}, x^0_{0:t}, u^0_{0:t}) &= P^0(x^0_{t+1} \mid x^0_t, u^0_t) \bar \mu^{0, s}_t(x^0_t, x^0_{0:t-1}, u^0_{0:t-1}) \bar \pi^{0, s}_t(u^0_t \mid x^0_t, x^0_{0:t-1}, u^0_{0:t-1}) \\ 
    &= P^0(x^0_{t+1} \mid x^0_t, u^0_t) \mu^{\bar \pi^{0, s}_{0:t-1}}_t(x^0_t, x^0_{0:t-1}, u^0_{0:t-1}) \bar \pi^{0, s}_t(u^0_t \mid x^0_t, x^0_{0:t-1}, u^0_{0:t-1}) \\
    &= \mu^{\bar \pi^{0, s}_{0:t}}_{t+1}(x^0_{t+1}, x^0_{0:t}, u^0_{0:t})
\end{align*}
which is the desired result, i.e. the MF generated by $\bar \pi^0$ is the same as the average MF of best responses.

For the minor players, the proof is analogous: The conditional probabilities of any policy $\pi^{BR, s}$ are always given by
\begin{align*}
    \mu^{\pi^{BR, s}_{0:t}}_{t+1 \mid x^0_{0:t}, u^0_{0:t}}(x_{t+1}) &= \mathbb P_{\pi^{BR, s}}(x^0_{t+1} \mid x^0_{0:t}, u^0_{0:t}) \\
    &= \sum_{x_t,u_t\in \mathcal{X}\times\mathcal{U}} P(x_{t+1} \mid x_t,u_t, x^0_t, u^0_t) \mu^{\pi^{BR, s}_{0:t}}_{t \mid x^0_{0:t-1}, u^0_{0:t-1}}(x_t) \pi^{BR, s}_t(u_t|x_t, x^0_{0:t}, u^0_{0:t-1})
\end{align*}
and therefore, by integrating over all times $\tau$,
\begin{align*}
    \mu^{\pi^{BR, s}_{0:t}}_{t+1 \mid x^0_{0:t}, u^0_{0:t}}(x_{t+1}) &= \sum_{x_t,u_t\in \mathcal{X}\times\mathcal{U}} P(x_{t+1} \mid x_t,u_t, x^0_t, u^0_t) \frac 1 \tau \int_0^\tau \mu^{\pi^{BR, s}_{0:t}}_{t \mid x^0_{0:t-1}, u^0_{0:t-1}}(x_t) \pi^{BR, s}_t(u_t|x_t, x^0_{0:t}, u^0_{0:t-1}) ds
\end{align*}
we obtain
\begin{align*}
    \bar \mu^{s}_{t+1 \mid x^0_{0:t}, u^0_{0:t}}(x_{t+1}) &= \sum_{x_t,u_t\in \mathcal{X}\times\mathcal{U}} P(x_{t+1} \mid x_t,u_t, x^0_t, u^0_t) \mu^{\bar \pi^{s}_{0:t}}_{t \mid x^0_{0:t-1}, u^0_{0:t-1}}(x_t) \bar \pi^{s}_t(u_t|x_t, x^0_{0:t}, u^0_{0:t-1}) ds \\
    &= \mu^{\bar \pi^{s}_{0:t}}_{t+1 \mid x^0_{0:t}, u^0_{0:t}}(x_{t+1})
\end{align*}
which again implies the desired result for minor agents by induction.
\end{proof}

\begin{property} \label{prop:2}
At all times $t$, $\tau$ and state-actions $x^0_{0:t-1}, u^0_{0:t-1}, x^0_t$, we have
\begin{align} \label{eq:2.1}
    \frac{d}{d\tau} \mu^{\bar \pi^{0, \tau}_{0:t-1}}_t(x^0_t, x^0_{0:t-1}, u^0_{0:t-1}) 
    = \frac 1 \tau \Big[ \mu^{\pi^{0, BR, \tau}_{0:t-1}}_t(x^0_t, x^0_{0:t-1}, u^0_{0:t-1}) - \mu^{\bar \pi^{0, \tau}_{0:t-1}}_t(x^0_t, x^0_{0:t-1}, u^0_{0:t-1}) \Big]
\end{align}
\begin{multline} \label{eq:2.2}
    \mu^{\bar \pi^{0, \tau}_{0:t-1}}_t(x^0_t, x^0_{0:t-1}, u^0_{0:t-1}) \frac{d}{d\tau} \bar \pi^{0, \tau}_t(u^0_t \mid x^0_t, x^0_{0:t-1}, u^0_{0:t-1}) \\
    = \mu^{\pi^{0, BR, \tau}_{0:t-1}}_t(x^0_t, x^0_{0:t-1}, u^0_{0:t-1}) \frac 1 \tau \Big[ \pi^{0, BR, \tau}_t(u^0_t \mid x^0_t, x^0_{0:t-1}, u^0_{0:t-1}) - \bar \pi^{0, \tau}_t(u^0_t \mid x^0_t, x^0_{0:t-1}, u^0_{0:t-1}) \Big].
\end{multline}
and also for the minor players 
\begin{align} \label{eq:2.3}
    \frac{d}{d\tau} \mu^{\bar \pi^{\tau}_{0:t-1}}_{t \mid x^0_{0:t-1}, u^0_{0:t-1}}(x_t) = \frac 1 \tau \Big[ \mu^{\pi^{BR, \tau}_{0:t-1}}_{t \mid x^0_{0:t-1}, u^0_{0:t-1}}(x_t) - \mu^{\bar \pi^{\tau}_{0:t-1}}_{t \mid x^0_{0:t-1}, u^0_{0:t-1}}(x_t) \Big] 
\end{align}
\begin{multline} \label{eq:2.4}
    \mu^{\bar \pi^{\tau}_{0:t-1}}_{t \mid x^0_{0:t-1}, u^0_{0:t-1}}(x_t) \frac{d}{d\tau} \bar \pi^\tau_t(u_t \mid x_t, x^0_{0:t}, u^0_{0:t-1}) \\
    = \mu^{\pi^{BR, \tau}_{0:t-1}}_{t \mid x^0_{0:t-1}, u^0_{0:t-1}}(x_t) \frac 1 \tau \Big[ \pi^{BR, \tau}_t(u_t \mid x_t, x^0_{0:t}, u^0_{0:t-1}) - \bar \pi^\tau_t(u_t \mid x_t, x^0_{0:t}, u^0_{0:t-1}) \Big].
\end{multline}
\end{property}
\begin{proof}
The properties are shown together inductively. First, note that \eqref{eq:2.1} implies \eqref{eq:2.2}: Start with Property~\ref{prop:1}, giving
\begin{align*}
    \bar \mu^{0, s}_t(x^0_t, x^0_{0:t-1}, u^0_{0:t-1}) \bar \pi^{0, \tau}_t(x^0, x^0_{0:t-1}, u^0_{0:t-1}) = \frac 1 \tau \int_0^\tau \pi^{0, BR, s}(x^0, x^0_{0:t-1}, u^0_{0:t-1}) \mu^{\pi^{0, BR, s}_{0:t-1}}_t(x^0_t, x^0_{0:t-1}, u^0_{0:t-1}) ds
\end{align*}
which implies by taking the derivative $\frac{d}{d\tau}$ that
\begin{align*}
    &\frac{d}{d\tau} \bar \mu^{0, s}_t(x^0_t, x^0_{0:t-1}, u^0_{0:t-1}) \bar \pi^{0, \tau}_t(x^0, x^0_{0:t-1}, u^0_{0:t-1}) + \bar \mu^{0, s}_t(x^0_t, x^0_{0:t-1}, u^0_{0:t-1}) \frac{d}{d\tau} \bar \pi^{0, \tau}_t(x^0, x^0_{0:t-1}, u^0_{0:t-1}) \\
    &= -\frac{1}{\tau^2} \int_0^\tau \pi^{0, BR, s}(x^0, x^0_{0:t-1}, u^0_{0:t-1}) \mu^{\pi^{0, BR, s}_{0:t-1}}_t(x^0_t, x^0_{0:t-1}, u^0_{0:t-1}) ds \\
    &\quad + \frac 1 \tau \Big[ \pi^{0, BR, s}(x^0, x^0_{0:t-1}, u^0_{0:t-1}) \mu^{\pi^{0, BR, s}_{0:t-1}}_t(x^0_t, x^0_{0:t-1}, u^0_{0:t-1}) ds \Big].
\end{align*}
Applying \eqref{eq:2.1} and Property~\ref{prop:1} gives
\begin{align*}
    &\bar \mu^{0, s}_t(x^0_t, x^0_{0:t-1}, u^0_{0:t-1}) \frac{d}{d\tau} \bar \pi^{0, \tau}_t(x^0, x^0_{0:t-1}, u^0_{0:t-1}) \\
    &= -\frac{1}{\tau^2} \int_0^\tau \pi^{0, BR, s}(x^0, x^0_{0:t-1}, u^0_{0:t-1}) \mu^{\pi^{0, BR, s}_{0:t-1}}_t(x^0_t, x^0_{0:t-1}, u^0_{0:t-1}) ds \\
    &\quad + \frac 1 \tau \Big[ \pi^{0, BR, s}(x^0, x^0_{0:t-1}, u^0_{0:t-1}) \mu^{\pi^{0, BR, s}_{0:t-1}}_t(x^0_t, x^0_{0:t-1}, u^0_{0:t-1}) ds \Big] \\
    &\quad - \frac 1 \tau \Big[ \mu^{\pi^{0, BR, \tau}_{0:t-1}}_t(x^0_t, x^0_{0:t-1}, u^0_{0:t-1}) - \mu^{\bar \pi^{0, \tau}_{0:t-1}}_t(x^0_t, x^0_{0:t-1}, u^0_{0:t-1}) \Big] \bar \pi^{0, \tau}_t(x^0, x^0_{0:t-1}, u^0_{0:t-1}) \\ 
    &= \mu^{\pi^{0, BR, \tau}_{0:t-1}}_t(x^0_t, x^0_{0:t-1}, u^0_{0:t-1}) \frac 1 \tau \Big[ \bar \pi^{0, \tau}_t(u^0_t \mid x^0_t, x^0_{0:t-1}, u^0_{0:t-1}) - \pi^{0, BR, \tau}_t(u^0_t \mid x^0_t, x^0_{0:t-1}, u^0_{0:t-1}) \Big].
\end{align*}

Now to show the property \eqref{eq:2.1} at all times $t$, we use induction. At time $0$, the property is trivially fulfilled by fixed $\mu^0_0$. Assume the property \eqref{eq:2.1} and therefore \eqref{eq:2.2} holds at time $t$, then for the induction step, at time $t+1$, by Property~\ref{prop:1}, we have
\begin{align*}
    &\frac{d}{d\tau} \mu^{\bar \pi^{0, \tau}_{0:t}}_{t+1}(x^0_{t+1}, x^0_{0:t}, u^0_{0:t}) \\
    &= \frac{d}{d\tau} \Big[ \mu^{\bar \pi^{0, \tau}_{0:t-1}}_t(x^0_t, x^0_{0:t-1}, u^0_{0:t-1}) \bar \pi^{0, \tau}_t(u^0_t \mid x^0_t, x^0_{0:t-1}, u^0_{0:t-1}) P^0(x^0_{t+1} \mid x^0_t, u^0_t) \Big] \\
    &= \frac{d}{d\tau} \mu^{\bar \pi^{0, \tau}_{0:t-1}}_t(x^0_t, x^0_{0:t-1}, u^0_{0:t-1}) \bar \pi^{0, \tau}_t(u^0_t \mid x^0_t, x^0_{0:t-1}, u^0_{0:t-1}) P^0(x^0_{t+1} \mid x^0_t, u^0_t) \\
    &\quad + \mu^{\bar \pi^{0, \tau}_{0:t-1}}_t(x^0_t, x^0_{0:t-1}, u^0_{0:t-1}) \frac{d}{d\tau} \bar \pi^{0, \tau}_t(u^0_t \mid x^0_t, x^0_{0:t-1}, u^0_{0:t-1}) P^0(x^0_{t+1} \mid x^0_t, u^0_t) \\
    &= \frac 1 \tau \Big[ \mu^{\pi^{0, BR, \tau}_{0:t-1}}_t(x^0_t, x^0_{0:t-1}, u^0_{0:t-1}) - \mu^{\bar \pi^{0, \tau}_{0:t-1}}_t(x^0_t, x^0_{0:t-1}, u^0_{0:t-1}) \Big] \bar \pi^{0, \tau}_t(u^0_t \mid x^0_t, x^0_{0:t-1}, u^0_{0:t-1}) P^0(x^0_{t+1} \mid x^0_t, u^0_t) \\
    &\quad + \mu^{\pi^{0, BR, \tau}_{0:t-1}}_t(x^0_t, x^0_{0:t-1}, u^0_{0:t-1}) \frac 1 \tau \Big[ \pi^{0, BR, \tau}_t(u^0_t \mid x^0_t, x^0_{0:t-1}, u^0_{0:t-1}) - \bar \pi^{0, \tau}_t(u^0_t \mid x^0_t, x^0_{0:t-1}, u^0_{0:t-1}) \Big] P^0(x^0_{t+1} \mid x^0_t, u^0_t) \\
    &= \frac 1 \tau \mu^{\pi^{0, BR, \tau}_{0:t-1}}_t(x^0_t, x^0_{0:t-1}, u^0_{0:t-1}) \pi^{0, BR, \tau}_t(u^0_t \mid x^0_t, x^0_{0:t-1}, u^0_{0:t-1}) P^0(x^0_{t+1} \mid x^0_t, u^0_t) \\
    &\quad - \frac 1 \tau \mu^{\bar \pi^{0, \tau}_{0:t-1}}_t(x^0_t, x^0_{0:t-1}, u^0_{0:t-1}) \bar \pi^{0, \tau}_t(u^0_t \mid x^0_t, x^0_{0:t-1}, u^0_{0:t-1}) P^0(x^0_{t+1} \mid x^0_t, u^0_t) \\
    &= \frac 1 \tau \Big[ \mu^{\pi^{0, BR, \tau}_{0:t}}_{t+1}(x^0_{t+1}, x^0_{0:t}, u^0_{0:t}) - \mu^{\bar \pi^{0, \tau}_{0:t}}_{t+1}(x^0_{t+1}, x^0_{0:t}, u^0_{0:t}) \Big]
\end{align*}
where we used the induction assumption on $\mu^{\bar \pi^{0, \tau}_{0:t-1}}_t(x^0_t, x^0_{0:t-1}, u^0_{0:t-1}) \frac{d}{d\tau} \bar \pi^{0, \tau}_t(u^0_t \mid x^0_t, x^0_{0:t-1}, u^0_{0:t-1})$ to obtain \eqref{eq:2.1}.

For the minor players, the proof is analogous in that \eqref{eq:2.3} implies \eqref{eq:2.4} by Property~\ref{prop:1}. Meanwhile, \eqref{eq:2.3} follows readily by noting $\bar \mu^{s}_t = \mu^{\bar \pi^{s}_{0:t-1}}_t$ by Property~\ref{prop:1} and taking instead the derivative of the definition of $\bar \mu^{s}_t$.
\end{proof}

\begin{proof}[Proof of Theorem~\ref{the:fictitiousplayconvergence}]
With the above properties established, we can show the convergence of exploitabilities to zero.

\paragraph{Step 1: Focusing on the exploitability of the minor player.} We first start with showing that the exploitability of minor players, $\mathcal E(\bar{\pi}^\tau, \bar{\pi}^{0,\tau})$ is a strong Lyapunov function. Using the definition of exploitability of the minor player, we can write:
\begin{align*}
    &\frac{d}{d\tau}\mathcal E(\bar \pi^\tau, \bar \pi^{0,\tau}) \\
    &= \frac{d}{d\tau} \Big[ \max_{\pi'} J(\pi', \bar \pi^\tau, \bar \pi^{0,\tau}) - J(\bar \pi^\tau, \bar \pi^\tau, \bar \pi^{0,\tau})\Big] \\
    &= \frac{d}{d\tau} \sum_{t \in \mathcal{T}} \sum_{\substack{x_t \in \mathcal X, u_t \in \mathcal X, \\x^0_{0:t} \in {\mathcal X^0}^{t+1}, \\u^0_{0:t-1} \in {\mathcal U^0}^{t}}} \Big[ \mathbb P_{\pi^{BR, \tau}, \bar \pi^\tau, \bar \pi^{0,\tau}}(x_t, u_t, x^0_{0:t}, u^0_{0:t-1}) - \mathbb P_{\bar \pi^\tau, \bar \pi^\tau, \bar \pi^{0,\tau}}(x_t, u_t, x^0_{0:t}, u^0_{0:t-1}) \Big] r(x_t, u_t, x^0_t, \mu^{\bar \pi^{\tau}_{0:t-1}}_{t \mid x^0_{0:t-1}, u^0_{0:t-1}})
\end{align*}
and, e.g., at time $t$ for the FP policy
\begin{align*}
    &\frac{d}{d\tau} \sum_{\substack{x_t \in \mathcal X, u_t \in \mathcal X, \\x^0_{0:t} \in {\mathcal X^0}^{t+1}, \\u^0_{0:t-1} \in {\mathcal U^0}^{t}}} \mathbb P_{\bar \pi^\tau, \bar \pi^\tau, \bar \pi^{0,\tau}}(x_t, u_t, x^0_{0:t}, u^0_{0:t-1}) r(x_t, x^0_t, \mu^{\bar \pi^{\tau}_{0:t-1}}_{t \mid x^0_{0:t-1}, u^0_{0:t-1}}) \\
    &= \frac{d}{d\tau} \sum_{t \in \mathcal{T}} \sum_{\substack{x_t \in \mathcal X, u_t \in \mathcal X, \\x^0_{0:t} \in {\mathcal X^0}^{t+1}, \\u^0_{0:t-1} \in {\mathcal U^0}^{t}}} \mu^{\bar \pi^{0, \tau}_{0:t-1}}_t(x^0_t, x^0_{0:t-1}, u^0_{0:t-1}) \mu^{\bar \pi^{\tau}_{0:t-1}}_{t \mid x^0_{0:t-1}, u^0_{0:t-1}}(x_t) \bar \pi^\tau_t(u_t \mid x_t, x^0_{0:t}, u^0_{0:t-1}) r(x_t, u_t, x^0_t, \mu^{\bar \pi^{\tau}_{0:t-1}}_{t \mid x^0_{0:t-1}, u^0_{0:t-1}})
\end{align*}
and similarly for the BR policy. By the envelope theorem on $\max_{\pi'} J^0(\bar \pi^\tau, \pi')$, the partial derivative with respect to $\pi^{BR}$ can be dropped. \footnote{To be precise, here and in \citet[Appendix~A]{perrin2020fictitious}, continuous differentiability of the objectives with respect to $\tau$ is implicitly assumed, which if needed may be guaranteed by introducing a minimal regularization. This is not a problem however, as for example entropy-regularized MFNE still achieve arbitrarily good unregularized finite game equilibria, see e.g. \citet[Theorem~4]{cui2021approximately}.}
Therefore, we obtain
\normalsize
\begin{align*}
    &\frac{d}{d\tau} \sum_{t \in \mathcal{T}} \sum_{\substack{x_t \in \mathcal X, u_t \in \mathcal X, \\x^0_{0:t} \in {\mathcal X^0}^{t+1}, \\u^0_{0:t-1} \in {\mathcal U^0}^{t}}} \Big[ \mathbb P_{\pi^{BR, \tau}, \bar \pi^\tau, \bar \pi^{0,\tau}}(x_t, u_t, x^0_{0:t}, u^0_{0:t-1}) - \mathbb P_{\bar \pi^\tau, \bar \pi^\tau, \bar \pi^{0,\tau}}(x_t, u_t, x^0_{0:t}, u^0_{0:t-1}) \Big] r(x_t, u_t, x^0_t, \mu^{\bar \pi^{\tau}_{0:t-1}}_{t \mid x^0_{0:t-1}, u^0_{0:t-1}}) \\
    &= \sum_{t \in \mathcal{T}} \sum_{\substack{x_t \in \mathcal X, u_t \in \mathcal X, \\x^0_{0:t} \in {\mathcal X^0}^{t+1}, \\u^0_{0:t-1} \in {\mathcal U^0}^{t}}} \Bigg[ \\
    &\quad \frac{d}{d\tau} \mu^{\bar \pi^{0, \tau}_{0:t-1}}_t(x^0_t, x^0_{0:t-1}, u^0_{0:t-1}) \mu^{\pi^{BR, \tau}_{0:t-1}}_{t \mid x^0_{0:t-1}, u^0_{0:t-1}}(x_t) \pi^{BR, \tau}_t(u_t \mid x_t, x^0_{0:t}, u^0_{0:t-1}) r(x_t, u_t, x^0_t, \mu^{\bar \pi^{\tau}_{0:t-1}}_{t \mid x^0_{0:t-1}, u^0_{0:t-1}}) \\
    &\quad + \mu^{\bar \pi^{0, \tau}_{0:t-1}}_t(x^0_t, x^0_{0:t-1}, u^0_{0:t-1}) \mu^{\pi^{BR, \tau}_{0:t-1}}_{t \mid x^0_{0:t-1}, u^0_{0:t-1}}(x_t) \pi^{BR, \tau}_t(u_t \mid x_t, x^0_{0:t}, u^0_{0:t-1}) \frac{d}{d\tau} r(x_t, u_t, x^0_t, \mu^{\bar \pi^{\tau}_{0:t-1}}_{t \mid x^0_{0:t-1}, u^0_{0:t-1}}) \\
    &\quad - \frac{d}{d\tau} \mu^{\bar \pi^{0, \tau}_{0:t-1}}_t(x^0_t, x^0_{0:t-1}, u^0_{0:t-1}) \mu^{\bar \pi^{\tau}_{0:t-1}}_{t \mid x^0_{0:t-1}, u^0_{0:t-1}}(x_t) \bar \pi^\tau_t(u_t \mid x_t, x^0_{0:t}, u^0_{0:t-1}) r(x_t, u_t, x^0_t, \mu^{\bar \pi^{\tau}_{0:t-1}}_{t \mid x^0_{0:t-1}, u^0_{0:t-1}}) \\
    &\quad - \mu^{\bar \pi^{0, \tau}_{0:t-1}}_t(x^0_t, x^0_{0:t-1}, u^0_{0:t-1}) \frac{d}{d\tau} \mu^{\bar \pi^{\tau}_{0:t-1}}_{t \mid x^0_{0:t-1}, u^0_{0:t-1}}(x_t) \bar \pi^\tau_t(u_t \mid x_t, x^0_{0:t}, u^0_{0:t-1}) r(x_t, u_t, x^0_t, \mu^{\bar \pi^{\tau}_{0:t-1}}_{t \mid x^0_{0:t-1}, u^0_{0:t-1}}) \\
    &\quad - \mu^{\bar \pi^{0, \tau}_{0:t-1}}_t(x^0_t, x^0_{0:t-1}, u^0_{0:t-1}) \mu^{\bar \pi^{\tau}_{0:t-1}}_{t \mid x^0_{0:t-1}, u^0_{0:t-1}}(x_t) \frac{d}{d\tau} \bar \pi^\tau_t(u_t \mid x_t, x^0_{0:t}, u^0_{0:t-1}) r(x_t, u_t, x^0_t, \mu^{\bar \pi^{\tau}_{0:t-1}}_{t \mid x^0_{0:t-1}, u^0_{0:t-1}}) \\
    &\quad - \mu^{\bar \pi^{0, \tau}_{0:t-1}}_t(x^0_t, x^0_{0:t-1}, u^0_{0:t-1}) \mu^{\bar \pi^{\tau}_{0:t-1}}_{t \mid x^0_{0:t-1}, u^0_{0:t-1}}(x_t) \bar \pi^\tau_t(u_t \mid x_t, x^0_{0:t}, u^0_{0:t-1}) \frac{d}{d\tau} r(x_t, u_t, x^0_t, \mu^{\bar \pi^{\tau}_{0:t-1}}_{t \mid x^0_{0:t-1}, u^0_{0:t-1}}) \Bigg]
\end{align*}
\normalsize
where for the first and third term, we use the property from major player, while for fourth term we use \eqref{eq:2.3} in Property~\ref{prop:2}, and for the fifth we use \eqref{eq:2.4} in Property~\ref{prop:2}.

Therefore, combining the first and third, second and last term, and fourth and fifth terms, we have
\small
\begin{align*}
    &\frac{d}{d\tau}\mathcal E(\bar \pi^\tau, \bar \pi^{0,\tau}) \\
    &= \sum_{t \in \mathcal{T}} \sum_{\substack{x_t \in \mathcal X, u_t \in \mathcal X, \\x^0_{0:t} \in {\mathcal X^0}^{t+1}, \\u^0_{0:t-1} \in {\mathcal U^0}^{t}}} \Bigg[ \\
    &\quad \frac 1 \tau \Big[ \mu^{\pi^{0, BR, \tau}_{0:t-1}}_t(x^0_t, x^0_{0:t-1}, u^0_{0:t-1}) - \mu^{\bar \pi^{0, \tau}_{0:t-1}}_t(x^0_t, x^0_{0:t-1}, u^0_{0:t-1}) \Big] r(x_t, u_t, x^0_t, \mu^{\bar \pi^{\tau}_{0:t-1}}_{t \mid x^0_{0:t-1}, u^0_{0:t-1}}) \\
    &\qquad \cdot \Big[ \mu^{\pi^{BR, \tau}_{0:t-1}}_{t \mid x^0_{0:t-1}, u^0_{0:t-1}}(x_t) \pi^{BR, \tau}_t(u_t \mid x_t, x^0_{0:t}, u^0_{0:t-1}) - \mu^{\bar \pi^{\tau}_{0:t-1}}_{t \mid x^0_{0:t-1}, u^0_{0:t-1}}(x_t) \bar \pi^\tau_t(u_t \mid x_t, x^0_{0:t}, u^0_{0:t-1}) \Big] \\
    &\quad + \mu^{\bar \pi^{0, \tau}_{0:t-1}}_t(x^0_t, x^0_{0:t-1}, u^0_{0:t-1}) \frac \tau \tau \Big[ \mu^{\pi^{BR, \tau}_{0:t-1}}_{t \mid x^0_{0:t-1}, u^0_{0:t-1}}(x_t) - \mu^{\bar \pi^{\tau}_{0:t-1}}_{t \mid x^0_{0:t-1}, u^0_{0:t-1}}(x_t) \Big] \Big\langle \nabla_\mu r(x_t, u_t, x^0_t, \mu^{\bar \pi^{\tau}_{0:t-1}}_{t \mid x^0_{0:t-1}, u^0_{0:t-1}}), \frac{d}{d\tau} \mu^{\bar \pi^{\tau}_{0:t-1}}_{t \mid x^0_{0:t-1}, u^0_{0:t-1}}) \Big\rangle \\
    &\quad - \frac 1 \tau \mu^{\bar \pi^{0, \tau}_{0:t-1}}_t(x^0_t, x^0_{0:t-1}, u^0_{0:t-1}) r(x_t, u_t, x^0_t, \mu^{\bar \pi^{\tau}_{0:t-1}}_{t \mid x^0_{0:t-1}, u^0_{0:t-1}}) \\
    &\qquad \cdot \Big[ \mu^{\pi^{BR, \tau}_{0:t-1}}_{t \mid x^0_{0:t-1}, u^0_{0:t-1}}(x_t) \pi^{BR, \tau}_t(u_t \mid x_t, x^0_{0:t}, u^0_{0:t-1}) - \mu^{\bar \pi^{\tau}_{0:t-1}}_{t \mid x^0_{0:t-1}, u^0_{0:t-1}}(x_t) \bar \pi^\tau_t(u_t \mid x_t, x^0_{0:t}, u^0_{0:t-1}) \Big] \Bigg] \\
    &= \frac 1 \tau \tilde {\mathcal E}(\bar \pi^\tau, \pi^{0, BR, \tau}, \bar \pi^{0,\tau}) - \frac 1 \tau \mathcal E(\bar \pi^\tau, \bar \pi^{0,\tau}) \\
    &\quad + \sum_{t \in \mathcal{T}} \sum_{\substack{x_t \in \mathcal X, u_t \in \mathcal X, \\x^0_{0:t} \in {\mathcal X^0}^{t+1}, \\u^0_{0:t-1} \in {\mathcal U^0}^{t}}} \tau \mu^{\bar \pi^{0, \tau}_{0:t-1}}_t(x^0_t, x^0_{0:t-1}, u^0_{0:t-1}) \frac{d}{d\tau} \mu^{\bar \pi^{\tau}_{0:t-1}}_{t \mid x^0_{0:t-1}, u^0_{0:t-1}} \Big\langle \nabla_\mu r(x_t, u_t, x^0_t, \mu^{\bar \pi^{\tau}_{0:t-1}}_{t \mid x^0_{0:t-1}, u^0_{0:t-1}}), \frac{d}{d\tau} \mu^{\bar \pi^{\tau}_{0:t-1}}_{t \mid x^0_{0:t-1}, u^0_{0:t-1}}) \Big\rangle \Bigg] \\
    &\quad - \frac 1 \tau \mathcal E(\bar \pi^\tau, \bar \pi^{0,\tau}) \leq - \frac 1 \tau \mathcal E(\bar \pi^\tau, \bar \pi^{0,\tau})
\end{align*}
\normalsize
where we use monotonicity and Assumption~\ref{as:fictitious_play_conv} to obtain the last inequality.

\paragraph{Step 2: Focusing on the exploitability of the major player.} Similarly to the case of minor players, we can write: 
\begin{align*}
    &\frac{d}{d\tau}\mathcal E^0(\bar \pi^\tau, \bar \pi^{0,\tau}) \\
    &= \frac{d}{d\tau} \Big[ \max_{\pi'} J^0(\bar \pi^\tau, \pi') - J^0(\bar \pi^\tau, \bar \pi^{0,\tau})\Big] \\
    &= \frac{d}{d\tau} \sum_{t \in \mathcal{T}} \sum_{x^0_{0:t} \in {\mathcal X^0}^{t+1}, u^0_{0:t} \in {\mathcal U^0}^{t+1}} \Big[ \mathbb P_{\bar \pi^{\tau}, \pi^{0, BR, \tau}}(x^0_{0:t}, u^0_{0:t}) - \mathbb P_{\bar \pi^{\tau}, \bar \pi^{0, \tau}}(x^0_{0:t}, u^0_{0:t}) \Big] r^0(x^0_t, u^0_t, \mu^{\bar \pi^{\tau}_{0:t-1}}_{t \mid x^0_{0:t-1}, u^0_{0:t-1}}) \\
    &= \frac{d}{d\tau} \sum_{x^0_{0} \in {\mathcal X^0}, u^0_{0} \in {\mathcal U^0}} \Big[ \mathbb P_{\bar \pi^{\tau}, \pi^{0, BR, \tau}}(x^0_{0}, u^0_{0}) - \mathbb P_{\bar \pi^{\tau}, \bar \pi^{0, \tau}}(x^0_{0}, u^0_{0}) \Big] r^0(x^0_0, u^0_{0}, \mu_0) \\
    &\quad + \frac{d}{d\tau} \sum_{x^0_{0:1} \in {\mathcal X^0}^2, u^0_{0:1} \in {\mathcal U^0}^2} \Big[ \mathbb P_{\bar \pi^{\tau}, \pi^{0, BR, \tau}}(x^0_{0:1}, u^0_{0:1}) - \mathbb P_{\bar \pi^{\tau}, \bar \pi^{0, \tau}}(x^0_{0:1}, u^0_{0:1}) \Big] r^0(x^0_1, u^0_1, \mu^{\bar \pi^{\tau}_{0}}_{1 \mid x^0_{0}, u^0_{0}}) \\
    &\quad + \frac{d}{d\tau} \sum_{x^0_{0:2} \in {\mathcal X^0}^{3}, u^0_{0:2} \in {\mathcal U^0}^{3}} \Big[ \mathbb P_{\bar \pi^{\tau}, \pi^{0, BR, \tau}}(x^0_{0:2}, u^0_{0:2}) - \mathbb P_{\bar \pi^{\tau}, \bar \pi^{0, \tau}}(x^0_{0:2}, u^0_{0:2}) \Big] r^0(x^0_2, u^0_2, \mu^{\bar \pi^{\tau}_{0:1}}_{2 \mid x^0_{0:1}, u^0_{0:1}}) \\
    &\quad + ...
\end{align*}
Generally, at all times $t$, we have
\begin{align*}
    &\frac{d}{d\tau} \sum_{x^0_{0:t} \in {\mathcal X^0}^{t+1}, u^0_{0:t} \in {\mathcal U^0}^{t+1}} \mathbb P_{\bar \pi^{\tau}, \bar \pi^{0, \tau}}(x^0_{0:t+1}, u^0_{0:t+1}) r^0(x^0_{t+1}, u^0_{t+1}, \mu^{\bar \pi^{\tau}_{0:t}}_{t+1 \mid x^0_{0:t}, u^0_{0:t}}) \\
    &= \frac{d}{d\tau} \sum_{x^0_{0:t} \in {\mathcal X^0}^{t+1}, u^0_{0:t} \in {\mathcal U^0}^{t+1}} \mu^{\bar \pi^{0, \tau}_{0:t}}_{t+1}(x^0_{t+1}, x^0_{0:t}, u^0_{0:t}) \bar \pi^{0, \tau}_{t+1}(u^0_{t+1} \mid x^0_{t+1}, x^0_{0:t}, u^0_{0:t}) r^0(x^0_{t+1}, u^0_{t+1}, \mu^{\bar \pi^{\tau}_{0:t}}_{t+1 \mid x^0_{0:t}, u^0_{0:t}}) \\
    &= \sum_{x^0_{0:t} \in {\mathcal X^0}^{t+1}, u^0_{0:t} \in {\mathcal U^0}^{t+1}} \frac{d}{d\tau} \mu^{\bar \pi^{0, \tau}_{0:t}}_{t+1}(x^0_{t+1}, x^0_{0:t}, u^0_{0:t}) \bar \pi^{0, \tau}_{t+1}(u^0_{t+1} \mid x^0_{t+1}, x^0_{0:t}, u^0_{0:t}) r^0(x^0_{t+1}, u^0_{t+1}, \mu^{\bar \pi^{\tau}_{0:t}}_{t+1 \mid x^0_{0:t}, u^0_{0:t}}) \\
    &\quad + \sum_{x^0_{0:t} \in {\mathcal X^0}^{t+1}, u^0_{0:t} \in {\mathcal U^0}^{t+1}} \mu^{\bar \pi^{0, \tau}_{0:t}}_{t+1}(x^0_{t+1}, x^0_{0:t}, u^0_{0:t}) \frac{d}{d\tau} \bar \pi^{0, \tau}_{t+1}(u^0_{t+1} \mid x^0_{t+1}, x^0_{0:t}, u^0_{0:t}) r^0(x^0_{t+1}, u^0_{t+1}, \mu^{\bar \pi^{\tau}_{0:t}}_{t+1 \mid x^0_{0:t}, u^0_{0:t}}) \\
    &\quad + \sum_{x^0_{0:t} \in {\mathcal X^0}^{t+1}, u^0_{0:t} \in {\mathcal U^0}^{t+1}} \mu^{\bar \pi^{0, \tau}_{0:t}}_{t+1}(x^0_{t+1}, x^0_{0:t}, u^0_{0:t}) \bar \pi^{0, \tau}_{t+1}(u^0_{t+1} \mid x^0_{t+1}, x^0_{0:t}, u^0_{0:t}) \frac{d}{d\tau} r^0(x^0_{t+1}, u^0_{t+1}, \mu^{\bar \pi^{\tau}_{0:t}}_{t+1 \mid x^0_{0:t}, u^0_{0:t}})
\end{align*}
This leads to the desired result
\begin{align*}
    &\frac{d}{d\tau} \sum_{x^0_{0:t} \in {\mathcal X^0}^{t+1}, u^0_{0:t} \in {\mathcal U^0}^{t+1}} \Big[ \mathbb P_{\bar \pi^{\tau}, \pi^{0, BR, \tau}}(x^0_{0:t+1}, u^0_{0:t+1}) - \mathbb P_{\bar \pi^{\tau}, \bar \pi^{0, \tau}}(x^0_{0:t+1}, u^0_{0:t+1}) \Big] r^0(x^0_{t+1}, u^0_{t+1}, \mu^{\bar \pi^{\tau}_{0:t}}_{t+1 \mid x^0_{0:t}, u^0_{0:t}}) \\
    &= \sum_{x^0_{0:t} \in {\mathcal X^0}^{t+1}, u^0_{0:t} \in {\mathcal U^0}^{t+1}} \mu^{\pi^{0, BR, \tau}_{0:t-1}}_{t+1}(x^0_{t+1}, x^0_{0:t}, u^0_{0:t}) \pi^{0, BR, \tau}_{t+1}(u^0_{t+1} \mid x^0_{t+1}, x^0_{0:t}, u^0_{0:t}) \frac{d}{d\tau} r^0(x^0_{t+1}, u^0_{t+1}, \mu^{\bar \pi^{\tau}_{0:t}}_{t+1 \mid x^0_{0:t}, u^0_{0:t}}) \\
    &\quad - \sum_{x^0_{0:t} \in {\mathcal X^0}^{t+1}, u^0_{0:t} \in {\mathcal U^0}^{t+1}} \frac{d}{d\tau} \mu^{\bar \pi^{0, \tau}_{0:t}}_{t+1}(x^0_{t+1}, x^0_{0:t}, u^0_{0:t}) \bar \pi^{0, \tau}_{t+1}(u^0_{t+1} \mid x^0_{t+1}, x^0_{0:t}, u^0_{0:t}) r^0(x^0_{t+1}, u^0_{t+1}, \mu^{\bar \pi^{\tau}_{0:t}}_{t+1 \mid x^0_{0:t}, u^0_{0:t}}) \\
    &\quad - \sum_{x^0_{0:t} \in {\mathcal X^0}^{t+1}, u^0_{0:t} \in {\mathcal U^0}^{t+1}} \mu^{\bar \pi^{0, \tau}_{0:t}}_{t+1}(x^0_{t+1}, x^0_{0:t}, u^0_{0:t}) \frac{d}{d\tau} \bar \pi^{0, \tau}_{t+1}(u^0_{t+1} \mid x^0_{t+1}, x^0_{0:t}, u^0_{0:t}) r^0(x^0_{t+1}, u^0_{t+1}, \mu^{\bar \pi^{\tau}_{0:t}}_{t+1 \mid x^0_{0:t}, u^0_{0:t}}) \\
    &\quad - \sum_{x^0_{0:t} \in {\mathcal X^0}^{t+1}, u^0_{0:t} \in {\mathcal U^0}^{t+1}} \mu^{\bar \pi^{0, \tau}_{0:t}}_{t+1}(x^0_{t+1}, x^0_{0:t}, u^0_{0:t}) \bar \pi^{0, \tau}_{t+1}(u^0_{t+1} \mid x^0_{t+1}, x^0_{0:t}, u^0_{0:t}) \frac{d}{d\tau} r^0(x^0_{t+1}, u^0_{t+1}, \mu^{\bar \pi^{\tau}_{0:t}}_{t+1 \mid x^0_{0:t}, u^0_{0:t}})
\end{align*}
by equating the middle two terms to the exploitability terms at time $t+1$, and analyzing the first and last term as 
\begin{align*}
    &\sum_{x^0_{0:t} \in {\mathcal X^0}^{t+1}, u^0_{0:t} \in {\mathcal U^0}^{t+1}} \mu^{\pi^{0, BR, \tau}_{0:t-1}}_{t+1}(x^0_{t+1}, x^0_{0:t}, u^0_{0:t}) \pi^{0, BR, \tau}_{t+1}(u^0_{t+1} \mid x^0_{t+1}, x^0_{0:t}, u^0_{0:t}) \frac{d}{d\tau} r^0(x^0_{t+1}, u^0_{t+1}, \mu^{\bar \pi^{\tau}_{0:t}}_{t+1 \mid x^0_{0:t}, u^0_{0:t}}) \\
    &\quad - \sum_{x^0_{0:t} \in {\mathcal X^0}^{t+1}, u^0_{0:t} \in {\mathcal U^0}^{t+1}} \mu^{\bar \pi^{0, \tau}_{0:t}}_{t+1}(x^0_{t+1}, x^0_{0:t}, u^0_{0:t}) \bar \pi^{0, \tau}_{t+1}(u^0_{t+1} \mid x^0_{t+1}, x^0_{0:t}, u^0_{0:t}) \frac{d}{d\tau} r^0(x^0_{t+1}, u^0_{t+1}, \mu^{\bar \pi^{\tau}_{0:t}}_{t+1 \mid x^0_{0:t}, u^0_{0:t}}) \\
    &= \sum_{x^0_{0:t} \in {\mathcal X^0}^{t+1}, u^0_{0:t} \in {\mathcal U^0}^{t+1}} \mu^{\pi^{0, BR, \tau}_{0:t-1}}_{t+1}(x^0_{t+1}, x^0_{0:t}, u^0_{0:t}) \Big\langle \nabla_\mu \bar r^0(x^0_{t+1}, \mu^{\bar \pi^{\tau}_{0:t}}_{t+1 \mid x^0_{0:t}, u^0_{0:t}}), \frac{d}{d\tau} \mu^{\bar \pi^{\tau}_{0:t}}_{t+1 \mid x^0_{0:t}, u^0_{0:t}} \Big\rangle \\
    &\quad - \sum_{x^0_{0:t} \in {\mathcal X^0}^{t+1}, u^0_{0:t} \in {\mathcal U^0}^{t+1}} \mu^{\bar \pi^{0, \tau}_{0:t}}_{t+1}(x^0_{t+1}, x^0_{0:t}, u^0_{0:t}) \Big\langle \nabla_\mu \bar r^0(x^0_{t+1}, \mu^{\bar \pi^{\tau}_{0:t}}_{t+1 \mid x^0_{0:t}, u^0_{0:t}}), \frac{d}{d\tau} \mu^{\bar \pi^{\tau}_{0:t}}_{t+1 \mid x^0_{0:t}, u^0_{0:t}} \Big\rangle \\
    &= \tau \sum_{x^0_{0:t} \in {\mathcal X^0}^{t+1}, u^0_{0:t} \in {\mathcal U^0}^{t+1}} \frac{d}{d\tau} \mu^{\bar \pi^{0, \tau}_{0:t}}_{t+1}(x^0_{t+1}, x^0_{0:t}, u^0_{0:t}) \Big\langle \nabla_\mu \bar r^0(x^0_{t+1}, \mu^{\bar \pi^{\tau}_{0:t}}_{t+1 \mid x^0_{0:t}, u^0_{0:t}}), \frac{d}{d\tau} \mu^{\bar \pi^{\tau}_{0:t}}_{t+1 \mid x^0_{0:t}, u^0_{0:t}} \Big\rangle \leq 0
\end{align*}
using Property~\ref{prop:2}, and that the term is non-positive by Assumption~\ref{as:fictitious_play_conv}.

Therefore, we have
\begin{align*}
    &\frac{d}{d\tau} \mathcal E^0(\bar \pi^\tau, \bar \pi^{0,\tau}) \leq - \frac 1 \tau \mathcal E^0(\bar \pi^\tau, \bar \pi^{0,\tau}).
\end{align*}

\vskip6pt
\noindent\textbf{Step 3: Combining the results.} In Step 1, we showed that $\frac{d}{d\tau}\mathcal E(\bar{\pi}^\tau, \bar{\pi}^{0,\tau}) \leq -\frac{1}{\tau}\mathcal E(\bar{\pi}^\tau, \bar{\pi}^{0,\tau})$ and in Step 2, we showed that $\frac{d}{d\tau}\mathcal E^0(\bar{\pi}^\tau, \bar{\pi}^{0,\tau})=-\frac{1}{\tau} \mathcal E^0(\bar{\pi}^\tau, \bar{\pi}^{0,\tau})$. Therefore we can conclude that
\begin{equation*}
    \frac{d}{d\tau}\left(\mathcal E(\bar{\pi}^\tau, \bar{\pi}^{0,\tau}) + \mathcal E^0(\bar{\pi}^\tau, \bar{\pi}^{0,\tau})\right)\leq -\frac{1}{\tau}\left(\mathcal E(\bar{\pi}^\tau, \bar{\pi}^{0,\tau}) + \mathcal E^0(\bar{\pi}^\tau, \bar{\pi}^{0,\tau})\right)
\end{equation*}
which shows the total exploitability of the system is a strong Lyapunov function and therefore it converges with rate $\frac{1}{\tau}$.
\end{proof}

\section{Continuity of MF Dynamics} \label{app:lem:Tcont}
In this section, we show continuity of the MF dynamics $T^\pi_t$, which will be used in the following proofs.
\begin{lemma} \label{lem:Tcont}
    Under Assumptions~\ref{ass:m3pcont} and \ref{ass:m3picont}, the transition operator $T^\pi_t$ is uniformly Lipschitz continuous with constant $L_T \coloneqq |\mathcal X| L_P + |\mathcal X| |\mathcal U| L_\Pi + |\mathcal X|^2 |\mathcal U| $.
\end{lemma}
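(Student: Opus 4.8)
The plan is to write $T^\pi_t$ explicitly as a finite sum and control the $L_1$ distance between two of its outputs by telescoping the product $P\cdot\pi_t\cdot\mu$. Concretely, for $x'\in\mathcal X$ one has
\[
  T^\pi_t(x^0,u^0,\mu)(x') = \sum_{x\in\mathcal X}\sum_{u\in\mathcal U} P(x'\mid x,u,x^0,u^0,\mu)\,\pi_t(u\mid x,x^0,\mu)\,\mu(x).
\]
Fixing two triples $(x^0,u^0,\mu)$ and $(x^{0\prime},u^{0\prime},\mu')$ with product ($\sup$) distance $d$, I would start from the estimate $\lVert T^\pi_t(x^0,u^0,\mu) - T^\pi_t(x^{0\prime},u^{0\prime},\mu')\rVert_1 \le \sum_{x',x,u}\bigl\lvert P\,\pi_t\,\mu(x) - P'\,\pi_t'\,\mu'(x)\bigr\rvert$ (pushing the outer sum over $x'$ inside the absolute value via the triangle inequality) and split each summand into $(P-P')\,\pi_t\,\mu(x) + P'\,(\pi_t-\pi_t')\,\mu(x) + P'\,\pi_t'\,(\mu(x)-\mu'(x))$.

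Then I would bound the three resulting sums in turn. For the first, Assumption~\ref{ass:m3pcont} gives $\lvert P - P'\rvert \le L_P d$ entrywise, and since $\sum_{x,u}\pi_t(u\mid x,x^0,\mu)\mu(x)=1$ the remaining free sum over $x'$ produces a factor $\lvert\mathcal X\rvert$, for a contribution $\lvert\mathcal X\rvert L_P d$. For the second, I bound $P'\le 1$, invoke the equi-Lipschitz property of the policy class (Assumption~\ref{ass:m3picont}) to get $\lvert\pi_t-\pi_t'\rvert \le L_\Pi d$ entrywise, use $\sum_x\mu(x)=1$, and pick up $\lvert\mathcal X\rvert\lvert\mathcal U\rvert$ from the free sums over $x'$ and $u$, for a contribution $\lvert\mathcal X\rvert\lvert\mathcal U\rvert L_\Pi d$. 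For the third, I bound $P'\le1$ and $\pi_t'\le1$ and use $\lvert\mu(x)-\mu'(x)\rvert\le\lVert\mu-\mu'\rVert_1\le d$ for every $x$, so that summing the constant $d$ over $(x',x,u)$ yields $\lvert\mathcal X\rvert^2\lvert\mathcal U\rvert d$. Adding the three contributions gives exactly $L_T d$ with $L_T = \lvert\mathcal X\rvert L_P + \lvert\mathcal X\rvert\lvert\mathcal U\rvert L_\Pi + \lvert\mathcal X\rvert^2\lvert\mathcal U\rvert$; since the constants in Assumptions~\ref{ass:m3pcont} and~\ref{ass:m3picont} are uniform over $t$ and over $\pi\in\Pi$, and the estimates did not use the particular base points, the Lipschitz bound is uniform.

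There is no genuine obstacle here: the argument is purely a triangle-inequality bookkeeping exercise. The only points demanding care are (i) keeping straight which copy of each argument ($\mu$ versus $\mu'$, $x^0$ versus $x^{0\prime}$, and so on) occurs in each of the three telescoped factors, and (ii) matching the deliberately crude majorizations ($P\le1$, $\pi_t\le1$, $\lvert\mu(x)-\mu'(x)\rvert\le d$) to the combinatorial prefactors $\lvert\mathcal X\rvert$, $\lvert\mathcal X\rvert\lvert\mathcal U\rvert$, $\lvert\mathcal X\rvert^2\lvert\mathcal U\rvert$ so that the three pieces add up to precisely the claimed $L_T$. A sharper constant would be available by instead retaining $\sum_{x'}P=1$ and $\sum_u\pi_t=1$, but the stated $L_T$ is all that is needed in the downstream arguments (e.g.\ Appendices~\ref{app:m3muconv} and~\ref{app:disc-opt}).
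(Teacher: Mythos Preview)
Your proposal is correct and essentially identical to the paper's own proof: the same telescoping $(P-P')\pi_t\mu + P'(\pi_t-\pi_t')\mu + P'\pi_t'(\mu-\mu')$, the same crude entrywise bounds (deduced from the $L_1$ Lipschitz assumptions), and the same combinatorial counting of free indices to arrive at the three constants $|\mathcal X|L_P$, $|\mathcal X||\mathcal U|L_\Pi$, $|\mathcal X|^2|\mathcal U|$. Your closing remark that sharper constants are available is also accurate, though irrelevant for the paper's purposes.
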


\begin{proof}[Proof of Lemma~\ref{lem:Tcont}]
For any $(x^0, u^0, \mu), (x^{0\prime}, u^{0\prime}, \mu') \in \mathcal X^0 \times \mathcal U^0 \times \mathcal P(\mathcal X)$, we have
\begin{align*}
    &\left\lVert T^\pi_t(x^0, u^0, \mu) - T^\pi_t(x^{0\prime}, u^{0\prime}, \mu') \right\rVert \\
    &\quad = \sum_{x^* \in \mathcal X} \left| \iint P(x^* \mid x, u, x^0, u^0, \mu) \pi_t(\mathrm du \mid x, x^0, \mu) \mu(\mathrm dx) - \iint P(x^* \mid x, u, x^{0\prime}, u^{0\prime}, \mu') \pi_t(\mathrm du \mid x, x^{0\prime}, \mu') \mu'(\mathrm dx) \right| \\
    &\quad \leq \sum_{x^* \in \mathcal X} \iint \left| P(x^* \mid x, u, x^0, u^0, \mu) - P(x^* \mid x, u, x^{0\prime}, u^{0\prime}, \mu')  \right| \pi_t(\mathrm du \mid x, x^0, \mu) \mu(\mathrm dx) \\
    &\qquad + \sum_{x^* \in \mathcal X} \int \left| \int P(x^* \mid x, u, x^{0\prime}, u^{0\prime}, \mu') \left( \pi_t(\mathrm du \mid x, x^0, \mu) - \pi_t(\mathrm du \mid x, x^{0\prime}, \mu') \right) \right| \mu(\mathrm dx) \\
    &\qquad + \sum_{x^* \in \mathcal X} \left| \iint P(x^* \mid x, u, x^{0\prime}, u^{0\prime}, \mu') \pi_t(\mathrm du \mid x, x^{0\prime}, \mu') \left( \mu - \mu' \right) (\mathrm dx) \right| \\
    &\quad \leq \left( |\mathcal X| L_P + |\mathcal X| |\mathcal U| L_\Pi + |\mathcal X|^2 |\mathcal U| \right) d((x^0, u^0, \mu), (x^{0\prime}, u^{0\prime}, \mu'))
\end{align*}
by Assumption~\ref{ass:m3pcont} and Assumption~\ref{ass:m3picont}, where $d((x^0, u^0, \mu), (x^{0\prime}, u^{0\prime}, \mu') = \max(\mathbf 1_{x^0}(x^{0\prime}), \mathbf 1_{u^0}(u^{0\prime}), \mathbf 1_{\mu}(\mu'))$ as discussed in the main text.
\end{proof}

\section{Approximation of Action-Value functions} \label{app:theo}
In this section, we give approximation lemmas for the value functions, together with definitions that were omitted in the main text, and are used in some of the following proofs.

For fixed $(\pi, \pi^0)$, the true minor action-value function is defined by the Bellman equation
\begin{multline*}
    Q_{\pi, \pi^0}(t, x, u, x^0, \mu) = \sum_{u^0} \pi^0_t(u^0 \mid x^0, \mu)
    \bigg[ r(x, u, x^0, u^0, \mu) + \sum_{x^{0\prime}} P^0(x^{0\prime} \mid x^0, u^0, \mu) \\
    \cdot \sum_{x'} P(x' \mid x, u, x^0, u^0, \mu)
    \max_{u'} Q_{\pi, \pi^0}(t+1, x', u', x^{0\prime}, T^\pi_t(x^0, u^0, \mu)) \bigg],
\end{multline*}
while its approximated variant follows
\begin{multline*}
    \hat Q_{\pi, \pi^0}(t, x, u, x^0, \mu) = \sum_{u^0} \pi^0_t(u^0 \mid x^0, \mathrm{proj}_\delta\mu)
    \bigg[ r(x, u, x^0, u^0, \mathrm{proj}_\delta\mu) + \sum_{x^{0\prime}} P^0(x^{0\prime} \mid x^0, u^0, \mathrm{proj}_\delta\mu) \\
    \cdot \sum_{x'} P(x' \mid x, u, x^0, u^0, \mathrm{proj}_\delta\mu)
    \max_{u'} \hat Q_{\pi, \pi^0}(t+1, x', u', x^{0\prime}, T^\pi_t(x^0, u^0, \mathrm{proj}_\delta\mu)) \bigg],
\end{multline*}
since we have $\hat Q_{\pi, \pi^0}(t+1, x', u', x^{0\prime}, \mathrm{proj}_\delta T^\pi_t(x^0, u^0, \mathrm{proj}_\delta\mu)) = \hat Q_{\pi, \pi^0}(t+1, x', u', x^{0\prime}, T^\pi_t(x^0, u^0, \mathrm{proj}_\delta\mu))$ by definition. 

We can show that the approximate Q functions tend uniformly to the true Q functions as the $\delta$-partition becomes fine. Here, the supremum over policies is over $\Pi, \Pi^0$.

\begin{lemma} \label{lem:Qest}
    Under Assumptions~\ref{ass:m3pcont}, \ref{ass:m3rcont}, \ref{ass:m3picont}, we have for $\mu, \nu \in \mathcal P(\mathcal X)$ at all times $t \in \mathcal T$ that
    \begin{equation}
        \sup_{x^0, u^0, \pi, \pi^0} \left| \hat Q^0_{\pi, \pi^0}(t, x^0, u^0, \mu) - \hat Q^0_{\pi, \pi^0}(t, x^0, u^0, \nu) \right| = \mathcal O(\delta + \lVert \mu - \nu \rVert), \label{eq:Q0est} 
    \end{equation}
    \begin{equation}
        \sup_{x, u, x^0, \pi, \pi^0} \left| \hat Q_{\pi, \pi^0}(t, x, u, x^0, \mu) - \hat Q_{\pi, \pi^0}(t, x, u, x^0, \nu) \right| = \mathcal O(\delta + \lVert \mu - \nu \rVert). \label{eq:Qest}
    \end{equation}
\end{lemma}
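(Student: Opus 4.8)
The plan is to prove both \eqref{eq:Q0est} and \eqref{eq:Qest} by backward induction on $t \in \{T, T-1, \dots, 0\}$, with the terminal conditions $\hat Q^0_{\pi,\pi^0}(T,\cdot) = \hat Q_{\pi,\pi^0}(T,\cdot) = 0$ furnishing the base case. Because the discretized Bellman recursion for $\hat Q^0$ never invokes $\hat Q$, I would establish \eqref{eq:Q0est} first and then run a structurally identical induction for \eqref{eq:Qest}. Two preliminary facts are used throughout. First, by the definition of a $\delta$-partition, $\mu$ and $\mathrm{proj}_\delta\mu$ lie in the same part, so $\lVert \mathrm{proj}_\delta\mu - \mu \rVert < \delta$, hence $\lVert \mathrm{proj}_\delta\mu - \mathrm{proj}_\delta\nu \rVert < \lVert \mu - \nu \rVert + 2\delta$. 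Second, since $\mathcal X,\mathcal U,\mathcal X^0,\mathcal U^0$ are finite and $\mathcal P(\mathcal X)$ is compact, a Lipschitz $r^0$ (resp. $r$) is bounded, so the finite-horizon $\hat Q^0,\hat Q$ are uniformly bounded (by e.g. $T\lVert r^0\rVert_\infty$, resp. $T\lVert r\rVert_\infty$), uniformly over $(\pi,\pi^0)$. I would actually prove the quantitative statement $\sup|\hat Q^0(t,\cdot,\mu)-\hat Q^0(t,\cdot,\nu)| \le C_t(\delta+\lVert\mu-\nu\rVert)$ with $C_T = 0$, and likewise for $\hat Q$.

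For the inductive step on $\hat Q^0$, write $\hat\mu \coloneqq \mathrm{proj}_\delta\mu$ and $\hat\nu \coloneqq \mathrm{proj}_\delta\nu$; since $\hat Q^0$ is constant on parts, $\hat Q^0(t,x^0,u^0,\mu)-\hat Q^0(t,x^0,u^0,\nu) = \hat Q^0(t,x^0,u^0,\hat\mu)-\hat Q^0(t,x^0,u^0,\hat\nu)$, and I expand both via the discretized Bellman equation. I would split the difference into: (i) $r^0(x^0,u^0,\hat\mu)-r^0(x^0,u^0,\hat\nu)$, bounded by $L_{r^0}\lVert\hat\mu-\hat\nu\rVert$ using Assumption~\ref{ass:m3rcont}; (ii) the kernel-difference term $\sum_{x^{0\prime}}\big(P^0(x^{0\prime}\mid x^0,u^0,\hat\mu) - P^0(x^{0\prime}\mid x^0,u^0,\hat\nu)\big)\max_{u^{0\prime}}\hat Q^0(t+1,x^{0\prime},u^{0\prime},T^\pi_t(x^0,u^0,\hat\nu))$, bounded by $L_{P^0}\lVert\hat\mu-\hat\nu\rVert$ times the uniform bound on $\hat Q^0$, via Assumption~\ref{ass:m3pcont}; and (iii) the continuation term $\sum_{x^{0\prime}}P^0(x^{0\prime}\mid x^0,u^0,\hat\mu)\big(\max_{u^{0\prime}}\hat Q^0(t+1,\cdot,T^\pi_t(x^0,u^0,\hat\mu)) - \max_{u^{0\prime}}\hat Q^0(t+1,\cdot,T^\pi_t(x^0,u^0,\hat\nu))\big)$. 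For (iii), using that the $\max$ operator is $1$-Lipschitz, then the induction hypothesis at $t+1$ with MF arguments $T^\pi_t(x^0,u^0,\hat\mu)$ and $T^\pi_t(x^0,u^0,\hat\nu)$, and finally Lemma~\ref{lem:Tcont} (giving $\lVert T^\pi_t(x^0,u^0,\hat\mu)-T^\pi_t(x^0,u^0,\hat\nu)\rVert \le L_T\lVert\hat\mu-\hat\nu\rVert$), one obtains a bound $C_{t+1}(\delta + L_T\lVert\hat\mu-\hat\nu\rVert)$. Summing (i)--(iii) and substituting $\lVert\hat\mu-\hat\nu\rVert < \lVert\mu-\nu\rVert + 2\delta$ yields $C_t = a + b\,C_{t+1}$ with policy-independent constants $a,b$, so $C_t$ is finite uniformly over $t$ since $T$ is finite, which is \eqref{eq:Q0est}.

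The minor case \eqref{eq:Qest} runs identically, with one extra piece: the Bellman equation for $\hat Q$ averages the bracket $r(\cdot) + \sum_{x^{0\prime}}P^0(\cdot)\sum_{x'}P(\cdot)\max_{u'}\hat Q(t+1,\cdot)$ over $u^0$ against $\pi^0_t(u^0\mid x^0,\hat\mu)$, so by adding and subtracting I isolate a term $\sum_{u^0}\big(\pi^0_t(u^0\mid x^0,\hat\mu) - \pi^0_t(u^0\mid x^0,\hat\nu)\big)[\,\cdot\,]$, bounded by $L_{\Pi^0}\lVert\hat\mu-\hat\nu\rVert$ times the uniform bound on the (bounded) bracket via the equi-Lipschitz Assumption~\ref{ass:m3picont}; the remaining pieces, now also carrying the $\sum_{x'}P(x'\mid\ldots)$ sum and an inner $\max_{u'}$, are treated exactly as (i)--(iii) using Assumptions~\ref{ass:m3pcont}--\ref{ass:m3rcont} and Lemma~\ref{lem:Tcont}. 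Uniformity over $(\pi,\pi^0)$ is automatic because every constant involved ($L_P,L_{P^0},L_r,L_{r^0},L_{\Pi^0},L_\Pi,L_T$, and the value-function bound) is policy-independent, so the supremum can be taken termwise.

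\textbf{The main obstacle} is purely the constant bookkeeping: the continuation step composes the induction hypothesis with $T^\pi_t$, so each backward step multiplies the previous constant by roughly $L_T \ge 1$, and $C_0$ grows geometrically in $T$. This is acceptable since $T$ is fixed and finite, but the argument must make the recursion $C_t = a + bC_{t+1}$ explicit rather than assert a $t$-independent constant. A minor care point is the additional $2\delta$ incurred whenever one passes from $\mu,\nu$ to $\mathrm{proj}_\delta\mu,\mathrm{proj}_\delta\nu$ inside the nested applications of the hypothesis, and the fact that $T^\pi_t$ need not map into a partition representative (harmless, since $\hat Q,\hat Q^0$ are defined on all of $\mathcal P(\mathcal X)$ as blockwise constants); both affect only constants.
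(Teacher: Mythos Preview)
Your proposal is correct and follows essentially the same route as the paper: backward induction on $t$, the projection estimate $\lVert\mathrm{proj}_\delta\mu-\mathrm{proj}_\delta\nu\rVert\le\lVert\mu-\nu\rVert+2\delta$, the same three-term decomposition (reward, kernel difference, continuation via the induction hypothesis composed with Lemma~\ref{lem:Tcont}), and the additional $\pi^0$- and $P$-difference terms in the minor case. If anything, your explicit recursion $C_t = a + bC_{t+1}$ is more careful about how the constants compound than the paper's write-up.
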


\begin{lemma} \label{lem:Qconv}
    Under Assumptions~\ref{ass:m3pcont}, \ref{ass:m3rcont}, \ref{ass:m3picont}, at all times $t$, the approximate major and minor action-value functions uniformly converge to the true action-value functions,
    \begin{equation}
        \sup_{x^0, u^0, \mu, \pi, \pi^0} | \hat Q^0_{\pi, \pi^0}(t, x^0, u^0, \mu) - Q^0_{\pi, \pi^0}(t, x^0, u^0, \mu) | = \mathcal O(\delta), \label{eq:Q0conv}   
    \end{equation}     
    \begin{equation}
        \sup_{x, u, x^0, \mu, \pi, \pi^0} | \hat Q_{\pi, \pi^0}(t, x, u, x^0, \mu) - Q_{\pi, \pi^0}(t, x, u, x^0, \mu) | = \mathcal O(\delta). \label{eq:Qconv}
    \end{equation}
\end{lemma}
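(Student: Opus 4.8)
The plan is to establish both bounds by backward induction on the time index $t \in \{T, T-1, \ldots, 0\}$, propagating the estimate $\sup |\hat Q^0_{\pi, \pi^0}(t, \cdot) - Q^0_{\pi, \pi^0}(t, \cdot)| = \mathcal O(\delta)$ and its minor-player analogue, where throughout the suprema run over $(x^0, u^0, \mu) \in \mathcal X^0 \times \mathcal U^0 \times \mathcal P(\mathcal X)$ (resp. $(x, u, x^0, \mu)$) and over $(\pi, \pi^0) \in \Pi \times \Pi^0$. The base case $t = T$ is trivial since both value functions vanish by the terminal condition. As a preliminary step I would record that, since $\mathcal X, \mathcal U, \mathcal X^0, \mathcal U^0$ are finite and $r^0$ is Lipschitz (Assumption~\ref{ass:m3rcont}) hence bounded, the iterate $\hat Q^0_{\pi, \pi^0}(t+1, \cdot)$ is bounded by $(T - t - 1) \sup|r^0|$, uniformly in its arguments and in $(\pi, \pi^0)$, with the analogous statement for $\hat Q_{\pi, \pi^0}$; this uniform boundedness is what lets us absorb perturbations of the transition kernels.

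For the major player, fix $(t, x^0, u^0, \mu)$ and $(\pi, \pi^0)$ and subtract the Bellman equations for $\hat Q^0_{\pi, \pi^0}$ and $Q^0_{\pi, \pi^0}$. The running-reward terms differ by at most $|r^0(x^0, u^0, \mathrm{proj}_\delta \mu) - r^0(x^0, u^0, \mu)| \le L_{r^0} \lVert \mathrm{proj}_\delta \mu - \mu \rVert \le L_{r^0} \delta$, using Assumption~\ref{ass:m3rcont} and the defining property of a $\delta$-partition. For the continuation terms I would insert intermediate quantities, using that the $\max$ over $u^{0\prime}$ is a non-expansive operation and that summing against the probability vector $P^0(\cdot \mid x^0, u^0, \mu)$ does not inflate a sup bound, and split the difference into three pieces: (i) swapping $P^0(\cdot \mid x^0, u^0, \mathrm{proj}_\delta \mu)$ for $P^0(\cdot \mid x^0, u^0, \mu)$, which costs $\mathcal O(\delta)$ by Assumption~\ref{ass:m3pcont} together with the above boundedness of $\hat Q^0_{\pi,\pi^0}(t+1,\cdot)$; (ii) swapping the mean-field argument $T^\pi_t(x^0, u^0, \mathrm{proj}_\delta \mu)$ of $\hat Q^0_{\pi, \pi^0}(t+1, \cdot)$ for $T^\pi_t(x^0, u^0, \mu)$, which by Lemma~\ref{lem:Qest} combined with $\lVert T^\pi_t(x^0, u^0, \mathrm{proj}_\delta \mu) - T^\pi_t(x^0, u^0, \mu) \rVert \le L_T \delta$ from Lemma~\ref{lem:Tcont} costs $\mathcal O(\delta + L_T \delta) = \mathcal O(\delta)$; and (iii) swapping $\hat Q^0_{\pi, \pi^0}(t+1, \cdot, T^\pi_t(x^0, u^0, \mu))$ for $Q^0_{\pi, \pi^0}(t+1, \cdot, T^\pi_t(x^0, u^0, \mu))$, which is exactly the inductive hypothesis and costs $\mathcal O(\delta)$. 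Thus the error at $t$ is at most the error at $t+1$ plus $\mathcal O(\delta)$; iterating over the $T$ steps keeps the constant finite and yields \eqref{eq:Q0conv}. Uniformity over $(\pi, \pi^0)$ is automatic, since every constant that appears ($L_{r^0}$, $L_{P^0}$, $L_T$, the $\hat Q^0$ bound) is policy-independent, the policy classes entering only through the equi-Lipschitz constants of Assumption~\ref{ass:m3picont} already baked into $L_T$.

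For the minor player I would run the same induction with two further inner layers in the Bellman recursion, the average $\sum_{u^0} \pi^0_t(u^0 \mid x^0, \mu)$ and the transition $\sum_{x'} P(x' \mid x, u, x^0, u^0, \mu)$. Replacing $\pi^0_t(\cdot \mid x^0, \mathrm{proj}_\delta \mu)$ by $\pi^0_t(\cdot \mid x^0, \mu)$ costs $\mathcal O(\delta)$ by the equi-Lipschitz property of $\Pi^0$ (Assumption~\ref{ass:m3picont}) together with the boundedness of $\hat Q_{\pi,\pi^0}$, and replacing $P(\cdot \mid x, u, x^0, u^0, \mathrm{proj}_\delta \mu)$ by $P(\cdot \mid x, u, x^0, u^0, \mu)$ costs $\mathcal O(\delta)$ by Assumption~\ref{ass:m3pcont}; steps (i)--(iii) are then as before, using \eqref{eq:Qest} of Lemma~\ref{lem:Qest} in place of \eqref{eq:Q0est}. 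This establishes \eqref{eq:Qconv}.

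I expect step (ii) to be the main obstacle: the approximate value function $\hat Q^0_{\pi, \pi^0}(t+1, \cdot)$ is only blockwise constant in its mean-field argument and hence not Lipschitz, so its argument cannot be perturbed by a direct continuity estimate — this is precisely the role of Lemma~\ref{lem:Qest}, whose $\mathcal O(\delta + \lVert \mu - \nu \rVert)$ form is tailored to absorb the discretization jump, and the whole argument relies on having that lemma available. The only other point requiring care is checking that none of the $\mathcal O(\delta)$ constants blow up across the recursion, which is immediate because $|\mathcal T| = T$ is fixed.
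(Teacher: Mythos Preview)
Your proposal is correct and follows essentially the same route as the paper's proof: backward induction on $t$, splitting the Bellman difference into the reward perturbation, the $P^0$ (and for the minor player $\pi^0_t$, $P$) perturbation via Lipschitz continuity and uniform boundedness of $\hat Q$, the mean-field argument perturbation of $\hat Q(t+1,\cdot)$ handled by Lemma~\ref{lem:Qest} combined with Lemma~\ref{lem:Tcont}, and finally the inductive hypothesis for $\hat Q(t+1,\cdot)-Q(t+1,\cdot)$. Your identification of step~(ii) as the delicate point, and of Lemma~\ref{lem:Qest} as its resolution, matches the paper exactly.
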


\section{Estimate for Action-Value Functions}
\begin{proof}[Proof of Lemma~\ref{lem:Qest}]
At time $T-1$, we have for any $\delta > 0$ and $\mu, \nu$ that by Assumption~\ref{ass:m3rcont},
\begin{align*}
    &\sup_{x^0, u^0, \pi, \pi^0} \left| \hat Q^0_{\pi, \pi^0}(T-1, x^0, u^0, \mu) - \hat Q^0_{\pi, \pi^0}(T-1, x^0, u^0, \nu) \right| \\
    &\quad = \sup_{x^0, u^0, \pi, \pi^0} \left| r^0(x^0, u^0, \mathrm{proj}_\delta\mu) - r^0(x^0, u^0, \mathrm{proj}_\delta\nu) \right|
    \leq L_r (2 \delta + \lVert \mu - \nu \rVert) = \mathcal O(\delta + \lVert \mu - \nu \rVert)
\end{align*}
by triangle inequality, as the projection of $\mu, \nu$ can shift $\mu, \nu$ at most by $\delta$ each.

Similarly, for the induction step, assuming \eqref{eq:Q0est} at time $t+1$, then at time $t$ we have:
\begin{align*}
    &\sup_{x^0, u^0, \pi, \pi^0} \left| \hat Q^0_{\pi, \pi^0}(t, x^0, u^0, \mu) - \hat Q^0_{\pi, \pi^0}(t, x^0, u^0, \nu) \right| \\
    &\quad \leq \sup_{x^0, u^0, \pi, \pi^0} \left| r^0(x^0, u^0, \mathrm{proj}_\delta\mu) - r^0(x^0, u^0, \mathrm{proj}_\delta\nu) \right| \\
    &\qquad + \sup_{x^0, u^0, \pi, \pi^0} \left| \sum_{x^{0\prime}} P^0(x^{0\prime} \mid x^0, u^0, \mathrm{proj}_\delta\mu) \max_{u^{0\prime}} \hat Q^0_{\pi, \pi^0}(t+1, x^{0\prime}, u^{0\prime}, T^\pi_t(x^0, u^0, \mathrm{proj}_\delta\mu)) 
    \right.\\&\hspace{2.8cm}\left.
    - \sum_{x^{0\prime}} P^0(x^{0\prime} \mid x^0, u^0, \mathrm{proj}_\delta\nu) \max_{u^{0\prime}} \hat Q^0_{\pi, \pi^0}(t+1, x^{0\prime}, u^{0\prime}, T^\pi_t(x^0, u^0, \mathrm{proj}_\delta\nu)) \right| \\
    &\quad \leq L_r (2 \delta + \lVert \mu - \nu \rVert) + Q^0_{\mathrm{max}} |\mathcal X^0| L_{P^0} (2 \delta + \lVert \mu - \nu \rVert) \\
    &\qquad + 2 \sup_{x^0, u^0, x^{0\prime}, u^{0\prime}, \pi, \pi^0} \left| \hat Q^0_{\pi, \pi^0}(t+1, x^{0\prime}, u^{0\prime}, T^\pi_t(x^0, u^0, \mathrm{proj}_\delta\mu)) - Q^0_{\pi, \pi^0}(t+1, x^{0\prime}, u^{0\prime}, T^\pi_t(x^0, u^0, \mathrm{proj}_\delta\nu)) \right| \\
    &\quad = \mathcal O(\delta + \lVert \mu - \nu \rVert)
\end{align*}
by Assumption~\ref{ass:m3rcont}, Assumption~\ref{ass:m3pcont}, and induction assumption using $\sup_{x^0, u^0} \left\lVert T^\pi_t(x^0, u^0, \mathrm{proj}_\delta\mu) - T^\pi_t(x^0, u^0, \mathrm{proj}_\delta\nu) \right\rVert \leq L_T (2\delta + \lVert \mu - \nu \rVert)$ by Lemma~\ref{lem:Tcont}. Here,  $Q^0_{\mathrm{max}} \coloneqq T \max r^0$.

The same argument for the minor agent completes the proof for \eqref{eq:Qest}: 

At time $T-1$, we have for any $\delta > 0$ and $\mu, \nu$ that by Assumption~\ref{ass:m3rcont}, again
\begin{align*}
    &\sup_{x, u, x^0, \pi, \pi^0} \left| \hat Q_{\pi, \pi^0}(T-1, x, u, x^0, \mu) - \hat Q_{\pi, \pi^0}(T-1, x, u, x^0, \nu) \right| \\
    &\quad = \sup_{x, u, x^0, \pi, \pi^0} \left| \sum_{u^0} \pi^0_{T-1}(u^0 \mid x^0, \mathrm{proj}_\delta\mu) r(x, u, x^0, u^0, \mathrm{proj}_\delta\mu) - \sum_{u^0} \pi^0_{T-1}(u^0 \mid x^0, \mathrm{proj}_\delta\nu) r(x, u, x^0, u^0, \mathrm{proj}_\delta\nu) \right| \\
    &\quad \leq |\mathcal U^0| L_{\Pi^0} (2 \delta + \lVert \mu - \nu \rVert) \max r^0 + L_r (2 \delta + \lVert \mu - \nu \rVert) = \mathcal O(\delta + \lVert \mu - \nu \rVert).
\end{align*}

For the induction step, assuming \eqref{eq:Qest} at time $t+1$, then at time $t$ we have:
\begin{align*}
    &\sup_{x, u, x^0, \pi, \pi^0} \left| \hat Q_{\pi, \pi^0}(t, x, u, x^0, \mu) - \hat Q_{\pi, \pi^0}(t, x, u, x^0, \nu) \right| \\
    &\quad \leq \sup_{x, u, x^0, \pi, \pi^0} \left| \sum_{u^0} \pi^0_t(u^0 \mid x^0, \mathrm{proj}_\delta\mu) r(x, u, x^0, u^0, \mathrm{proj}_\delta\mu) - \sum_{u^0} \pi^0_t(u^0 \mid x^0, \mathrm{proj}_\delta\nu) r(x, u, x^0, u^0, \mathrm{proj}_\delta\nu) \right| \\
    &\qquad + \sup_{x, u, x^0, \pi, \pi^0} \left| \sum_{u^0} \pi^0_t(u^0 \mid x^0, \mathrm{proj}_\delta\mu) \sum_{x^{0\prime}} P^0(x^{0\prime} \mid x^0, u^0, \mathrm{proj}_\delta\mu) 
    \right.\\&\hspace{4.8cm}\left.
    \cdot \sum_{x'} P(x' \mid x, u, x^0, u^0, \mathrm{proj}_\delta\mu) \max_{u'} \hat Q_{\pi, \pi^0}(t+1, x', u', x^{0\prime}, T^\pi_t(x^0, u^0, \mathrm{proj}_\delta\mu)) 
    \right.\\&\hspace{2.8cm}\left.
    - \sum_{u^0} \pi^0_t(u^0 \mid x^0, \mathrm{proj}_\delta\mu) \sum_{x^{0\prime}} P^0(x^{0\prime} \mid x^0, u^0, \mathrm{proj}_\delta\nu) 
    \right.\\&\hspace{4.8cm}\left.
    \cdot \sum_{x'} P(x' \mid x, u, x^0, u^0, \mathrm{proj}_\delta\mu) \max_{u'} \hat Q_{\pi, \pi^0}(t+1, x', u', x^{0\prime}, T^\pi_t(x^0, u^0, \mathrm{proj}_\delta\nu)) \right| \\
    &\quad \leq |\mathcal U^0| L_{\Pi^0} (2 \delta + \lVert \mu - \nu \rVert) \max r^0 + L_r (2 \delta + \lVert \mu - \nu \rVert) \\
    &\qquad + |\mathcal U^0| Q_{\mathrm{max}} L_{\Pi^0} (2 \delta + \lVert \mu - \nu \rVert) + |\mathcal X^0| Q_{\mathrm{max}} L_{P^0} (2 \delta + \lVert \mu - \nu \rVert) + |\mathcal X| Q_{\mathrm{max}} L_P (2 \delta + \lVert \mu - \nu \rVert) \\
    &\qquad + 2 \sup_{x^0, u^0, x', u', x^{0\prime}, \pi, \pi^0} \left| \hat Q_{\pi, \pi^0}(t+1, x', u', x^{0\prime}, T^\pi_t(x^0, u^0, \mathrm{proj}_\delta\mu)) - Q^0_{\pi, \pi^0}(t+1, x', u', x^{0\prime}, T^\pi_t(x^0, u^0, \mathrm{proj}_\delta\nu)) \right| \\
    &\quad = \mathcal O(\delta + \lVert \mu - \nu \rVert)
\end{align*}
by Assumption~\ref{ass:m3picont}, Assumption~\ref{ass:m3rcont}, Assumption~\ref{ass:m3pcont}, and applying the induction assumption on the last term, where we again use $\sup_{x^0, u^0} \left\lVert T^\pi_t(x^0, u^0, \mathrm{proj}_\delta\mu) - T^\pi_t(x^0, u^0, \mathrm{proj}_\delta\nu) \right\rVert \leq L_T (2\delta + \lVert \mu - \nu \rVert)$ by Lemma~\ref{lem:Tcont}. Here,  $Q_{\mathrm{max}} \coloneqq T \max r$. This completes the proof for \eqref{eq:Qest}.
\end{proof}

\section{Convergence of Approximate Action-Value Functions}
\begin{proof}[Proof of Lemma~\ref{lem:Qconv}]
    The proof is by (reverse) induction. At terminal time $t=T-1$, we have by Assumption~\ref{ass:m3rcont}
    \begin{align*}
        \sup_{x^0, u^0, \mu, \pi, \pi^0} \left| \hat Q^0_{\pi, \pi^0}(T-1, x^0, u^0, \mu) - Q^0_{\pi, \pi^0}(T-1, x^0, u^0, \mu) \right| 
        = \sup_{x^0, u^0, \mu, \pi, \pi^0} \left| r^0(x^0, u^0, \mathrm{proj}_\delta\mu) - r^0(x^0, u^0, \mu) \right|
        \leq L_r \delta
    \end{align*}

    Assume \eqref{eq:Q0conv} holds at time $t+1$, then at time $t$ we have
    \begin{align*}
        &\sup_{x^0, u^0, \mu, \pi, \pi^0} \left| \hat Q^0_{\pi, \pi^0}(t, x^0, u^0, \mu) - Q^0_{\pi, \pi^0}(t, x^0, u^0, \mu) \right| \\
        &\quad \leq \sup_{x^0, u^0, \mu, \pi, \pi^0} \left| r^0(x^0, u^0, \mathrm{proj}_\delta\mu) - r^0(x^0, u^0, \mu) \right| \\
        &\qquad + \sup_{x^0, u^0, \mu, \pi, \pi^0} \left| \sum_{x^{0\prime}} P^0(x^{0\prime} \mid x^0, u^0, \mathrm{proj}_\delta\mu) \max_{u^{0\prime}} \hat Q^0_{\pi, \pi^0}(t+1, x^{0\prime}, u^{0\prime}, T^\pi_t(x^0, u^0, \mathrm{proj}_\delta\mu)) 
        \right.\\&\hspace{2.8cm}\left.
        - \sum_{x^{0\prime}} P^0(x^{0\prime} \mid x^0, u^0, \mu) \max_{u^{0\prime}} Q^0_{\pi, \pi^0}(t+1, x^{0\prime}, u^{0\prime}, T^\pi_t(x^0, u^0, \mu)) \right| \\
        &\quad \leq L_r \delta + Q^0_{\mathrm{max}} |\mathcal X^0| L_{P^0} \delta + \mathcal O(\delta) \\
        &\qquad + 2 \sup_{x^0, u^0, \mu, x^{0\prime}, u, \pi, \pi^0} \left| \hat Q^0_{\pi, \pi^0}(t+1, x^{0\prime}, u, T^\pi_t(x^0, u^0, \mu)) - Q^0_{\pi, \pi^0}(t+1, x^{0\prime}, u, T^\pi_t(x^0, u^0, \mu)) \right| = \mathcal O(\delta)
    \end{align*}
    by Assumption~\ref{ass:m3rcont}, Assumption~\ref{ass:m3pcont}, the estimate from Lemma~\ref{lem:Qest} with $| T^\pi_t(x^0, u^0, \mathrm{proj}_\delta\mu) - T^\pi_t(x^0, u^0, \mu) | \leq L_T \delta = \mathcal O(\delta)$ by Lemma~\ref{lem:Tcont}, and the induction assumption for the final term. Here, $Q^0_{\mathrm{max}} \coloneqq T \max r^0$. This completes the proof for \eqref{eq:Q0conv}.

    For the minor player in \eqref{eq:Qconv}, we have by the same argument
    \begin{align*}
        &\sup_{x, u, x^0, \mu, \pi, \pi^0} \left| \hat Q_{\pi, \pi^0}(T-1, x, u, x^0, \mu) - Q_{\pi, \pi^0}(T-1, x, u, x^0, \mu) \right| \\
        &\quad \leq \sup_{x, u, x^0, u^0, \mu, \pi, \pi^0} \left| r(x, u, x^0, u^0, \mathrm{proj}_\delta\mu) - r(x, u, x^0, u^0, \mu) \right|
        \leq L_r \delta
    \end{align*}
    at terminal time $T-1$, and then inductively at any time $t$
    \begin{align*}
        &\sup_{x, u, x^0, \mu, \pi, \pi^0} \left| \hat Q_{\pi, \pi^0}(t, x, u, x^0, \mu) - Q_{\pi, \pi^0}(t, x, u, x^0, \mu) \right| \\
        &\quad \leq \sup_{x, u, x^0, u^0, \mu, \pi, \pi^0} \left| r(x, u, x^0, u^0, \mathrm{proj}_\delta\mu) - r(x, u, x^0, u^0, \mu) \right| \\
        &\qquad + \sup_{x^0, u^0, \mu, \pi, \pi^0} \left| \sum_{u^0} \pi^0_t(u^0 \mid x^0, \mathrm{proj}_\delta\mu) \sum_{x^{0\prime}} P^0(x^{0\prime} \mid x^0, u^0, \mathrm{proj}_\delta\mu)
        \right.\\&\hspace{4.5cm}\left.
        \cdot \sum_{x'} P(x' \mid x, u, x^0, u^0, \mathrm{proj}_\delta\mu) \max_{u'} \hat Q_{\pi, \pi^0}(t+1, x', u', x^{0\prime}, T^\pi_t(x^0, u^0, \mathrm{proj}_\delta\mu))
        \right.\\&\hspace{2.8cm}\left.
        - \sum_{u^0} \pi^0_t(u^0 \mid x^0, \mu) \sum_{x^{0\prime}} P^0(x^{0\prime} \mid x^0, u^0, \mu)
        \right.\\&\hspace{4.5cm}\left.
        \cdot \sum_{x'} P(x' \mid x, u, x^0, u^0, \mu) \max_{u'} Q_{\pi, \pi^0}(t+1, x', u', x^{0\prime}, T^\pi_t(x^0, u^0, \mu)) \right| \\
        &\quad \leq L_r \delta + Q_{\mathrm{max}} |\mathcal U^0| L_{\Pi^0} \delta + Q_{\mathrm{max}} |\mathcal X^0| L_{P^0} \delta + Q_{\mathrm{max}} |\mathcal X| L_{P} \delta + \mathcal O(\delta) \\
        &\qquad + \sup_{\mu, x^{0\prime}, u, \pi, \pi^0} \left| \hat Q_{\pi, \pi^0}(t+1, x', u', x^{0\prime}, T^\pi_t(x^0, u^0, \mu)) - Q_{\pi, \pi^0}(t+1, x', u', x^{0\prime}, T^\pi_t(x^0, u^0, \mu)) \right| = \mathcal O(\delta)
    \end{align*}
    using also Assumption~\ref{ass:m3picont} and $Q_{\mathrm{max}} \coloneqq T \max r$, which completes the proof by induction.
\end{proof}

\section{Propagation of Chaos} \label{app:m3muconv}
\begin{proof}[Proof of Theorem~\ref{thm:m3muconv}]
For readability, we abbreviate the states and actions at time $t$ as $\mathcal J_t \coloneqq (x^{0,N}_t, u^{0,N}_t, x^{1,N}_t, u^{1,N}_t, \ldots, x^{N,N}_t, u^{N,N}_t)$, and write $\E_{\mathcal J_t}$ for the conditional expectation given $\mathcal J_t$. Without loss of generality, we show the statements for families $\mathcal F$ that are additionally uniformly bounded by some constant $M_f$, since the support of $f \in \mathcal F$ is compact and we can add any constant to $f$ without changing the difference between expectations in \eqref{eq:m3muconv-min}. We also define the conditional expectation of the empirical mean field at time $t+1$ given variables at time $t$,
\begin{align*}
    \hat \mu^N_{t+1} \coloneqq T^\pi_t(x^{0,N}_t, u^{0,N}_t, \mu^N_t).
\end{align*}

We show the statement \eqref{eq:m3muconv-min} at all times by induction. At time $t=0$, the statement follows from a law of large numbers (LLN), see also below. Assuming that \eqref{eq:m3muconv-min} holds at time $t$, then at time $t+1$ we have
\begin{align}
    &\sup_{\hat \pi, \pi, \pi^0} \sup_{f \in \mathcal F} \left| \E \left[ f(x^{1,N}_{t+1}, u^{1,N}_{t+1}, x^{0,N}_{t+1}, u^{0,N}_{t+1}, \mu^N_{t+1}) - f(x_{t+1}, u_{t+1}, x^0_{t+1}, u^0_{t+1}, \mu_{t+1}) \right] \right| \nonumber\\
    &\quad \leq \sup_{\hat \pi, \pi, \pi^0} \sup_{f \in \mathcal F} \left| \E \left[ f(x^{1,N}_{t+1}, u^{1,N}_{t+1}, x^{0,N}_{t+1}, u^{0,N}_{t+1}, \mu^N_{t+1}) - f(x^{1,N}_{t+1}, u^{1,N}_{t+1}, x^{0,N}_{t+1}, u^{0,N}_{t+1}, \hat \mu^N_{t+1}) \right] \right| \label{eq:t1}\\
    \begin{split}
        &\quad + \sup_{\hat \pi, \pi, \pi^0} \sup_{f \in \mathcal F} \left| \E \left[ \int f(x^{1,N}_{t+1}, u^{1,N}_{t+1}, x^{0,N}_{t+1}, u^0, \hat \mu^N_{t+1}) \pi^0_{t+1}(\mathrm du^0 \mid x^{0,N}_{t+1}, \mu^N_{t+1}) \right]
        \right.\\&\hspace{2.5cm}\left.
        - \E \left[ \int f(x^{1,N}_{t+1}, u^{1,N}_{t+1}, x^{0,N}_{t+1}, u^0, \hat \mu^N_{t+1}) \pi^0_{t+1}(\mathrm du^0 \mid x^{0,N}_{t+1}, \hat \mu^N_{t+1}) \right] \right|
    \end{split}
    \label{eq:t2} \\
    \begin{split}
        &\quad + \sup_{\hat \pi, \pi, \pi^0} \sup_{f \in \mathcal F} \left| \E \left[ \iint f(x^{1,N}_{t+1}, u, x^{0,N}_{t+1}, u^0, \hat \mu^N_{t+1}) \pi^0_{t+1}(\mathrm du^0 \mid x^{0,N}_{t+1}, \hat \mu^N_{t+1}) \hat \pi_{t+1}(\mathrm du \mid x^{1,N}_{t+1}, x^{0,N}_{t+1}, \mu^N_{t+1}) \right]
        \right.\\&\hspace{2.5cm}\left.
        - \E \left[ \iint f(x^{1,N}_{t+1}, u, x^{0,N}_{t+1}, u^0, \hat \mu^N_{t+1}) \pi^0_{t+1}(\mathrm du^0 \mid x^{0,N}_{t+1}, \hat \mu^N_{t+1}) \hat \pi_{t+1}(\mathrm du \mid x^{1,N}_{t+1}, x^{0,N}_{t+1}, \hat \mu^N_{t+1}) \right] \right|
    \end{split}
    \label{eq:t3} \\
    \begin{split}
        &\quad + \sup_{\hat \pi, \pi, \pi^0} \sup_{f \in \mathcal F} \left| \E \left[ \iint f(x^{1,N}_{t+1}, u, x^{0,N}_{t+1}, u^0, \hat \mu^N_{t+1}) \pi^0_{t+1}(\mathrm du^0 \mid x^{0,N}_{t+1}, \hat \mu^N_{t+1}) \hat \pi_{t+1}(\mathrm du \mid x^{1,N}_{t+1}, x^{0,N}_{t+1}, \hat \mu^N_{t+1}) \right]
        \right.\\&\hspace{2.5cm}\left.
        - \E \left[ \iint f(x_{t+1}, u, x^0_{t+1}, u^0, \mu_{t+1})\pi^0_{t+1}(\mathrm du^0 \mid x^0_{t+1}, \mu_{t+1}) \hat \pi_{t+1}(\mathrm du \mid x_{t+1}, x^0_{t+1}, \mu_{t+1}) \right] \right|
    \end{split}
    \label{eq:t4}
\end{align}

The first term \eqref{eq:t1} is
\begin{align*}
    &\sup_{\hat \pi, \pi, \pi^0} \sup_{f \in \mathcal F} \left| \E \left[ f(x^{1,N}_{t+1}, u^{1,N}_{t+1}, x^{0,N}_{t+1}, u^{0,N}_{t+1}, \mu^N_{t+1}) - f(x^{1,N}_{t+1}, u^{1,N}_{t+1}, x^{0,N}_{t+1}, u^{0,N}_{t+1}, \hat \mu^N_{t+1}) \right] \right| \\
    &\quad \leq \sup_{\hat \pi, \pi, \pi^0} L_f \E \left[ \left\Vert \mu^N_{t+1} - \hat \mu^N_{t+1} \right\Vert \right] \\
    &\quad = \sup_{\hat \pi, \pi, \pi^0} L_f \E \left[ \sum_{x \in \mathcal X} \left| \mu^N_{t+1}(x) - \hat \mu^N_{t+1}(x) \right| \right] \\
    &\quad = \sup_{\hat \pi, \pi, \pi^0} L_f \sum_{x \in \mathcal X} \E \left[ \left| \frac 1 N \sum_{i=1}^N \mathbf 1_x(x^{i,N}_{t+1}) - \frac 1 N \sum_{i=2}^N \mathbf 1_x(x^{i,N}_{t+1}) \right| \right] \\
    &\qquad + \sup_{\hat \pi, \pi, \pi^0} L_f \sum_{x \in \mathcal X} \E \left[ \E_{\mathcal J_t} \left[ \left| \frac 1 N \sum_{i=2}^N \mathbf 1_x(x^{i,N}_{t+1}) - \E_{\mathcal J_t} \left[ \frac 1 N \sum_{i=2}^N \mathbf 1_x(x^{i,N}_{t+1}) \right] \right| \right] \right] \\
    &\qquad + \sup_{\hat \pi, \pi, \pi^0} L_f \sum_{x \in \mathcal X} \E \left[ \left| \frac 1 N \sum_{u \in \mathcal U} P(x \mid x^{1,N}_{t}, u, x^{0,N}_t, u^{0,N}_t, \mu^N_t) \pi_t(u \mid x^{1,N}_{t}, x^{0,N}_t, \mu^N_t) \right| \right] \\
    &\quad \leq L_f |\mathcal X| \left( \frac{1}{N} + \sqrt{\frac{4}{N}} + \frac{|\mathcal U|}{N} \right) = \mathcal O(1/\sqrt{N})
\end{align*}
and tends to zero at rate $\mathcal O(1/\sqrt{N})$, where the last term is the difference between $\E_{\mathcal J_t} \left[ \frac 1 N \sum_{i=2}^N \mathbf 1_x(x^{i,N}_{t+1}) \right]$ and $\hat \mu^N_{t+1}(x) = \E_{\mathcal J_t} \left[ \frac 1 N \sum_{i=1}^N \mathbf 1_x(x^{i,N}_{t+1}) \right]$, while the middle term is obtained by tower rule and analyzed by a weak LLN argument, i.e.
\begin{align*}
    &\sup_{\hat \pi, \pi, \pi^0} L_f \sum_{x \in \mathcal X} \E \left[ \E_{\mathcal J_t} \left[ \left| \frac 1 N \sum_{i=2}^N \mathbf 1_x(x^{i,N}_{t+1}) - \E_{\mathcal J_t} \left[ \frac 1 N \sum_{i=2}^N \mathbf 1_x(x^{i,N}_{t+1}) \right] \right| \right] \right] \\
    &\quad = \sup_{\hat \pi, \pi, \pi^0} L_f \sum_{x \in \mathcal X} \E \left[ \E_{\mathcal J_t} \left[ \left| \frac 1 N \sum_{i=2}^N \left( \mathbf 1_x(x^{i,N}_{t+1}) - \E_{\mathcal J_t} \left[ \mathbf 1_x(x^{i,N}_{t+1}) \right] \right) \right| \right] \right] \\
    &\quad \leq \sup_{\hat \pi, \pi, \pi^0} L_f \sum_{x \in \mathcal X} \E \left[ \E_{\mathcal J_t} \left[ \left( \frac 1 N \sum_{i=2}^N \left( \mathbf 1_x(x^{i,N}_{t+1}) - \E_{\mathcal J_t} \left[ \mathbf 1_x(x^{i,N}_{t+1}) \right] \right) \right)^2 \right] \right]^{1/2} \\
    &\quad = \sup_{\hat \pi, \pi, \pi^0} L_f \sum_{x \in \mathcal X} \E \left[ \E_{\mathcal J_t} \left[ \frac 1 N \sum_{i=2}^N \left( \mathbf 1_x(x^{i,N}_{t+1}) - \E_{\mathcal J_t} \left[ \mathbf 1_x(x^{i,N}_{t+1}) \right] \right)^2 \right] \right]^{1/2} \\
    &\quad \leq L_f |\mathcal X| \sqrt{\frac{N-1}{N^2} \cdot 2^2} \leq L_f |\mathcal X| \sqrt{\frac{4}{N}}
\end{align*}
by conditional independence of $x^{i,N}_{t+1}$ given $\mathcal J_t$.

Similarly, for the second term \eqref{eq:t2} we obtain
\begin{align*}
    \begin{split}
        &\sup_{\hat \pi, \pi, \pi^0} \sup_{f \in \mathcal F} \left| \E \left[ \int f(x^{1,N}_{t+1}, u^{1,N}_{t+1}, x^{0,N}_{t+1}, u^0, \hat \mu^N_{t+1}) \pi^0_{t+1}(\mathrm du^0 \mid x^{0,N}_{t+1}, \mu^N_{t+1}) \right]
        \right.\\&\hspace{2.5cm}\left.
        - \E \left[ \int f(x^{1,N}_{t+1}, u^{1,N}_{t+1}, x^{0,N}_{t+1}, u^0, \hat \mu^N_{t+1}) \pi^0_{t+1}(\mathrm du^0 \mid x^{0,N}_{t+1}, \hat \mu^N_{t+1}) \right] \right|
    \end{split} \\
    &\quad \leq |\mathcal U^0| M_f L_{\Pi_0} \E \left[ \left\Vert \mu^N_{t+1} - \hat \mu^N_{t+1} \right\Vert \right] = \mathcal O(1/\sqrt{N})
\end{align*}
by Assumption~\ref{ass:m3picont}, and also for the third term \eqref{eq:t3},
\begin{align*}
    \begin{split}
        &\sup_{\hat \pi, \pi, \pi^0} \sup_{f \in \mathcal F} \left| \E \left[ \iint f(x^{1,N}_{t+1}, u, x^{0,N}_{t+1}, u^0, \hat \mu^N_{t+1}) \pi^0_{t+1}(\mathrm du^0 \mid x^{0,N}_{t+1}, \hat \mu^N_{t+1}) \hat \pi_{t+1}(\mathrm du \mid x^{1,N}_{t+1}, x^{0,N}_{t+1}, \mu^N_{t+1}) \right]
        \right.\\&\hspace{2.5cm}\left.
        - \E \left[ \iint f(x^{1,N}_{t+1}, u, x^{0,N}_{t+1}, u^0, \hat \mu^N_{t+1}) \pi^0_{t+1}(\mathrm du^0 \mid x^{0,N}_{t+1}, \hat \mu^N_{t+1}) \hat \pi_{t+1}(\mathrm du \mid x^{1,N}_{t+1}, x^{0,N}_{t+1}, \hat \mu^N_{t+1}) \right] \right|
    \end{split} \\
    &\quad \leq |\mathcal U| M_f L_\Pi \E \left[ \left\Vert \mu^N_{t+1} - \hat \mu^N_{t+1} \right\Vert \right] = \mathcal O(1/\sqrt{N}).
\end{align*}

For the last term \eqref{eq:t4}, we have
\begin{align*}
    &\sup_{\hat \pi, \pi, \pi^0} \sup_{f \in \mathcal F} \left| \E \left[ \iint f(x^{1,N}_{t+1}, u, x^{0,N}_{t+1}, u^0, \hat \mu^N_{t+1}) \pi^0_{t+1}(\mathrm du^0 \mid x^{0,N}_{t+1}, \hat \mu^N_{t+1}) \hat \pi_{t+1}(\mathrm du \mid x^{1,N}_{t+1}, x^{0,N}_{t+1}, \hat \mu^N_{t+1}) \right]
    \right.\\&\hspace{2.5cm}\left.
    - \E \left[ \iint f(x_{t+1}, u_{t+1}, x^0_{t+1}, u^0_{t+1}, \mu_{t+1})\pi^0_{t+1}(\mathrm du^0 \mid x^0_{t+1}, \mu_{t+1}) \hat \pi_{t+1}(\mathrm du \mid x_{t+1}, x^0_{t+1}, \mu_{t+1}) \right] \right| \\
    &= \sup_{\hat \pi, \pi, \pi^0} \sup_{f \in \mathcal F} \left| \E \left[ \iint g_2(x, x^0, x^{1,N}_t, u^{1,N}_t, x^{0,N}_t, u^{0,N}_t, \mu^N_t) P^0(\mathrm dx^0 \mid x^{0,N}_t, u^{0,N}_t, \mu^N_t) P(\mathrm dx \mid x^{1,N}_t, u^{1,N}_t, x^{0,N}_t, u^{0,N}_t, \mu^N_t) \right] 
    \right.\\&\hspace{2.5cm}\left.
    - \E \left[ \iiiint g_2(x, x^0, x_t, u_t, x^0_t, u^0_t, \mu_t) P^0(\mathrm dx^0 \mid x^0_t, u^0_t, \mu_t) P(\mathrm dx \mid x_t, u_t, x^0_t, u^0_t, \mu_t) \right] \right|
\end{align*}
where we define 
\begin{align*}
    &g_2(x, x^0, x_t, u_t, x^0_t, u^0_t, \mu_t) \\
    &\quad \coloneqq \iint f(x, u, x^0, u^0, T^\pi_t(x^0_t, u^0_t, \mu_t)) \pi^0_{t+1}(\mathrm du^0 \mid x^0, T^\pi_t(x^0_t, u^0_t, \mu_t)) \hat \pi_{t+1}(\mathrm du \mid x, x^0, T^\pi_t(x^0_t, u^0_t, \mu_t)).
\end{align*} 

We show that the terms inside the expectations are Lipschitz in $(x^{1,N}_t, u^{1,N}_t, x^{0,N}_t, u^{0,N}_t, \mu^N_t)$ and $(x_t, u_t, x^0_t, u^0_t, \mu_t)$, which will imply convergence of the second term at rate $\mathcal O(1/\sqrt{N})$ by the induction assumption, completing the proof of \eqref{eq:m3muconv-min}.

First, note that $T^\pi_t$ is Lipschitz with constant $L_T$ by Lemma~\ref{lem:Tcont}. Therefore, the map $(x, u, x^0, u^0, x^0_t, u^0_t, \mu_t) \mapsto f(x, u, x^0, u^0, T^\pi_t(x^0_t, u^0_t, \mu_t))$ is also Lipschitz with constant $L_f L_T$. We similarly iteratively obtain Lipschitzness of the maps
\begin{align*}
    g_1(x, u, x^0, x_t, u_t, x^0_t, u^0_t, \mu_t) &\coloneqq \int f(x, u, x^0, u^0, T^\pi_t(x^0_t, u^0_t, \mu_t)) \pi^0_{t+1}(\mathrm du^0 \mid x^0, T^\pi_t(x^0_t, u^0_t, \mu_t)) \\
    g_2(x, x^0, x_t, u_t, x^0_t, u^0_t, \mu_t) &\coloneqq \int g_1(x, u, x^0, x_t, u_t, x^0_t, u^0_t, \mu_t) \hat \pi_{t+1}(\mathrm du \mid x, x^0, T^\pi_t(x^0_t, u^0_t, \mu_t)) \\
    g_3(x, x_t, u_t, x^0_t, u^0_t, \mu_t) &\coloneqq \int g_2(x, x^0, x_t, u_t, x^0_t, u^0_t, \mu_t) P^0(\mathrm dx^0 \mid x^0_t, u^0_t, \mu_t) \\
    g_4(x_t, u_t, x^0_t, u^0_t, \mu_t) &\coloneqq \int g_3(x, x_t, u_t, x^0_t, u^0_t, \mu_t) P(\mathrm dx \mid x_t, u_t, x^0_t, u^0_t, \mu_t)
\end{align*}
with Lipschitz constants $L_{g_1} = L_f L_T + |\mathcal U^0| M_f L_{\Pi^0} L_T$, $L_{g_2} = L_{g_1} + |\mathcal U| M_f L_{\Pi} L_T$, $L_{g_3} = L_{g_2} + |\mathcal X^0| M_f L_{P^0}$, $L_{g_4} = L_{g_3} + |\mathcal X| M_f L_{P}$, and finally note that the last term \eqref{eq:t4} is equal to
\begin{align*}
    \sup_{\hat \pi, \pi, \pi^0} \sup_{f \in \mathcal F} \left| \E \left[ g_4(x^{1,N}_t, u^{1,N}_t, x^{0,N}_t, u^{0,N}_t, \mu^N_t) - g_4(x_t, u_t, x^0_t, u^0_t, \mu_t) \right] \right| = \mathcal O(1/\sqrt{N}),
\end{align*}
tending to zero by applying the induction assumption to families of $L_{g_4}$-Lipschitz functions.
\end{proof}

\begin{proof}[Proof of Corollary~\ref{coro:m3muconv}]
The result follows immediately from Theorem~\ref{thm:m3muconv} by noting that $\mathcal F^0 \subseteq \mathcal F$ when considering functions in $\mathcal F^0$ as constant functions over the deviating minor player's variables in $\mathcal F$. 
\end{proof}

\section{Approximate Nash Property} \label{app:varepsNash}
\begin{proof}[Proof of Corollary~\ref{coro:varepsNash}]
Under $(\pi, \pi^0)$, we have for any $\varepsilon > 0$ that there exists $N' \in \mathbb N$ such that for all $N > N'$ we have
\begin{align*}
    &\sup_{\hat \pi \in \Pi} \left| J_N^1((\hat \pi, \pi, \ldots, \pi), \pi^0) - J(\hat \pi, \pi, \pi^0) \right| \\
    &\quad \leq \sup_{\hat \pi \in \Pi} \left| \E \left[ \sum_{t \in \mathcal T} r(x^{1,N}_t, u^{1,N}_t, x^{0,N}_t, u^{0,N}_t, \mu^N_t) \right] - \E \left[ \sum_{t \in \mathcal T} r(x_t, u_t, x^0_t, u^0_t, \mu_t) \right] \right| \\
    &\quad = \sup_{\hat \pi \in \Pi} \left| \sum_{t \in \mathcal T} \E \left[ r(x^{1,N}_t, u^{1,N}_t, x^{0,N}_t, u^{0,N}_t, \mu^N_t) - r(x_t, u_t, x^0_t, u^0_t, \mu_t) \right] \right| \to 0
\end{align*}
by Theorem~\ref{thm:m3muconv} and Assumption~\ref{ass:m3rcont}.

Therefore, using the previous paragraph, for any $\varepsilon > 0$ we also have
\begin{align*}
    &\sup_{\hat \pi \in \Pi} \left( J_1^N((\hat \pi, \ldots, \pi), \pi^0) - J_1^N((\pi, \ldots, \pi), \pi^0) \right) \\
    &\quad \leq \sup_{\hat \pi \in \Pi} \left( J_1^N((\hat \pi, \pi, \ldots, \pi), \pi^0) - J(\hat \pi, \pi, \pi^0) \right) \\
    &\qquad + \sup_{\hat \pi \in \Pi} \left( J(\hat \pi, \pi, \pi^0) - J(\pi, \pi, \pi^0) \right) \\
    &\qquad + \left( J(\pi, \pi, \pi^0) - J_1^N((\pi, \ldots, \pi), \pi^0) \right) \\
    &\quad < \frac{\varepsilon}{2} + 0 + \frac{\varepsilon}{2} = \varepsilon
\end{align*}
for $N$ large enough by definition of M3FNE, which is the desired statement for $i=1$. By symmetry, this applies to all $i \geq 1$. 

In the alternate infinite-horizon case with discount $\gamma \in (0, 1)$ and $\mathcal T \coloneqq \mathbb N$, we first have for any $\varepsilon > 0$ that there exists $N' \in \mathbb N$ such that for all $N > N'$ we have
\begin{align*}
    &\sup_{\hat \pi \in \Pi} \left| J_N^1((\hat \pi, \pi, \ldots, \pi), \pi^0) - J(\hat \pi, \pi, \pi^0) \right| \\
    &\quad \leq \sup_{\hat \pi \in \Pi} \left| \E \left[ \sum_{t=0}^{\infty} \gamma^t r(x^{1,N}_t, u^{1,N}_t, x^{0,N}_t, u^{0,N}_t, \mu^N_t) \right] - \E \left[ \sum_{t=0}^{\infty} \gamma^t r(x_t, u_t, x^0_t, u^0_t, \mu_t) \right] \right| \\
    &\quad = \sup_{\hat \pi \in \Pi} \left| \sum_{t=0}^{T} \gamma^t \E \left[ r(x^{1,N}_t, u^{1,N}_t, x^{0,N}_t, u^{0,N}_t, \mu^N_t) - r(x_t, u_t, x^0_t, u^0_t, \mu_t) \right] \right| + \frac{\varepsilon}{2} \to 0
\end{align*}
by choosing $T$ large enough, and then applying Theorem~\ref{thm:m3muconv}.

An analogous argument for the major player completes the proof, as
\begin{align*}
    &\sup_{\hat \pi \in \Pi} \left| J_N^0((\hat \pi, \pi, \ldots, \pi), \pi^0) - J^0(\hat \pi, \pi, \pi^0) \right| \\
    &\quad \leq \sup_{\hat \pi \in \Pi} \left| \E \left[ \sum_{t \in \mathcal T} r^0(x^{0,N}_t, u^{0,N}_t, \mu^N_t) \right] - \E \left[ \sum_{t \in \mathcal T} r(x_t, u_t, x^0_t, u^0_t, \mu_t) \right] \right| \\
    &\quad = \sup_{\hat \pi \in \Pi} \left| \sum_{t \in \mathcal T} \E \left[ r^0(x^{0,N}_t, u^{0,N}_t, \mu^N_t) - r(x_t, u_t, x^0_t, u^0_t, \mu_t) \right] \right| \to 0
\end{align*}
by Corollary~\ref{coro:m3muconv} and Assumption~\ref{ass:m3rcont}.
\end{proof}

\section{Convergence of Approximate Exploitability} \label{app:disc-opt}
\begin{proof}[Proof of Theorem~\ref{thm:disc-opt}]
Observe that the convergence of approximate minor objectives to the true objectives as $\delta \to 0$,
\begin{align} \label{eq:Vconv}
    \sup_{x, x^0, \mu, \pi, \pi^0}\left| \hat V^{\pi}_{\pi, \pi^0}(t, x, x^0, \mu) - V^{\pi}_{\pi, \pi^0}(t, x, x^0, \mu) \right| = \mathcal O(\delta) \to 0
\end{align}
at all times $t$, follows by the same arguments as in Lemma~\ref{lem:Qconv}. The only difference is that we estimate one more term from policy evaluation instead of the $\max$ operation, using continuity for $\pi$ by Assumption~\ref{ass:m3picont}.

Therefore, the approximate minor objective converges as desired, 
\begin{align*}
    \hat J(\pi, \pi^0) \coloneqq \sum_{x, x^0} \mu_0(x) \mu^0_0(x^0) \hat V^{\pi}_{\pi, \pi^0}(0, x, x^0, \mu_0) \to \sum_{x, x^0} \mu_0(x) \mu^0_0(x^0) V^{\pi}_{\pi, \pi^0}(0, x, x^0, \mu_0) = J(\pi, \pi^0).
\end{align*} 

On the other hand, for the approximate exploitability of the minor player
\begin{align*}
    \hat{\mathcal E}(\pi, \pi^0) = \sum_{x, x^0} \mu_0(x) \mu^0_0(x^0) \left( \max_{\hat \pi' \in \hat \Pi} \hat V^{\hat \pi'}_{\pi, \pi^0}(0, x, x^0, \mu_0)) - \hat V^{\pi}_{\pi, \pi^0}(0, x, x^0, \mu_0) \right)
\end{align*}
and its true exploitability 
\begin{align*}
    \mathcal E(\pi, \pi^0) = \sum_{x, x^0} \mu_0(x) \mu^0_0(x^0) \left( \max_{\pi' \in \Pi} V^{\pi'}_{\pi, \pi^0}(0, x, x^0, \mu_0)) - V^{\pi}_{\pi, \pi^0}(0, x, x^0, \mu_0) \right)
\end{align*}
we first note that $\max_{\hat \pi' \in \hat \Pi} \hat V^{\hat \pi'}_{\pi, \pi^0} = \hat V_{\pi, \pi^0}$ and $\max_{\pi' \in \Pi} V^{\pi'}_{\pi, \pi^0} = V_{\pi, \pi^0}$ \citep[Theorem~3.2.1 and Condition~3.3.4]{hernandez2012discrete}, where the approximate and true optimal value functions are defined by the maximum of the action-value functions over actions, 
\begin{align*}
    \hat V_{\pi, \pi^0}(t, x, x^0, \mu) \coloneqq \max_u \hat Q_{\pi, \pi^0}(t, x, u, x^0, \mu), \quad V_{\pi, \pi^0}(t, x, x^0, \mu) \coloneqq \max_u Q_{\pi, \pi^0}(t, x, u, x^0, \mu).
\end{align*}

Therefore, we obtain
\begin{align*}
    &\hat{\mathcal E}(\pi, \pi^0) - \mathcal E(\pi, \pi^0) \\
    &\quad = \sum_{x, x^0} \mu_0(x) \mu^0_0(x^0) \left( \hat V_{\pi, \pi^0}(0, x, x^0, \mu_0) - V_{\pi, \pi^0}(0, x, x^0, \mu_0) \right) \\
    &\qquad + \sum_{x, x^0} \mu_0(x) \mu^0_0(x^0) \left( V^{\pi}_{\pi, \pi^0}(0, x, x^0, \mu_0) - \hat V^{\pi}_{\pi, \pi^0}(0, x, x^0, \mu_0) \right) = \mathcal O(\delta) \to 0
\end{align*}
where the first term is estimated by Lemma~\ref{lem:Qconv}, and similarly the second by \eqref{eq:Vconv}.

Analogous arguments for the major player complete the proof.
\end{proof}

\section{Additional Experimental Details} \label{app:exp}
In the following, we give a detailed description of the problems considered in evaluation. For initialization of policies, unless mentioned, we use the policy that always picks the first action, in order of definition in the main text. We run our experiments each on a single core of an Intel Xeon Platinum 9242 CPU with 4 GB of memory (RedHat 8.8, and without GPUs). We used around $20 \, 000$ CPU core hours. The code is based on Python 3.9 and NumPy 1.23.4 \citep{harris2020array}, and can be found in the supplementary material.

\subsubsection{SIS epidemics model}
Formally, minor players have states $\mathcal X \coloneqq \{ S, I \}$ for susceptible ($S$) and infected ($I$), and can choose between actions $\mathcal U \coloneqq \{ P, \bar P \}$ for prevention ($P$) and no prevention ($\bar P$). The major player has states $\mathcal X^0 \coloneqq \{ H, L \}$ for high ($H$) and low ($L$) transmissibility regimes (e.g. from seasonal changes or virus mutations), and actions $\mathcal U^0 \coloneqq \{ F, \bar F \}$ for forcing ($F$) preventative actions, or not ($\bar F$). The minor dynamics are then given by
\begin{align*}
    P(I \mid S, \bar P, x^0, u^0, \mu_t) &= (0.5 + \mathbf 1_{H}(x^0) + \mathbf 1_{\bar F}(u^0)) \alpha \mu_t(I) \Delta t, \\
    P(I \mid S, P, \ldots) &= 0, \quad 
    P(S \mid I, \ldots) = \beta \Delta t
\end{align*}
for transmissibility $\alpha > 0$, recovery rate $\beta > 0$ and step size $\Delta t > 0$. The major dynamics are exogeneous and given by
\begin{align*}
    P^0(H \mid L, \ldots) = P^0(L \mid H, \ldots) &= \alpha^0 \Delta t
\end{align*}
for rate $\alpha^0 > 0$. Lastly, the reward functions will be set as
\begin{align*}
    r(x, u, x^0, u^0, \mu) &= - c_I \mathbf 1_{I}(x) - c_P \mathbf 1_{P}(u) (\mathbf 1_{F}(u^0) + 0.5), \\
    r^0(x^0, u^0, \mu) &= - c^0_\mu \mu(I) - c^0_F \mathbf 1_{F}(u^0) (0.5 - \mu(I)),
\end{align*}
i.e. the major player wants to keep infections low via preventative actions, while the minor players are interested only in their own infection, trading off between infection and costly prevention. The major government player has an reputation cost associated with forcing preventative actions that decreases with increasing number of infected players.

Concretely, as parameters we use $\alpha = 0.8$, $\beta = 0.2$, $\mu_0(I) = 0.2$, $\mu^0_0(H) = 0.5$, $\alpha^0 = 0.4$, $\Delta t = 0.1$, $c_I = 0.75$, $c_P = 0.5$, $c^0_\mu = 2$, $c^0_F = 1$ and a horizon of $T = 300$ for each episode.

\subsubsection{Buffet problem}
Formally, minor players have states $\mathcal X = [L]$ for $L$ buffet locations, and can choose to move to any location $\mathcal U = [L]$ with geometric arrival rate, resulting in minor dynamics
\begin{align*}
    P(n \mid [L] \setminus {n}, n, \ldots) &= \alpha \Delta t, \\
    P(n \mid [L] \setminus {n}, [L] \setminus {n}, \ldots) &= 1 - \alpha \Delta t, \\
    P(n \mid n, n, \ldots) &= 1.
\end{align*} 

The major player state consists of the food state of the foraging locations, and the major player at any time tries to fill one of the 3 foraging locations such that the locations remain as full as possible, and optionally as equal as possible. Hence, the major player has states $\mathcal X^0 = \{ 0, \ldots, B-1 \}^L$ indicating the buffet filling status at each of $L$ locations, and actions $\mathcal U^0 = [L]$ for filling up the buffet at a specific location. The major dynamics are such that a filling at location $n$ is gained with probability $\alpha^0_+ \Delta t$ and lost with probability $\alpha^0_- \mu(n) \Delta t$ whenever the current mean field is $\mu$.

Lastly, the rewards are defined as
\begin{align*}
    r(x, u, x^0, u^0, \mu) = c_f x^0_x - c_c \mu(x) - c_u (1 - \mathbf 1_{x}(u)), \\
    r^0(x^0, u^0, \mu) = \frac 1 L \sum_{i \in [L]} \left( c^0_f x^0_i - c^0_b \left| x^0_i - \frac 1 L \sum_{i \in [L]} x^0_i \right| \right),
\end{align*}
where we have the reward coefficients $c_f$ and $c^0_f$ for filled buffets, the crowdedness cost $c_c$, the movement cost $c_u$, and the imbalance cost $c^0_b$.

Concretely, as parameters we use $B=5$, $L=2$, $\alpha = 0.7$, $\mu_0(0) = 1$, $\mu^0_0 = \mathrm{Unif}$, $\alpha^0_+ = 0.9$, $\alpha^0_- = 1.0$, $\Delta t = 0.2$, $c_f = 0.75$, $c_c = 0.5$, $c_u = 1.0$, $c^0_f = 2$, $c^0_b = 1$ and a horizon of $T = 100$ for each episode.

\subsubsection{Advertisement duopoly model}
The regulator chooses one of the actions $\mathcal{U}^0 = \{0, 1, 2\}$, where $0$ denotes average price, $1$ denotes low price and $2$ denotes high price for advertisement by the second company. The regulator's state is $\mathcal{X}^0=\{1,2\}$ where $i,\ i=\{1,2\}$ denotes the case where Company $i$ is more aggressive in their advertisement. According to the state and the action of the regulator, the company $i$ chooses their advertisement level $a_i(u^0, x^0)$ where $a_i(\cdot, \cdot)$ is a deterministic function. Similar to the SIS model, major dynamics are not influenced and given by $P^0(1|2, \dots)=P^0(2|1, \dots)=c\Delta t$ where $c>0$ is an exogenous constant.

Minor players' state space is $\mathcal{X}=\{1,2\}$ where $i,\ i=\{1,2\}$ denotes that they buy product $i$ and they choose one of the actions $\mathcal{U}=\{O, C\}$ where $O$ denotes that they are open to changes and $C$ denotes that they are close to changes.
\begin{equation*}
    P(i|i^{-1}, x^0, u^0, u) = [a_i(x^0, u^0)- a_{i^{-1}}(x^0, u^0)]\lambda^u \Delta t
\end{equation*}
where $\lambda^u$ is a coefficient that depends on the control of minor player with $\lambda^O>\lambda^C$.

The reward functions are given as
\begin{equation*}
    \begin{aligned}
        r(x, u, x^0, u^0, \mu) &= \sum_{x\in\mathcal{X}} \mathbf{1}_{\{x=i\}}\big[ c_\mu (\mu(i)-\mu(i^{-1}))+ c_a a_i(x^0, u^0)\big] - \sum_{u'\in \mathcal{U}}\mathbf{1}_{\{u=u'\}} c_{u'}, \\
        r^0(x^0, u^0, \mu) &= -c^0_m |\mu(1)-\mu(2)| + c^0_a \mathbf{1}_{\{u^0 \geq 1\}}.
    \end{aligned}
\end{equation*}

Concretely, as parameters we use $\Delta t = 0.3$, $c = 0.05$, $\mu_0 = \mathrm{Unif}$, $\mu^0_0(1) = 1$, $c_C = 0.75$, $c_O = 1.0$, $c_a = 1.0$, $c_\mu = 1.0$, $c^0_a = 0.1$, $c^0_m = 1$, $\lambda^U = 0.2$, $\lambda^O = 1.2$, we let $a_i(u^0, x^0) = k_0 + k^0_x \mathbf 1_{i}(x^0) + k^0_u \mathbf 1_{i}(u^0)$ for $k_0 = 0.2$, $k^0_x = 0.5$, $k^0_u = 0.7$, and consider a horizon of $T = 100$ for each episode.

\subsection{More finite horizon results}
Extending the qualitative results in the main text, in Figures~\ref{fig:policies-buf} and \ref{fig:policies-ad} we see plausible qualitative equilibrium behavior in the finite horizon case for the Buffet and Advertisement problem: In Buffet, players begin to move to the other location as the difference in fillings becomes sufficiently large, until the other location reaches a sufficiently high number of other players. This can be seen both in the visualization of policies, and in the example trajectory plot. Such behavior is plausible, as the instantaneous cost of moving from one location to another must be higher than the perceived future gain from being at the desired location, leading to the observed hysteresis effect. We can also observe the effect of a finite time horizon as $t \to T$, as a potential change in location before the buffet ends is not useful in terms of improving rewards. As expected, the learned policies are symmetric in the locations. Meanwhile, in Advertisement, players quickly run into an equilibrium that primarily depends on the exogeneous major player state. 

Further, as shown in Figure~\ref{fig:init}, the behavior of the FP algorithm is consistent regardless of the choice of initialization. This implies robustness against the initialization of the algorithm. And as shown in the main text for the learned policy, we also have for the maximum entropy uniform policy a convergence of objectives over both discretization and number of agents, see Figures~\ref{fig:J_discretization_maxent} and \ref{fig:J_num_agents_maxent} respectively. In particular, this maximum entropy policy trivially fulfills Lipschitz conditions as in Assumption~\ref{ass:m3picont}, and again verifies Theorems~\ref{thm:m3muconv} and \ref{thm:disc-opt}.

\begin{figure}
    \centering
    \includegraphics[width=0.99\linewidth]{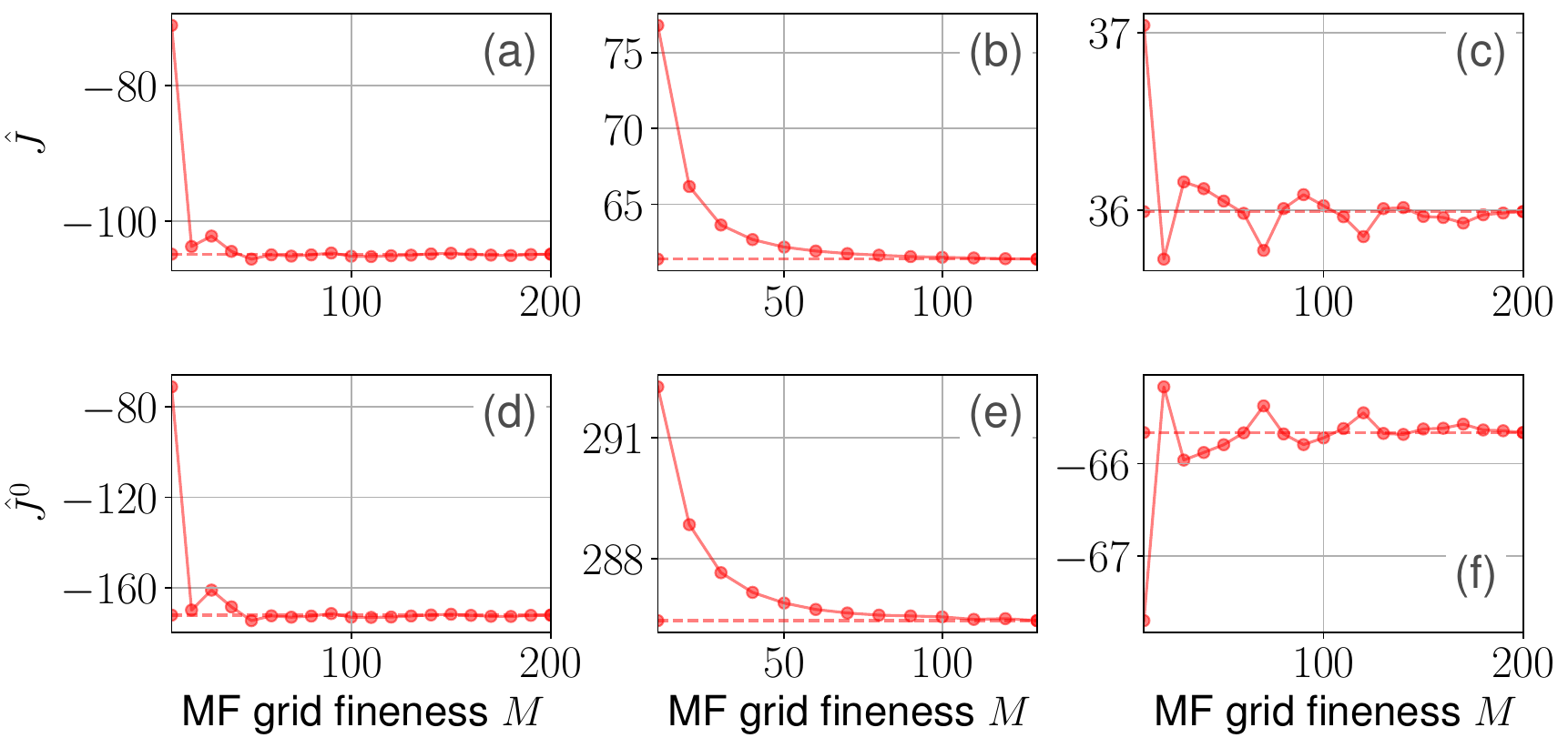}
    \caption{The approximate objectives of the uniform policy (dashed: right-most entry) quickly converge with finer discretization. (a-c): Minor exploitability, (d-f): major exploitability, (a, d): SIS, (b, e): Buffet, (c, f): Advertisement.}
    \label{fig:J_discretization_maxent}
\end{figure}

\begin{figure}
    \centering
    \includegraphics[width=0.99\linewidth]{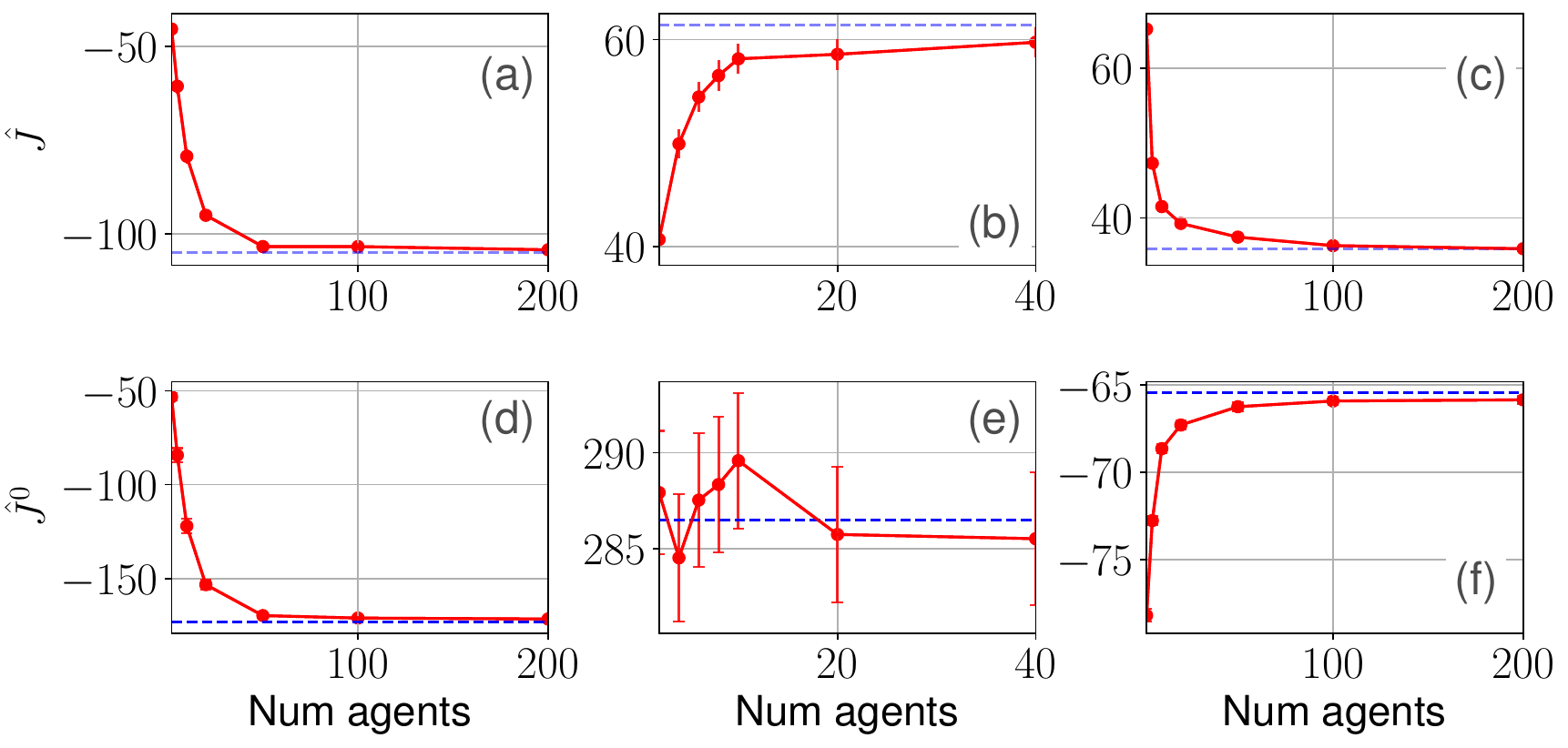}
    \caption{The mean $N$-player objective (red) over $1000$ (or $5000$ in Buffet) episodes, with $95\%$ confidence interval, against MF predictions $\hat J$, $\hat J^0$ of maximum entropy policy (blue, dashed). (a, d): SIS, (b, e): Buffet, (c, f): Advertisement.}
    \label{fig:J_num_agents_maxent}
\end{figure}

\begin{figure}
    \centering
    \includegraphics[width=0.7\linewidth]{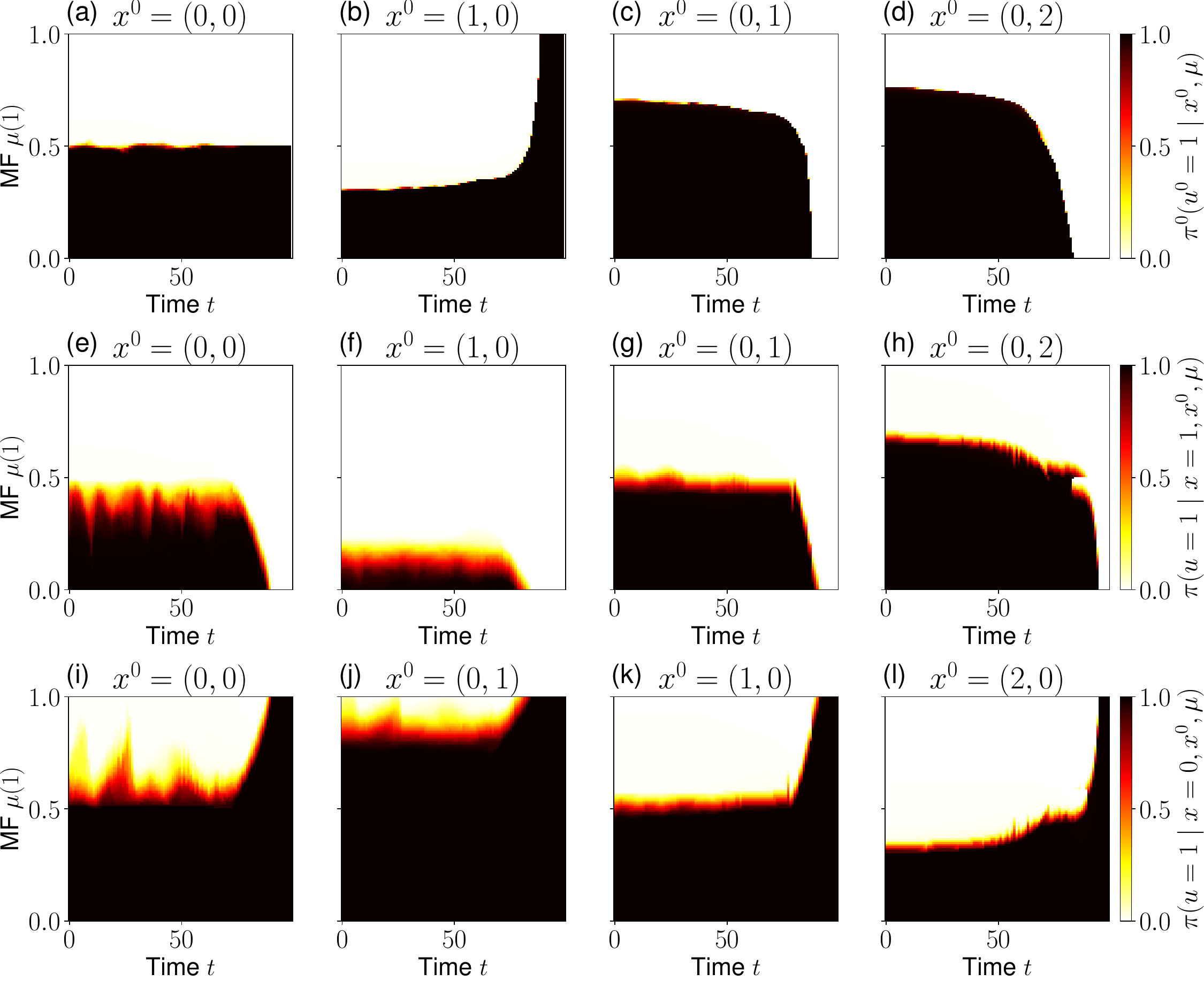}
    \includegraphics[width=0.7\linewidth]{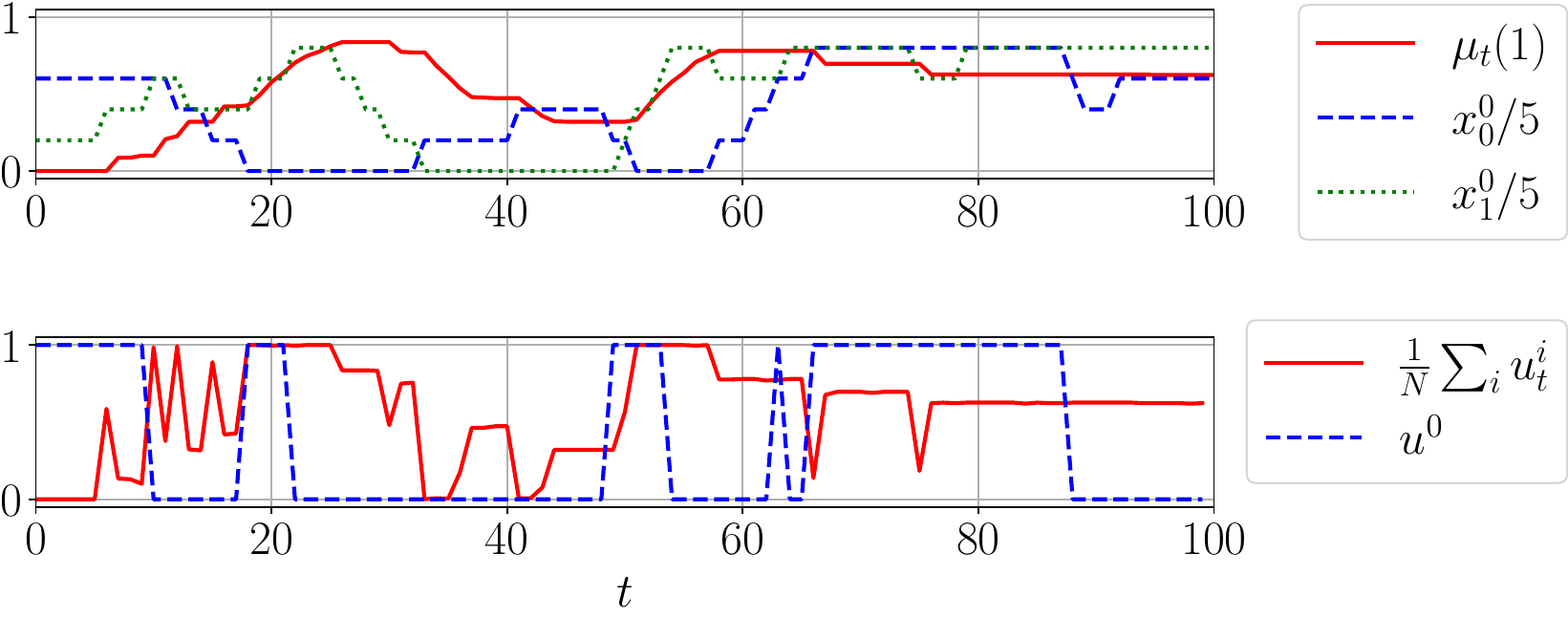}
    \caption{Qualitative behavior in the finite horizon case for the Buffet problem.}
    \label{fig:policies-buf}
\end{figure}

\begin{figure}
    \centering
    \includegraphics[width=0.7\linewidth]{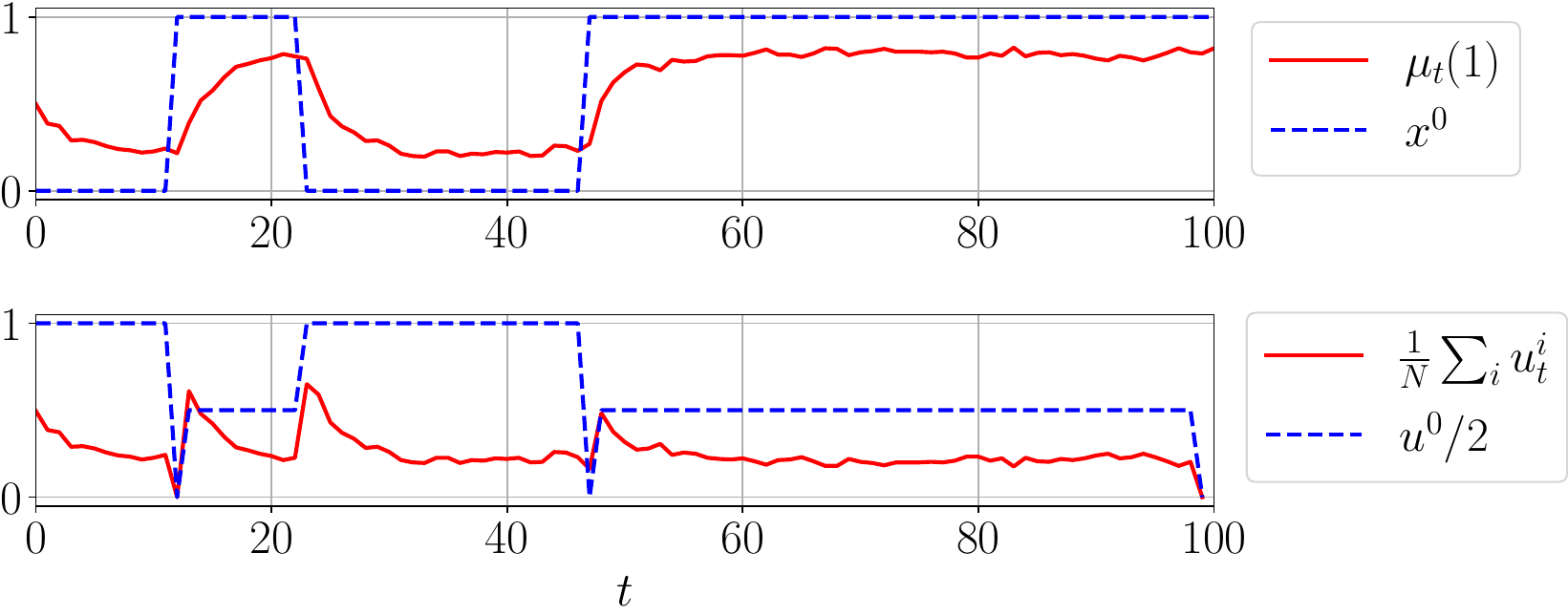}
    \caption{Qualitative behavior in the finite horizon case for the Advertisement problem.}
    \label{fig:policies-ad}
\end{figure}

\begin{figure}
    \centering
    \includegraphics[width=0.7\linewidth]{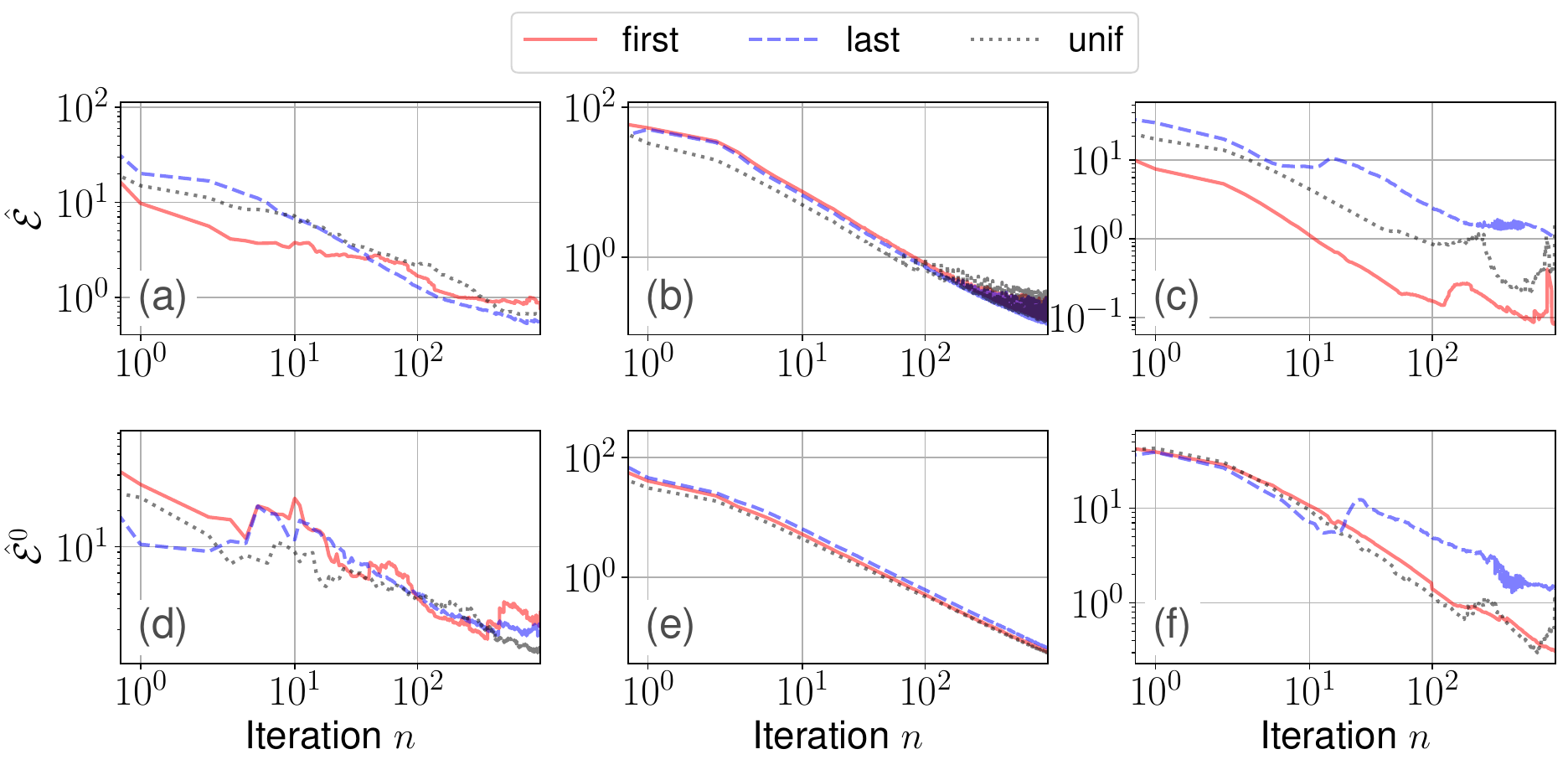}
    \caption{The training curve of FP for various initializations. \textit{first}: initial policy assigns all mass on the first action; \textit{last}: all mass on the last action; \textit{unif}: the uniform maximum entropy policy. Here, actions are ordered as they appear in the problem description. (a-c): Minor exploitability, (d-f): major exploitability, (a, d): SIS, (b, e): Buffet, (c, f): Advertisement.}
    \label{fig:init}
\end{figure}

\subsection{Infinite-horizon discounted results}
As discussed in the main text, we can extend our algorithm to the infinite-horizon discounted objective case. We observe similar results and behavior as for the finite-horizon. For the infinite-horizon case, we apply value iteration to compute best responses, stopping value iteration when the maximum TD error over all states is less than $10^{-5}$.

In particular, in Figure~\ref{fig:inf_exploitability-fpi}, we observe the usual non-convergence of FPI, whereas the FP algorithm together with value iteration converges in terms of exploitability. Only sometimes does the fixed-point iteration converge (here in SIS for $M=80$), which again motivates the formulation of a FP algorithm. In the second part of the figure, we verify our empirical contribution, i.e. the FP algorithm, which generalizes also to the infinite-horizon discounted objectives.

In Figure~\ref{fig:inf_J_discretization_maxent},we can see the convergence of objectives over discretization, both for the maximum entropy policy and the FP-learned policy, as well as the stability of the FP algorithm over discretization. The qualitative behavior is similar as the one seen in the main text for the finite horizon case. Further, in Figure~\ref{fig:inf_J_num_agents_maxent}, the convergence of objectives and therefore the propagation of chaos over an increasing number of players is again supported, both for the maximum entropy policy and the FP-learned policy.

Lastly, in Figure~\ref{fig:inf_qual}, the qualitative behavior of SIS is shown and is comparable to the behavior in Figure~\ref{fig:qual}, except for the absence of a transient finite-horizon effect near the end of the problem, due to the stationarity of the optimal policy under the discounted infinite-horizon objective. 

\begin{figure}
    \centering
    \hfill{}
    \includegraphics[width=0.47\linewidth]{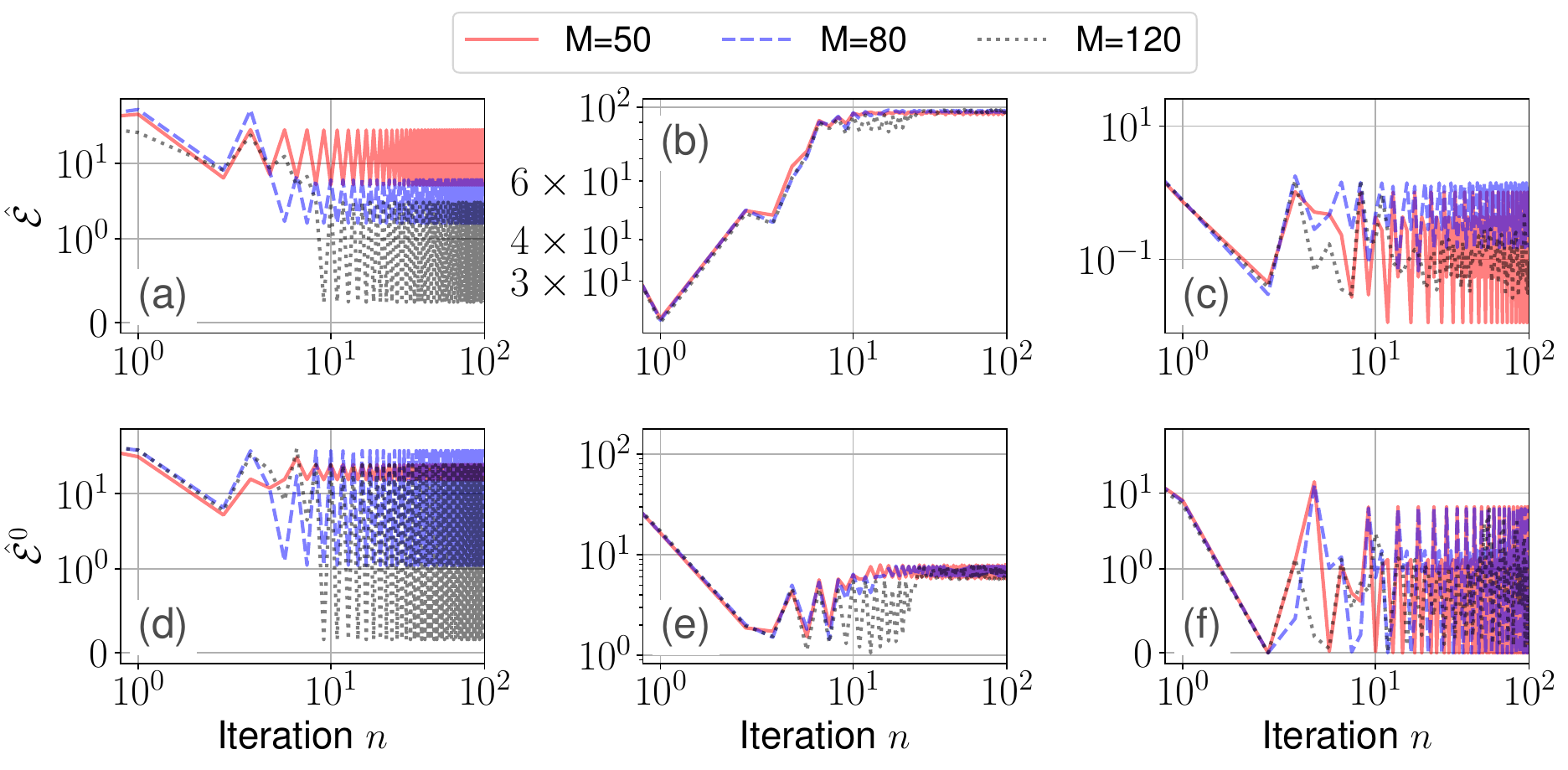}
    \hfill{}
    \includegraphics[width=0.47\linewidth]{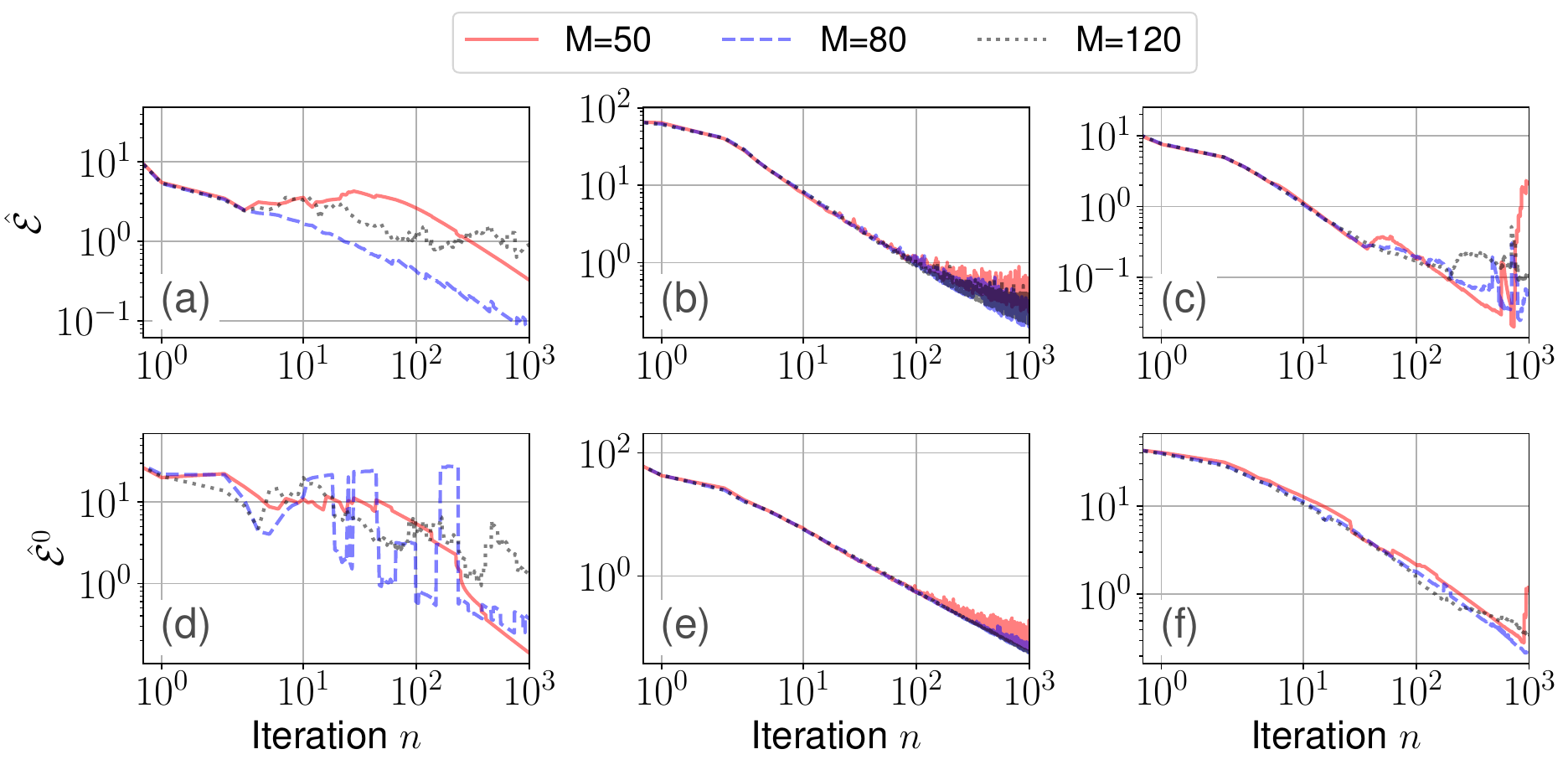}
    \hfill{}
    \caption{Exploitability over iterations of infinite-horizon FPI (left) and FP (right) can run into a limit cycle. (c, f: Advertisement), (a, d: SIS), (b, e: Buffet). (a-c): Minor exploitability, (d-f): major exploitability.}
    \label{fig:inf_exploitability-fpi}
\end{figure}

\begin{figure}
    \centering
    \hfill{}
    \includegraphics[width=0.47\linewidth]{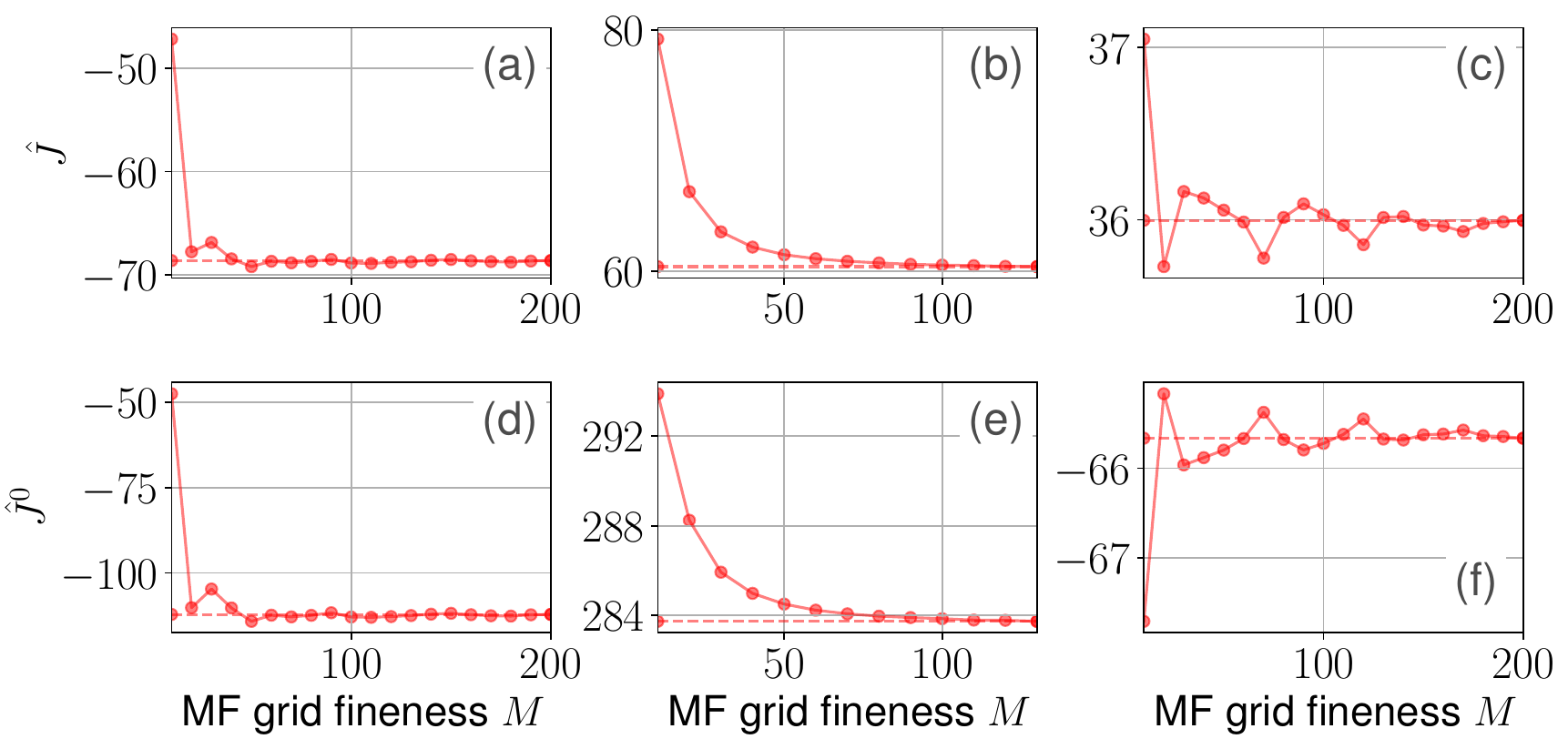}
    \hfill{}
    \includegraphics[width=0.47\linewidth]{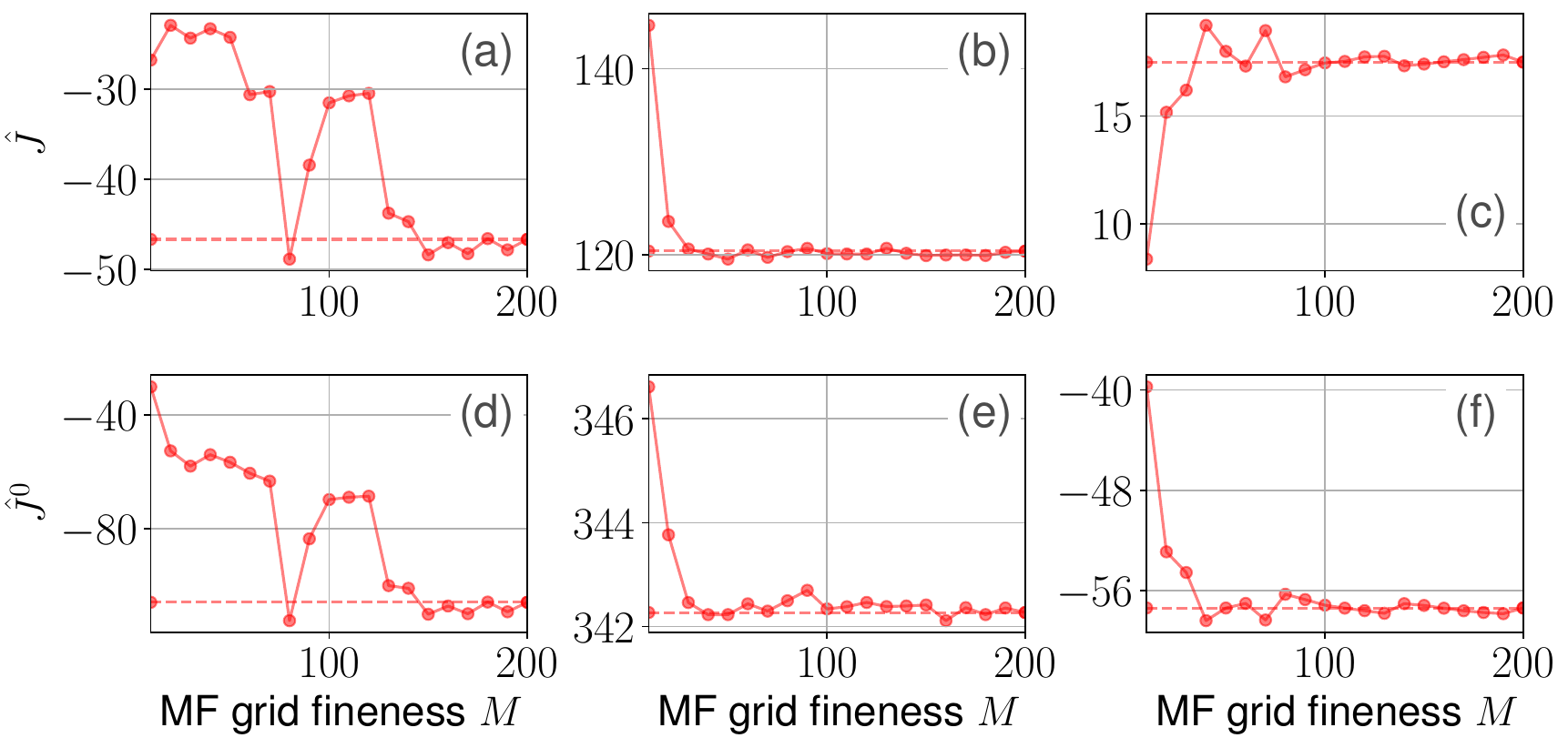}
    \hfill{}
    \caption{The infinite-horizon objective $J$, $J^0$ of the maximum entropy policy (left) and FP-learned policy (right) under discretization (dashed: right-most entry). The objectives quickly converge with increasing discretization fineness. (a-c): Minor exploitability, (d-f): major exploitability, (a, d): SIS, (b, e): Buffet, (c, f): Advertisement.}
    \label{fig:inf_J_discretization_maxent}
\end{figure}

\begin{figure}
    \centering
    \hfill{}
    \includegraphics[width=0.47\linewidth]{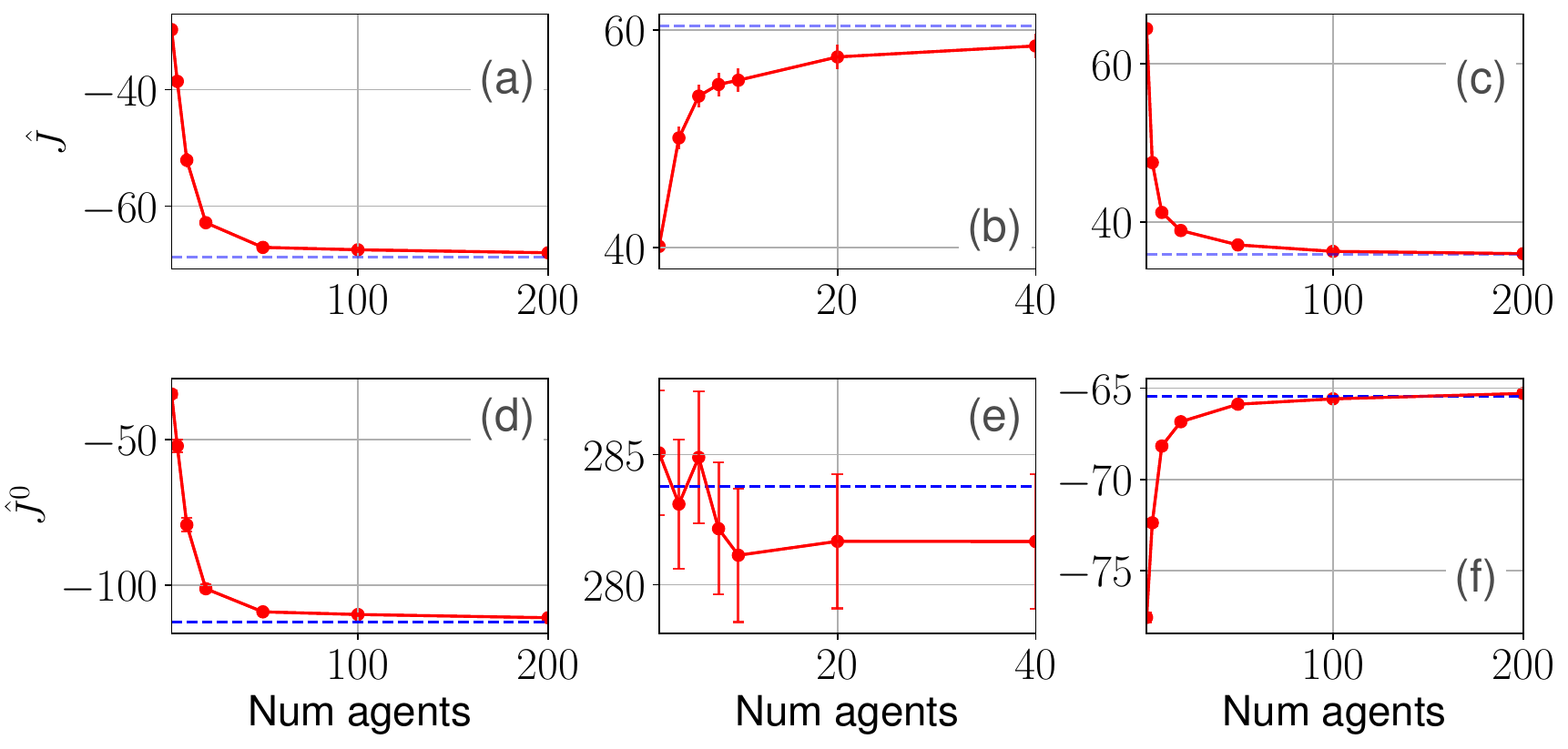}
    \hfill{}
    \includegraphics[width=0.47\linewidth]{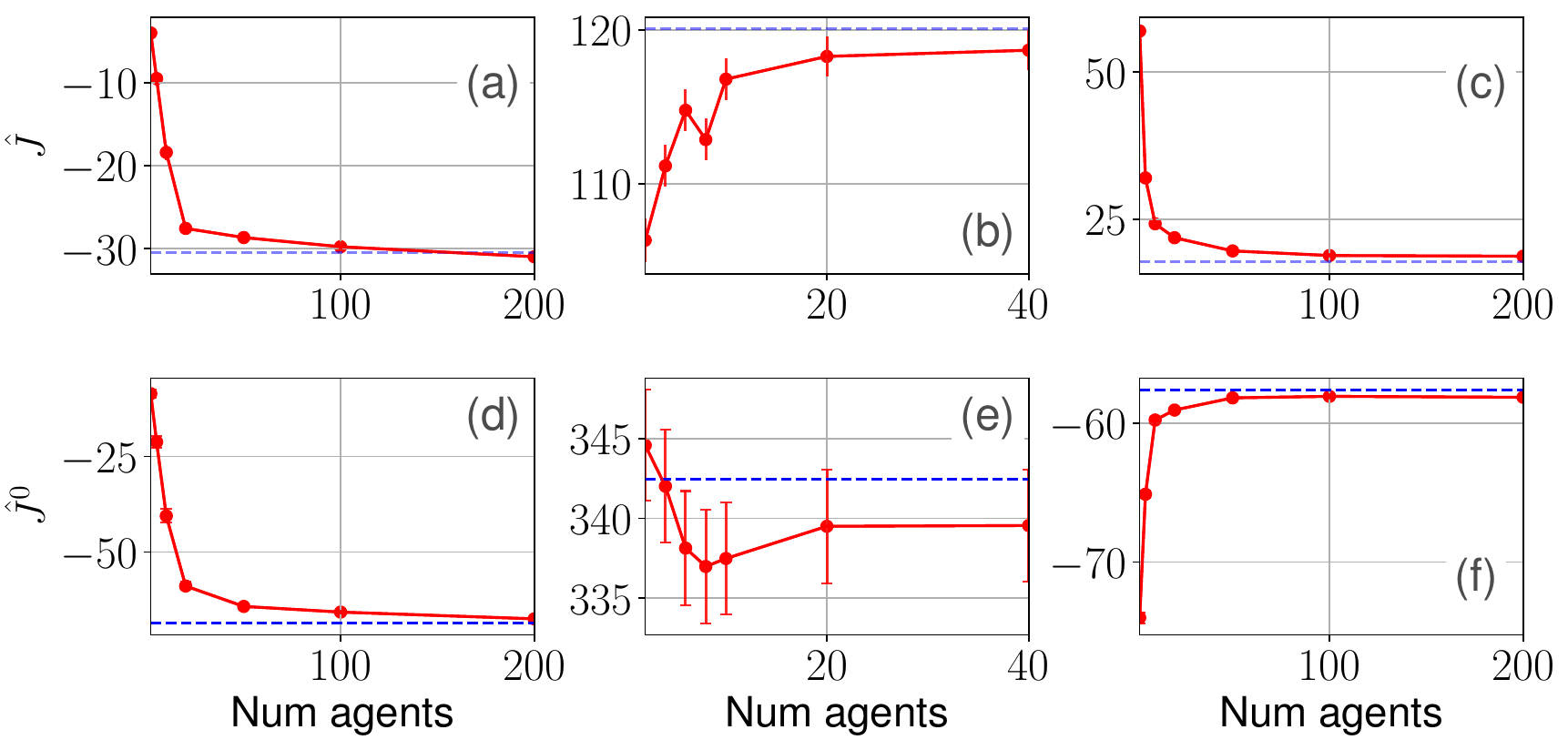}
    \hfill{}
    \caption{The mean $N$-player infinite-horizon objective (red) over $1000$ episodes with $95\%$ confidence interval, compared against approximate objectives $\hat J$, $\hat J^0$ of the maximum entropy policy (left) or FP-learned policy (right) as dashed blue line. (a): SIS, (b): Buffet, (c): Advertisement.}
    \label{fig:inf_J_num_agents_maxent}
\end{figure}

\begin{figure}
    \centering
    \includegraphics[width=0.99\linewidth]{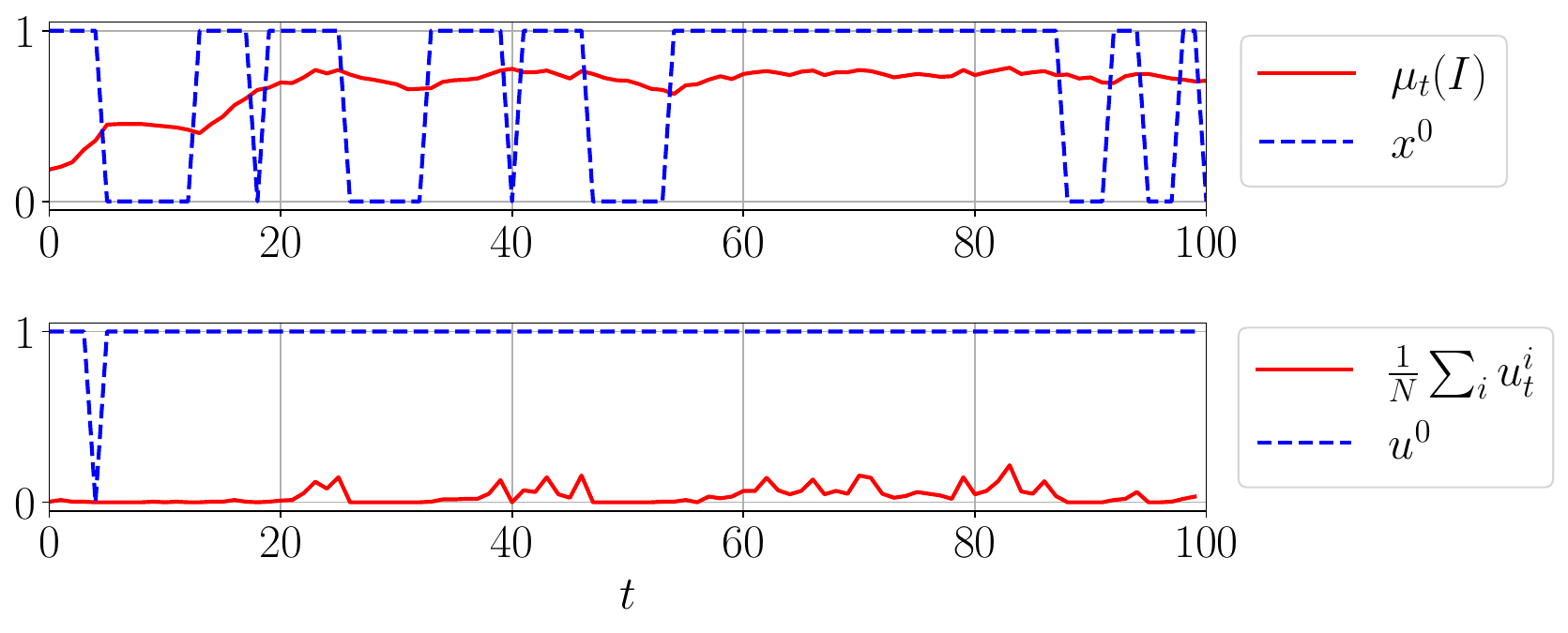}
    \caption{The resulting equilibrium behavior in infinite-horizon SIS is comparable to the finite-horizon case in Figure~\ref{fig:qual}, but without finite horizon effects at the end of the episode.}
    \label{fig:inf_qual}
\end{figure}

\end{document}